\newtheorem{theorem}{Theorem}[section]
\newtheorem{proposition}[theorem]{Proposition}
\newtheorem{lemma}[theorem]{Lemma}
\theoremstyle{remark}
\begin{document}

\title{An Online Projection Estimator for Nonparametric Regression in Reproducing Kernel Hilbert Spaces}

\author{Tianyu Zhang \and Noah Simon}

\maketitle

\begin{abstract}
The goal of nonparametric regression is to recover an underlying regression function from noisy observations, under the assumption that the regression function belongs to a pre-specified infinite dimensional function space. In the online setting, when the observations come in a stream, it is generally computationally infeasible to refit the whole model repeatedly. There are as of yet no methods that are both computationally efficient and statistically rate-optimal. In this paper, we propose an estimator for online nonparametric regression. Notably, our estimator is an empirical risk minimizer (ERM) in a deterministic linear space, which is quite different from existing methods using random features and functional stochastic gradient. Our theoretical analysis shows that this estimator obtains rate-optimal generalization error when the regression function is known to live in a reproducing kernel Hilbert space. We also show, theoretically and empirically, that the computational expense of our estimator is much lower than other rate-optimal estimators proposed for this online setting.
\end{abstract}

\section{Introduction}
 It is often of interest to estimate an underlying regression function, linking features to an outcome, from noisy observations. In the case that the structure of this function is not known (e.g. when we do not want to assume a simple linear form), some form of nonparametric regression is employed. More formally, suppose we observe $(X_i,Y_i)\stackrel{i.i.d.}{\sim}\rho(X,Y)$, $i=1,2,...,n$ generated from the following statistical model:
\begin{equation}\label{reg}
Y_i = f_{\rho}(X_i)+\epsilon_i
\end{equation}
where, for each $i$, $X_i\stackrel{i.i.d.}{\sim}\rho_X$ (which take value in $\mathbb{R}^d$) are our features, $Y_i\in\mathbb{R}$ is our outcome, $\epsilon_i$ are iid mean $0$ noise variables. One can think of $f_{\rho}$ as implicitly defined by the joint distribution $\rho(X,Y)$. It is often of interest to estimate $f_{\rho}$, the regression function (e.g. in predictive modeling, or inferential applications). Under mild conditions, the regression function $f_{\rho}$ can also be characterized as the minimizer of
\begin{equation}\label{regression}
    \min_{f\in\mathcal{F}} \mathbb{E}{(Y-f(X))^2}
\end{equation}
when $\mathcal{F} = L^2_{\rho_X}$, which is the best measurable function for predicting $Y$ given $X$ under least squares loss.

\subsection{Nonparametric Regression in RKHS}
In nonparametric regression we often assume that $f_{\rho}$ belongs to a specified infinite dimensional function space $\mathcal{F}$. This is known as the \emph{Hypothesis Space}. Some commonly used $\mathcal{F}$ in statistics and computer science communities are the Holder ball, Sobolev spaces \citep{wahba1990spline}, general reproducing kernel Hilbert spaces (RKHS) \citep{christmann2008support}, and Besov spaces \citep{hardle2012wavelets}. In this paper, we focus on estimation when $\mathcal{F}$ is a RKHS. Briefly, a RKHS over $\mathcal{X}$ is a Hilbert space $(\mathcal{F},\langle\cdot,\cdot\rangle_{\mathcal{F}})$ with the reproducing property: for any $f\in\mathcal{F}$, $x\in\mathcal{X}$, 
\begin{equation}
\label{eq:reproducing}
    f(x)=\left\langle f, K_{x}\right\rangle_{\mathcal{F}}
\end{equation}
where $K_x$ is the so-called kernel function associated with $\mathcal{F}$ evaluated at $x$. This is discussed in more detail in Section~\ref{section:preliminaries}.

In the classical non-streaming setting of nonparametric regression, estimation in a RKHS $\mathcal{F}$ is a well-studied problem. In this case, the kernel ridge regression (KRR) estimator is the gold standard, e.g. \citep{wainwright2019high}. It is defined by
\begin{equation}
\label{KRR}
\hat f_{n}^{KRR}:=\underset{f \in \mathcal{F}}{\mathrm{argmin}}\frac{1}{n} \sum_{i=1}^{n}\left(Y_{i}-f\left(X_{i}\right)\right)^{2}+\lambda_{n}^{KR R}\|f\|_{\mathcal{F}}^{2}
\end{equation}
where $\lambda_{n}^{KR R}$ is a hyper-parameter that balances the mean-square error and the complexity of estimate. Thanks to the reproducing property \eqref{eq:reproducing}, $\hat f_n^{KRR}$ can be written as a finite linear combination of the kernel function evaluated at $(X_i)_{i=1}^{n}$ \citep{scholkopf2001generalized}.

In general \eqref{KRR} requires solving an $n\times n$ linear system, and thus, will have a computational expense on the order of $n^3$. In the online setting, this is exacerbated by the need to refit for each new observation resulting in $n^4$ computation required to fit a sequence of $n$ estimators. While this penalized estimator has good statistical properties (rate optimal convergence and strong empirical performance), the high computational expense restricts its application in the online setting. Substantial effort has been spent on reducing the computational expense of KRR using, for example, "scalable kernel machines" based on random Fourier feature (RFF) \citep{liu2020random} or Nyström projection \citep{gittens2016revisiting}. This is further discussed in Section~\ref{section:citation}.

\subsection{Parametric and Nonparametric Online Learning}

Online learning has been thoroughly studied in the parametric setting: there we assume $f_{\rho}$ takes a parametric form indexed by a finite dimensional parameter $\beta\in\mathbb{R}^p$ (e.g. $f_{\rho}(X) = \beta^{\top}X$ for a linear model).

In this parametric online setting, it is useful to frame the regression function as a population minimizer
\begin{equation}\label{OP}
\min_{\beta\in\mathbb{R}^p} \mathbb{E}[(Y-f_{\beta}(X))^2]  
\end{equation}
From here, it is popular to directly apply stochastic gradient descent (SGD) to \eqref{OP}, by using each sample in our "stream" to calculate one unbiased estimate of the gradient. Updating such an estimator with a new observation has constant computational expense, $O(p)$. Additionally, these estimators achieve the optimal parametric convergence rate $O(1/n)$ under mild conditions \citep{kushner2003stochastic,bach2013non,frostig2015competing,babichev2018constant}.
% Finally, these estimators can be applied to more general convex problems than minimizing squared error loss
 
However, comparatively less attention has been given to online nonparametric regression. A few rate-optimal functional stochastic gradient descent algorithms have been proposed in the last decade \citep{tarres2014online,dieuleveut2016nonparametric}, where the hypothesis function space $\mathcal{F}$ is assumed to be a RKHS. The RKHS structure makes it possible to take the gradient of the evaluation functional $L_x(f):= f(x)$. Although such estimators have been shown to be statistically rate-optimal, updating them with a new observation $(X_{n+1},Y_{n+1})$ usually involves evaluating $n$ kernel functions at $X_{n+1}$, with computational expense of order $O(n)$. This is in contrast with the constant update cost of $O(p)$ in parametric SGD. Thus, the computational cost of nonparametric SGD will accumulate at order $O(n^2)$, which is not ideal for methods that are nominally designed to deal with large datasets. Although there has been some effort devoted to transfer RFF- or Nystrom- based methods to the online setting (See Section~\ref{section:citation}), the theoretical guarantees are usually not close to optimal with strong restrictions on the noise variables.

\textbf{Our contribution} In this paper, we propose a method for constructing online estimators in a RKHS by considering the Mercer expansion (eigendecomposition) of a kernel function. Existing methods usually takes an iterative form, which can be interpreted as projecting a random function onto a random space with growing dimension \cite[Equation~(15)]{koppel2019parsimonious}. However, our estimator is the first one that can be treated as an empirical risk minimizer (ERM, or M-estimator of negative loss) in a deterministic linear space with growing dimension.

Analysis of both the statistical and the computational properties of the estimator is performed to show that: i) it has asymptotically optimal (up to a logarithm term) generalization error; ii) it has significantly lower computational expense than other proposed rate-optimal nonparametric SGD estimators; iii) it is robust against heavy-tailed noise. Interestingly, it only requires the $(1+\Delta)$ moment of the noise to be finite for any $\Delta>0$ to achieve consistency.

% \textcolor{blue}{ This facilitate the usage of existing results in empirical process to i) prove the proposed method has $O(n^{-\frac{2\alpha}{2\alpha+d}})$ generalization error, where $\alpha>d/2$ is the model smoothness parameter (most existing computational efficient methods can only achieve $O(n^{-\frac{1}{2}})$ rate when $d=1$); ii) extend the theoretical analysis to the case when the noise is heavy tail (non-sub-Gaussian noise). Note that existing work usually assume the loss function is Lipschitz, which is not satisfactory because it essentially implies the noise variable is uniformly bounded, see discussion in Section~\ref{section:scalabelkm}.}

It is worth noting that in the theoretical analysis of our estimator, we do not require the covariate $X$ to be equally-spaced or uniformly distributed as in standard references \citep{introtononpara} (though such assumptions could significantly simplify the proof). We additionally do not require it to be known for rate optimal convergence. We show that our estimator will obtain rate optimal convergence if $\rho_X$ is absolutely continuous with respect to the measure that is used to conduct the eigendecomposition of the kernel function (usually the latter is taken as uniform measure or a Gaussian distribution).

\textit{Notation:} we use $a_n = \Theta(b_n)$ to indicate that the two sequences increase/decrease at the same rate as $n\rightarrow\infty$. Formally,
\begin{equation}
0<\liminf _{n \rightarrow \infty}\left|\frac{a_{n}}{b_{n}}\right| \leq \limsup _{n \rightarrow \infty}\left|\frac{a_{n}}{b_{n}}\right|<\infty
\end{equation}
For $a\in\mathbb{R}$, $\lfloor a \rfloor$ is the largest integer that is smaller than or equal to $a$. The $\|\cdot\|_2$-norm of a function is its $L^2_{\rho_X}$-norm, i.e $\|f\|^2_2 = \int_{\mathcal{X}}f^2(z)d\rho_X(z)$. In this paper, by saying two functions $f,g$ are orthogonal with respect to measure $P$, we mean $\int f(x)g(x)dP(x) = 0$.
% To show this result we employ a nonstandard argument based on metric entropy. We show that our estimator will obtain rate optimal convergence if the distribution of $X$ is absolutely continuous with respect to the measure that is used to conduct eigendecomposition of the kernel function. This is important in the online setting as we will not be able to use the distribution of $X$ in our eigendecomposition.
\section{Preliminaries on RKHS}
\label{section:preliminaries}
In this section, we are going to provide background information on RKHS and existing methods before introducing our estimation procedure.

First we formally introduce the concept of Mercer kernel and its corresponding RKHS. A symmetric bivariate function $K:\mathcal{X}\times\mathcal{X}\rightarrow \mathbb{R}$ is positive semi-definite (PSD) if: for any $n\geq 1$ and $(x_i)_{i=1}^n\subset\mathcal{X}$, the $n \times n$ kernel matrix $\mathbb{K}$ whose elements are $\mathbb{K}_{ij}:= K(x_i,x_j)$ is always a PSD matrix. A continuous, bounded, PSD kernel function $K$ is called a \emph{Mercer kernel}. We have the following duality between a Mercer kernel and a Hilbert space:
\begin{proposition}
\label{prop:definitionRKHS}
For any Mercer Kernel $K:\mathcal{X}\times\mathcal{X}\rightarrow\mathbb{R}$, let $K_x$ denote the function $K_x(\cdot):=K(x,\cdot)$. There exists an unique Hilbert Space $(\mathcal{H},\langle \cdot,\cdot \rangle_{\mathcal{H}})$ of functions on $\mathcal{X}$ satisfying the following conditions. 
\begin{enumerate}
    \item For all $x\in \mathcal{X}$, $K_x\in \mathcal{H}$.
    \item The linear span of $\{K_x\ |\ x\in \mathcal{X}\}$ is dense (w.r.t $\|\cdot\|_{\mathcal{H}}$) in $\mathcal{H}$
    \item (reproducing property) For all $f\in \mathcal{H},x\in\mathcal{X}$, 
    \begin{equation}
        f(x) = \langle f,K_x\rangle_{\mathcal{H}}
    \end{equation}
\end{enumerate}
\end{proposition}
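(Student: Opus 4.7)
This is essentially the Moore--Aronszajn theorem, and my plan is the standard construction via completion of the linear span of kernel sections. I would first set $\mathcal{H}_0 := \mathrm{span}\{K_x : x \in \mathcal{X}\}$, viewed as a subspace of the vector space of functions on $\mathcal{X}$, and define a symmetric bilinear form on $\mathcal{H}_0$ by
\[
\Big\langle \sum_{i} a_i K_{x_i}, \sum_{j} b_j K_{y_j}\Big\rangle_0 := \sum_{i,j} a_i b_j K(x_i, y_j).
\]
A small reshuffling gives $\langle f, \sum_j b_j K_{y_j}\rangle_0 = \sum_j b_j f(y_j)$, showing the form depends only on the underlying functions, not on the chosen representation. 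Positive semi-definiteness of $K$ delivers $\langle f,f\rangle_0 \geq 0$, and the Cauchy--Schwarz inequality (valid for any symmetric positive semi-definite form) combined with the reproducing identity on $\mathcal{H}_0$ gives the pointwise bound $|f(x)|^2 \leq \langle f,f\rangle_0 \, K(x,x)$, which upgrades $\langle \cdot,\cdot\rangle_0$ to an honest inner product.

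The main obstacle is that the abstract Hilbert-space completion of $(\mathcal{H}_0,\langle\cdot,\cdot\rangle_0)$ is a set of equivalence classes of Cauchy sequences, not a priori a set of functions on $\mathcal{X}$. I would use the same pointwise bound to identify the completion with a function space: for any Cauchy sequence $(f_n)\subset \mathcal{H}_0$, the estimate $|f_n(x)-f_m(x)|\leq \|f_n-f_m\|_0\sqrt{K(x,x)}$ ensures that $f(x) := \lim_n f_n(x)$ exists pointwise, and two equivalent Cauchy sequences yield the same pointwise limit. Injectivity of the map from the completion into functions on $\mathcal{X}$ follows because if the pointwise limit is identically zero, then by the reproducing property on $\mathcal{H}_0$ the element is orthogonal to the dense subspace $\mathcal{H}_0$ and hence zero in the completion. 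Call $\mathcal{H}$ the image of this identification, equipped with the transported inner product.

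Conditions (1) and (2) then hold by construction. For (3), the reproducing property extends from $\mathcal{H}_0$ to $\mathcal{H}$ by continuity of point evaluation: if $f_n\in \mathcal{H}_0$ converges to $f\in \mathcal{H}$, then
\[
\langle f, K_x\rangle_{\mathcal{H}} = \lim_n \langle f_n, K_x\rangle_0 = \lim_n f_n(x) = f(x).
\]
Uniqueness is straightforward once existence is in hand. If $\widetilde{\mathcal{H}}$ is any Hilbert space of functions on $\mathcal{X}$ satisfying (1)--(3), the reproducing property forces $\langle K_x, K_y\rangle_{\widetilde{\mathcal{H}}} = K_y(x) = K(x,y)$, so its inner product agrees with $\langle\cdot,\cdot\rangle_0$ on the span of kernel sections; density (2) extends this agreement isometrically to the whole completion $\mathcal{H}$, and the reproducing property guarantees that corresponding abstract elements realize the same function on $\mathcal{X}$, so $\widetilde{\mathcal{H}}$ coincides with $\mathcal{H}$ as a Hilbert space of functions.
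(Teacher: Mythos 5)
Your argument is correct: it is the standard Moore--Aronszajn construction (pre-Hilbert space on the span of kernel sections, positive-definiteness via the pointwise bound $|f(x)|^2\leq \langle f,f\rangle_0 K(x,x)$, realization of the completion as a function space, and uniqueness via agreement of inner products on the dense span). The paper does not prove this proposition itself but defers to the cited references (Cucker--Smale, Wainwright, Fasshauer--McCourt), and your proof is essentially the one given there, so there is nothing further to compare.
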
 
We call this Hilbert space the \emph{Reproducing kernel Hilbert space (RKHS)} associated with kernel $K$, or the \emph{native space} of $K$. For a more comprehensive discussion of RKHS, see \citet{cucker2002mathematical, wainwright2019high, fasshauer2015kernel}.

There is an equivalent definition of RKHS that we will engage with in this manuscript. Given any Mercer kernel $K$ and any Borel measure $\nu$, there exists a set of $L^2_{\nu}$-orthonormal basis $(\phi_j)_{j=1}^{\infty}$ of $\bar{\mathcal{H}}$ (closure of $\mathcal{H}$ with respect to $\|\cdot\|_{L^2_{\nu}}$). Additionally, each of the functions has a paired positive real number $\mu_j$, sorted s.t. $\mu_j \geq \mu_{j+1}>0$. We call the functions $\phi_j$'s eigenfunctions and $\mu_j$'s their corresponding eigenvalues. We state the following equivalent definition of the native space of $K$.
\begin{proposition}
\label{prop:ellipsoidRKHS}
Define a Hilbert space
\begin{equation}
\mathcal{H} = \left\{f \in L^2_{\nu}\mid f=\sum_{k=1}^{\infty} \theta_{j} \phi_{j}\ \text {with} \sum_{j=1}^{\infty} \left(\frac{\theta_j}{\sqrt{\mu_j}}\right)^2< \infty\right\}    
\end{equation}
equipped with inner product:
\begin{equation}
\langle f, g\rangle_{\mathcal{H}}=\sum_{j=1}^{\infty} \frac{a_{j} b_{j}}{\mu_{j}}
\end{equation}
for \(f=\sum_{j=1}^{\infty} a_{j} \phi_{j}\) and \(g=\sum_{j=1}^{\infty} b_{j} \phi_{j}\).

Then $(\mathcal{H},\langle \cdot, \cdot\rangle_{\mathcal{H}})$ is the reproducing Hilbert space of kernel $K$.
\end{proposition}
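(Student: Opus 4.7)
The plan is to identify the candidate Hilbert space $\mathcal{H}$ with the standard Mercer/spectral construction and then verify the three characterizing properties of Proposition~\ref{prop:definitionRKHS}, invoking its uniqueness clause to conclude. First I would set up the integral operator $T_K: L^2_\nu \to L^2_\nu$ defined by $(T_K f)(x) = \int K(x,y) f(y) \, d\nu(y)$. Since $K$ is continuous, bounded, symmetric and PSD, $T_K$ is compact, self-adjoint and positive on $L^2_\nu$, so its eigenpairs $(\mu_j,\phi_j)$ are precisely those in the statement. Mercer's theorem then gives the pointwise (in fact absolutely and uniformly) convergent expansion
\begin{equation}
K(x,y) = \sum_{j=1}^{\infty} \mu_j \phi_j(x) \phi_j(y),
\end{equation}
together with $\sum_j \mu_j \phi_j(x)^2 = K(x,x) < \infty$ for every $x \in \mathcal{X}$.

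Next I would check that $(\mathcal{H}, \langle\cdot,\cdot\rangle_{\mathcal{H}})$ is itself a Hilbert space. The coefficient map $f = \sum_j a_j \phi_j \mapsto (a_j/\sqrt{\mu_j})_j$ is an isometry from $\mathcal{H}$ onto $\ell^2(\mathbb{N})$, so completeness is immediate. I would then verify property (1) of Proposition~\ref{prop:definitionRKHS}: by Mercer the coefficients of $K_x$ in the basis $(\phi_j)$ are $\theta_j = \mu_j \phi_j(x)$, hence
\begin{equation}
\|K_x\|_{\mathcal{H}}^2 = \sum_{j=1}^{\infty} \frac{(\mu_j \phi_j(x))^2}{\mu_j} = \sum_{j=1}^{\infty} \mu_j \phi_j(x)^2 = K(x,x) < \infty,
\end{equation}
so $K_x \in \mathcal{H}$.

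For the reproducing property (3), take any $f = \sum_j a_j \phi_j \in \mathcal{H}$. Using the coefficients of $K_x$ computed above,
\begin{equation}
\langle f, K_x \rangle_{\mathcal{H}} = \sum_{j=1}^{\infty} \frac{a_j \cdot \mu_j \phi_j(x)}{\mu_j} = \sum_{j=1}^{\infty} a_j \phi_j(x) = f(x),
\end{equation}
where the last series converges pointwise by Cauchy--Schwarz applied to $\sum_j (a_j/\sqrt{\mu_j})(\sqrt{\mu_j}\phi_j(x))$ together with $\sum_j \mu_j \phi_j(x)^2 = K(x,x)$. For density (property (2)), suppose $g \in \mathcal{H}$ satisfies $\langle g, K_x\rangle_{\mathcal{H}} = 0$ for all $x$. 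The reproducing property just proved gives $g(x) = 0$ pointwise, hence $g \equiv 0$; thus the orthogonal complement of $\mathrm{span}\{K_x : x \in \mathcal{X}\}$ is trivial, proving density. Finally, the uniqueness clause of Proposition~\ref{prop:definitionRKHS} identifies $\mathcal{H}$ with the native RKHS of $K$.

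The main obstacle is the careful invocation of Mercer's theorem: one needs the uniform convergence of $\sum_j \mu_j \phi_j(x)\phi_j(y)$ (or at least the pointwise identity $K(x,x) = \sum_j \mu_j \phi_j(x)^2$ at every $x$) to legitimize both the membership $K_x \in \mathcal{H}$ and the computation of $\langle f, K_x \rangle_{\mathcal{H}}$. This requires regularity assumptions on $\nu$ (e.g.\ that $\mathcal{X}$ is compact or that $\nu$ is a finite Borel measure with full support) which are implicit in the continuous-bounded PSD hypothesis on $K$; the remaining verifications (completeness, reproducing identity, density via the $K_x$-orthogonality argument) are then routine bookkeeping with the weighted-$\ell^2$ structure.
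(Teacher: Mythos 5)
The paper does not prove this proposition; it simply cites \citet{cucker2002mathematical}, and your argument is precisely the standard spectral/Mercer construction given there (and in \citet{wainwright2019high}, Chap.~12): verify $\|K_x\|_{\mathcal H}^2=K(x,x)$, the reproducing identity via the coefficients $\mu_j\phi_j(x)$ of $K_x$, density by the pointwise-vanishing argument, and conclude by the uniqueness clause of Proposition~\ref{prop:definitionRKHS}. Your proof is correct, and you rightly flag the only real care point — that Mercer's theorem (compact $\mathcal X$, finite Borel $\nu$ with full support, continuous eigenfunction representatives) is what legitimizes both $K_x\in\mathcal H$ and the pointwise evaluation of the series.
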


For discussion of this definition and its relation to Proposition~\ref{prop:definitionRKHS}, see \cite{cucker2002mathematical}. For many kernels, the analytical form of the $(\mu_j,\phi_j)$'s are available for some specific choice of measure $\nu$. This can be quite useful for our method: We require the eigen-system of the kernel with respect to some (relatively arbitrary) measure. This measure does \emph{not} need to be the measure $\rho_{X}$, it merely needs to be absolutely continuous with respect to $\rho_{X}$. In this manuscript we will assume such a convenient measure, denoted, $\bar{\rho}_X$ exists (for which the kernel has an accessible eigen-system and $\bar{\rho}_X \ll \rho_X$). We will call it a \emph{working measure}, and use the notation $(\lambda_j,\psi_j)$ instead of the generic $(\mu_j,\phi_j)$ to denote such an eigen-system with respect to $L^2_{\bar{\rho}_X}$. As an example, the kernel $K(x,z) = \min\{x,z\}$ is the reproducing kernel of Sobolev space
\begin{equation}
\label{eq:W10}
    W_1^0([0,1]) = \left\{f:[0,1]\rightarrow\mathbb{R}\ |\ f(0) = 0\text{ and }\int_0^1 \left(f'(x)\right)^2dx < \infty\right\}
\end{equation}
and its eigenfunctions and eigenvalues are (w.r.t. $\bar{\rho}_X$ = $\text{Unif}([0,1])$):
\begin{equation}
    \psi_j(x) = \sqrt{2}\sin \left(\frac{(2 j-1) \pi x}{2}\right)\quad \lambda_j = \frac{4}{(2 j-1)^{2} \pi^{2}}
\end{equation}

It is also possible to write the kernel as a \emph{Mercer expansion} w.r.t $(\psi_j,\lambda_j)$:
\begin{equation}
\label{kernelexpension}
    K(x,z) = \sum_{j=1}^{\infty} \lambda_j\psi_j(x)\psi_j(z)
\end{equation}
The functions $\{\sqrt{\lambda_j}\psi_j(x), j=1,2,...\}$ are also called the feature maps of the kernel $K$. Also note that by definition $\psi_j$'s are orthogonal w.r.t. $\langle\cdot,\cdot\rangle_{\mathcal{H}}$. There is a collection of $20$ commonly used kernels' Mercer expansion in \cite[Appendix~A]{fasshauer2015kernel}.
% Our readers can also find expansions of various kernels in \cite{wainwright2019high,wahba1990spline, fasshauer2012green,williams2000effect,shi2009data, liang2014eigen, fornberg2008stable}.

If a function $f= \sum_{j=1}^{\infty} \theta_j\psi_j$ has a finite $\|\cdot\|_{\mathcal{H}}$ RKHS-norm. Its general Fourier coefficients $(\theta_j)_{j\in\mathbb{N}}$ need to be at least $o(\lambda_j j^{-1/2})$ so that the norm series $\sum_{j=1}^{\infty} (\theta_j/\sqrt{\lambda_j})^2$ converges. This suggests, for sufficiently large $N$, the truncation $f_N = \sum_{j=1}^{N} \theta_j\psi_j$ should be a good approximation to $f$. This basic idea motivates our work --- by analyzing the spectrum of the kernel we can identify what $N$ should be.

\subsection{Existing Online Nonparametric Methods}
\label{section:citation}
% \begin{table}[t!] %***
% \caption{Existing Nonparametric Online Methods}
% \label{settingtable}\par
% \vskip .2cm
% \centerline{\tabcolsep=3truept\begin{tabular}{|cccc|} \hline %***5truept
% Method & $m(\cdot)$ & Model dim. & Reference\\
% \hline
% Functional SGD & $K_{X_i}(\cdot)$ & $n$ & \\
% Nystrom Projection & $K_{X_i}(\cdot)$ & $o(n)$ & \\
% Random Feature & random function & $O(\sqrt{n})$ & \\
% \hline
% This paper & $\psi_j(\cdot)$ & $O()$
% \hline
% \end{tabular}}
% \end{table}
In a RKHS, it is possible to take the functional gradient of the evaluation operator $L_x$ for any $x\in\mathcal{X}$. This allows methods using functional SGD to solve the regression problem \eqref{regression}. Usually \emph{functional SGD} estimators after $n$ steps, $\hat f_n^{SGD}$ of $f_{\rho}$ take the form of a weighted sum of $n$ kernel functions $K_{X_i},i=1,2,...,n$, \citep{tarres2014online,dieuleveut2016nonparametric}:
\begin{equation}
\label{SGD}
\hat f_n^{SGD} = \sum_{i=1}^{n}a_i K_{X_i}   
\end{equation}
To update $\hat f_{n}^{SGD}$ with $(X_{n+1},Y_{n+1})$, it is necessary to evaluate all $n$ kernel basis functions $\{K_{X_i},i=1,2,...n\}$ at $X_{n+1}$. Thus, the computational expense of the update is $O(n)$. There has been some work to improve this computational expense: in \cite{si2018nonlinear, lu2016large, koppel2019parsimonious}, the authors choose a subset of features $(K_{X_i})_{i=1}^n$ whose cardinality is smaller than $n$; in \cite{dai2014scalable, lu2016large}, kernel-agnostic random Fourier features are used: Typically $O(\sqrt{n})$ basis functions are required in this setting, cf. \cite{rudi2017generalization}. Although computationally more efficient than vanilla functional SGD \eqref{SGD}, the theoretical aspects of these scalable methods are not fully satisfying: 1) noise variables are required to have extreme light tails to provably guarantee convergence; 2) Verified convergence rates are not minimax-optimal; and 3) The target parameter is generally not even  $f_{\rho}$ but, instead, a penalized population risk-minimizer. 

%be bounded to achieve a sub-optimal convergence rate, and usually the function to estimate is not $f_{\rho}$ but a penalized population risk minimizer.  Also see, \cite{kivinen2001online, ying2006online, rudi2017generalization,alaoui2015fast, xiong2019online} for other related works and discussion.

Compared with the linear space spanned by random features or kernel functions (engaged with in previous work), the space spanned by eigenfunctions has minimal approximation error in the sense of minimizing Kolmogorov N-width \cite[Section~3]{santin2016approximation}. This inspired us to use them as basis functions to construct our estimator. Briefly, this means projecting onto the N-dimension linear space spanned by the eigenfunctions has the minimal residual, among all the N-dimension linear sub-spaces of $L^2_{\bar{\rho}_X}$. More technically: 
\begin{equation}
 \sup _{\|f\|_{\mathcal{H}} = 1}\left\|f-\Pi_{L^2_{\bar{\rho}_X},\ \mathcal{F}_{N}} f\right\|_{L^2_{\bar{\rho}_X}} = \inf _{V_{N} \subset L^2_{\bar{\rho}_X}} \sup _{\|f\|_{\mathcal{H}} = 1}\left\|f-\Pi_{L^2_{\bar{\rho}_X},\ V_{N}} f\right\|_{L^2_{\bar{\rho}_X}}
 =\sqrt{\lambda_{N+1}}
\end{equation}
where $\mathcal{F}_N$ is the linear space spanned by the first $N$ eigenfunctions $(\psi_j)_{j=1}^N$, $\Pi_{A,B}$ is the projection operator onto space $B$ using the inner product of $A$ and $V_N$ is a generic $N$-dimension linear space in $L^2_{\bar{\rho}_X}$. This is important for statistical estimation as there is a bias/variance tradeoff at play in this estimation problem (more basis functions decreases bias but increases variance). By using a basis that can more compactly represent our function, we can find a more favorable tradeoff and asymptotically decrease our estimation error.

Our research aims to propose a method with favorable statistical guarantees (minimax rate-optimality) and a lower computational expense. The basis functions used should be kernel-sensitive and the convergence rate should be sensitive to the decay rate of eigenvalues $\lambda_j$. Also, we give provable theoretical guarantees in a heavy tail noise setting.

\section{A Computationally Efficient Online Estimator}
In this section we will present the proposed online regression estimator. We first discuss the well-known projection estimator in the batch learning setting, then shift to the online setting where we naively refit the model with each observation; and finally give our proposed modification to make this process computationally efficient. In what follows we will use $N$ to denote the number of basis functions used to construct each projection estimator, though it should more formally be written as $N(n)$, as it is a non-decreasing function of $n$.
\subsection{Projection Estimator in Batch Learning}
Suppose we have $n$ samples $(X_i,Y_i)_{i=1}^n$, and let $\mathcal{F}_N = \text{span}(\psi_1,...,\psi_{N})$ be the $N$-dimension linear space spanned by the $N$ eigenfunctions with largest eigenvalues. The function $\hat f_{n,N}$ that minimizes empirical mean square error over $\mathcal{F}_N$ is a very attractive candidate for estimating $f_{\rho}\in\mathcal{H}$, that we aim to leverage for the online setting. 

Formally, define $\pmb{\theta} = (\theta_1,...,\theta_{N})^{\top}$ and $\pmb{\psi}^N(X_i) = (\psi_1(X_i),...,\psi_{N}(X_i))^{\top}$. Consider the least squares problem (in Euclidean space):
\begin{equation}\label{NPLS}
\min_{\pmb{\theta}\in\mathbb{R}^{N}} \sum_{i=1}^n (Y_i - \pmb{\theta}^{\top} \pmb{\psi}^N(X_i))^2
\end{equation}
The solution can be written in matrix form as
\begin{equation}
 \pmb{\hat\theta} := (\hat \theta_1,...,\hat \theta_{N})^{\top} = (\Psi^{\top}_n \Psi_n)^{-1}\Psi^{\top}_n \pmb{Y}_n   
\end{equation}
if $\Psi_n^{\top}\Psi_n$ is invertible. Here $\pmb{Y}_n = (Y_1,...,Y_n)^{\top}$ is the observed response, and $\Psi_n$ is the design matrix whose elements are $\Psi_{ij} = \psi_j(x_i)$. Then the estimator
\begin{equation}\label{PE}
    \hat f_{n,N} = \sum_{j=1}^N \hat\theta_j \psi_j
\end{equation}
is the empirical risk minimizer (ERM) in $\mathcal{F}_N$. Estimators that take this form are called nonparametric \emph{projection estimators} (of $f_{\rho}$, with level $N$) \citep{introtononpara}.

The optimal number of basis functions to use depends on both the sample size $n$ and how fast the eigenvalues $\lambda_j$ in \eqref{kernelexpension} decay. As we will state formally in Theorem \ref{maintheorem}, the optimal choice is $N = \Theta(n^{\frac{d}{2\alpha+d}})$ when $\lambda_j = \Theta(j^{-2\alpha/d})$, with $\alpha>\frac{d}{2}$. Note that the condition $\alpha>\frac{d}{2}$ ensures the considered RKHS can be embedded into the space of continuous functions (as a result of Sobolev inequality, cf. Theorem~12.55 \citep{leoni2017first}). With this choice for $N$, convergence of $\hat f_{n,N}$ achieves the minimax rate over functions with bounded RKHS norm. Similar results for projection estimators have been shown when $(\psi_j)_{j=1}^{\infty}$ is the trigonometric basis, and $x_i$ are deterministic, and evenly spaced \citep{introtononpara} or $\rho_X$ is the uniform distribution \citep{belloni2014pivotal}. Our analysis shows that the optimality of the projection estimator actually holds for general $\psi_j$ and does not require them to be orthonormal with respect to the empirical measure or $\rho_X$.

% In Figure~\ref{basisnumber}, we illustrate how the optimal number of basis functions $N$ increases with respect to $n$, for different values of $\alpha$ ($d=1$). The larger $\alpha$ is, the fewer basis function we use, and the corresponding RKHS is smaller. Even for $\alpha=1$, only $\sim$100 basis functions are used when $n=10^{6}$. This is dramatically fewer functions than stochastic gradient-based learning methods, e.g. in \cite{dieuleveut2016nonparametric}, that would necessitate $10^6$ functions to get a statistically optimal estimator.

% \begin{figure}[ht]
% \begin{center}
% \centerline{\includegraphics[width=0.9\columnwidth]{Figure1.pdf}}
% \caption{Illustration of basis function number against sample size ($d=1$). For each $\alpha$, $N = \lfloor n^{\frac{1}{2\alpha+1}}\rfloor$.}
% \label{basisnumber}
% \end{center}
% \vskip -0.2in
% \end{figure}

\subsection{Naive Online Projection Estimator}

The most direct way of extending the projection estimator \eqref{PE} to the online setting is simply refitting the whole model whenever a new pair of data $(X_i,Y_i)$ comes in. In Algorithm~\ref{alg:naive}, we provide this naive updating rule for our reader to better understand the proposed method. Our modified proposal in Section~\ref{efficientsection} greatly improves upon this in terms of computational cost but giving the same estimates $\hat f_{n,N}$.

\begin{algorithm}[!tb]

   \caption{Naive rule for updating $\hat{\pmb{\theta}}$ with a new observation $(X_n,Y_n)$.}
   \label{alg:naive}
\begin{algorithmic}
   \STATE {\bfseries Input:} $(X_i)_{i=1}^n,\pmb{Y}_n,\Phi_{n-1},\Psi_{n-1}, \alpha, N$
\STATE \textbf{function} UpdateCurrent($X_n, N,\Phi_n,\Psi_n$)
\STATE $\pmb{\psi}_n \gets [\psi_1(X_n),\psi_2(X_n),...,\psi_N(X_n)]^{\top}$
\STATE $\Psi_n\gets \begin{bmatrix} \Psi_{n} \\\pmb{\psi}_{n}^{\top}\end{bmatrix}$ $\quad\Phi_{n} \gets \left(\Psi_n^{\top}\Psi_n\right)^{-1}$
\STATE \textbf{return} $(\Phi_{n},\Psi_n)$
   \STATE \textbf{function} AddBasis ($(X_i)_{i=1}^n, N,\Phi_{n},\Psi_{n}$)
\STATE $\pmb{\psi}^{N+1} \gets [\psi_{N+1}(X_1),...,\psi_{N+1}(X_n)]^{\top}$
\STATE $\Psi_n\gets \begin{bmatrix} \Psi_{n} & \pmb{\psi}^{N+1}\end{bmatrix}$ $\quad\Phi_{n} \gets \left(\Psi_n^{\top}\Psi_n\right)^{-1}$
\STATE \textbf{return} $(\Phi_{n},\Psi_n)$
\IF{$n= \text{Floor}( (N+1)^{2\alpha+1})$}
\STATE $(\Phi_{n},\Psi_n) \gets \text{UpdateCurrent}(X_n, N, \Phi_{n-1},\Psi_{n-1})$
\STATE $(\Phi_{n},\Psi_n) \gets\ \text{AddBasis}((X_i)_{i=1}^n,N,\Phi_{n},\Psi_{n})$
\STATE $N\ \gets\ N+1$
\ELSE
\STATE $(\Phi_{n},\Psi_n) \gets \text{UpdateCurrent}(X_n, N, \Phi_{n-1}, \Psi_{n-1})$
\ENDIF
\STATE$\hat{\pmb{\theta}}\gets \Phi_n \Psi_n^{\top} \pmb{Y}_n$
\end{algorithmic}
\end{algorithm}

In this algorithm, $\pmb{Y}_n = (Y_1,...,Y_n)^{\top}$ is the vector of outcomes. $\Psi_{n}$ is the $n\times N$ design matrix at step $n$, and $\Phi_n$ denotes the $N\times N$ matrix $(\Psi_n^{\top}\Psi_n)^{-1}$ (inversion of Gram matrix). 

Whenever new data comes in, the algorithm augments the design matrix by adding one new row to $\Psi_{n-1}$ based on the new observation $X_n$. The new row $[\psi_1(X_n),\psi_2(X_n),...,\psi_N(X_n)]$ can be understood as the embedding of $X_n$ into the feature space spanned by $(\psi_j)_{j=1}^N$.

When $n = \lfloor (N+1)^{\frac{2\alpha+d}{d}}\rfloor$, this algorithm additionally adds a new column to the design matrix $\Psi_n$ (increasing the dimension of the basis function we project upon by $1$). This new column is just the evaluation of $\psi_{N+1}$ at $(X_i)_{i=1}^n$. Recall that $\psi_{N+1}$ is the $(N+1)$-th eigenfunction in Mercer expansion \eqref{kernelexpension}. It is straightforward to show that this criterion of adding new basis functions ensures $N = \Theta(n^{\frac{d}{2\alpha+d}})$. 

The computational expense of each update using Algorithm~\ref{alg:naive} is $\sim n^{\frac{2\alpha+3d}{2\alpha+d}}$. In particular, calculating $\Psi_n^{\top} \Psi_n $ takes $\sim nN^2 \sim n^{\frac{2\alpha+3d}{2\alpha+d}}$ computation. While this algorithm would give a statistically rate-optimal estimator, and is straightforward to implement, it is rather computationally expensive. In particular, the functional SGD algorithm has a comparatively smaller computational cost of $\sim n$ per update.

\subsection{Efficient Online Projection Estimator}
In this section we explicitly give our proposed method (the details of which are given in Algorithm~\ref{alg:efficient}). By using some common block/rank-one updating tools from linear algebra, we are able to substantially improve Algorithm~\ref{alg:naive}. In particular, it is expensive to repeatedly calculate $(\Psi_n^{\top}\Psi_n)^{-1}$ directly. However, matrix $\Psi_n$ has only one more row and (sometimes) one more column than $\Psi_{n-1}$. It is possible to calculate $(\Psi_n^{\top}\Psi_n)^{-1}$ by updating $(\Psi_{n-1}^{\top}\Psi_{n-1})^{-1}$. The latter will already have been calculated when observing $(X_{n-1},Y_{n-1})$. 

When $\Psi_n$ has one more row than $\Psi_{n-1}$:
\begin{equation}
    \Psi_n = \begin{bmatrix} \Psi_{n-1} \\ \pmb{\psi}_{n}^{\top}
    \end{bmatrix}
\end{equation}
where $\pmb{\psi}_n = \left[\psi_{1}\left(X_{n}\right), \psi_{2}\left(X_{n}\right), \ldots, \psi_{N}\left(X_{n}\right)\right]^{\top}$. We can write $\Psi_n^{\top}\Psi_n$ in the form:
\begin{equation}
    \Psi_n^{\top}\Psi_n =  \Psi_{n-1}^{\top}\Psi_{n-1} + \pmb{\psi}_{n}\pmb{\psi}_{n}^{\top}
\end{equation}
So $\left(\Psi^{\top}_n\Psi_n\right)^{-1}$ can be calculated from $\left(\Psi^{\top}_{n-1}\Psi_{n-1}\right)^{-1}$ and $\pmb{\psi}_n$ by the Sherman-Morrison formula \citep{sherman1950adjustment}.

When $\Psi_n$ has one more column than $\Psi_{n-1}$:
\begin{equation}
    \Psi_n = \begin{bmatrix} \Psi_{n-1} & \pmb{\psi}^{N+1}
    \end{bmatrix}
\end{equation}
We can write $\Psi_n^{\top}\Psi_n$ in the form:
\begin{equation}
    \Psi_n^{\top}\Psi_n =  \begin{bmatrix} 
    \Psi_{n-1}^{\top}\Psi_{n-1} & \Psi_{n-1}^{\top}\pmb{\psi}^{N+1}\\
    \left(\pmb{\psi}^{N+1}\right)^{\top}\Psi_{n-1} & \left(\pmb{\psi}^{N+1}\right)^{\top}\pmb{\psi}^{N+1}
    \end{bmatrix}
\end{equation}
So $\left(\Psi^{\top}_n\Psi_n\right)^{-1}$ can be related to $\left(\Psi^{\top}_{n-1}\Psi_{n-1}\right)^{-1}$ by the block matrix inversion formula \citep{petersen2008matrix}.

The detailed updating rule of the proposed method is given explicitly in Algorithm~\ref{alg:efficient}. The basic structure of this algorithm is identical to Algorithm~\ref{alg:naive}, however the updating rules discussed above are used to avoid recalculating some quantities from scratch. We also establish a recursive relationship between $\hat{\pmb{\theta}}_{n+1}$ and $\hat{\pmb{\theta}}_{n}$. Curiously, the recursive formula has a form very similar to pre-conditioned SGD estimator (with the inverse of Gram matrix as the pre-conditioner). When $n\neq\lfloor (N+1)^{\frac{2\alpha+d}{d}}\rfloor$, the recursion is:
\begin{equation}
\label{eq:olsandsgd}
\hat{\pmb{\theta}}_{n}=\hat{\pmb{\theta}}_{n-1}+\Phi_{n} \pmb{\psi}_{n}\left[Y_{n}-\hat f_{n-1,N}(X_{n})\right]
\end{equation}
Note that for SGD the updating rule replaces $\Phi_n$ by $I$ the identity matrix, i.e. omitting the correlation of $\psi_j$'s w.r.t. the empirical measure. When features are added, there is still a geometrical interpretation, we present the result in Appendix~S3.
\begin{algorithm}[!tb]
\caption{Rule for updating $\hat{\pmb{\theta}}$ with a new observation $(X_n,Y_n)$ efficiently. At step $*$, the value of $\Psi^{\top}_{n-1}\pmb{Y}_{n-1}$ stored in memory needs to be used to avoid repeating calculation.}
\label{alg:efficient}
\begin{algorithmic} 
\STATE {\bfseries Input:} $(X_i)_{i=1}^n,\pmb{Y}_{n},N,\Phi_{n-1},\Psi_{n-1}, a,\Psi^{\top}_{n-1}\pmb{Y}_{n-1}$
\STATE \textbf{function} \text{Updatecurrent} ($X_n, N, \Phi_{n-1},\Psi_{n-1}$) \textbf{output} $(\Phi_{n},\Psi_n)$
\STATE $\quad\pmb{\psi}_n \gets [\psi_1(X_n),\psi_2(X_n),...,\psi_N(X_n)]^{\top}$
\STATE $\quad\Psi_n\gets  [\Psi_{n-1}^{\top}\  \pmb{\psi}_{n}]^{\top}$  , $\Phi_{n} \gets \Phi_{n-1} -\frac{\Phi_{n-1}\ \pmb{\psi_n} \ \pmb{\psi_n}^{T}\  \Phi_{n-1}}{1\ +\ \pmb{\psi_n}^{T}\  \Phi_{n-1}\ \pmb{\psi_n}}$

\STATE \textbf{function} AddBasis ($(X_i)_{i=1}^n, N,\Phi_{n},\Psi_{n}$) \textbf{output} $(\Phi_{n},\Psi_n)$
\STATE $\quad\pmb{\psi}^{N+1} \gets [\psi_{N+1}(X_1),\psi_{N+1}(X_2),...,\psi_{N+1}(X_n)]^{\top}$
\STATE $\quad c\gets\left(\pmb{\psi}^{N+1}\right)^{\top}\pmb{\psi}^{N+1}$ $\quad\pmb{b}\gets \Psi_{n}^{\top} \pmb{\psi}^{N+1}$ $\quad k \gets c - \pmb{b}^{\top} \Phi_{n}\ \pmb{b}$
\STATE $\quad\Psi_n\gets \begin{bmatrix} \Psi_{n} & \pmb{\psi}^{N+1}\end{bmatrix}$ , $\Phi_{n} \gets 
\left[\begin{array}{cc}{\Phi_{n}+\frac{1}{k} \Phi_{n}\ \pmb{b} \pmb{b}^{T}\ \Phi_{n}} & {-\frac{1}{k} \Phi_{n}\ \pmb{b}} \\ {-\frac{1}{k} \pmb{b}^{T} \ \Phi_{n}} & {\frac{1}{k}}\end{array}\right]
$
\vspace{2mm}
\STATE $(\Phi_{n},\Psi_n) \gets \text{UpdateCurrent}(X_n, N, \Phi_{n-1},\Psi_{n-1})$
\IF{$n= \text{Floor}( (N+1)^{2a+1})$}
\STATE $(\Phi_{n},\Psi_n) \gets\ \text{AddBasis}((X_i)_{i=1}^n,N,\Phi_{n},\Psi_{n})$
\STATE $N\ \gets\ N+1$
\ENDIF
\STATE $\hat{\pmb{\theta}}\gets \Phi_n \Psi_n^{\top} \pmb{Y}_n\quad *$
% \IF{$n= \text{Floor}( (N+1)^{2a+1})$}
% \STATE $(\Phi_{n},\Psi_n) \gets \text{UpdateCurrent}(X_n, N, \Phi_{n-1},\Psi_{n-1})$
% \STATE $(\Phi_{n},\Psi_n) \gets\ \text{AddBasis}(\pmb{X}_{n},N,\Phi_{n},\Psi_{n})$
% \STATE $N\ \gets\ N+1$
% \ELSE
% \STATE $(\Phi_{n},\Psi_n) \gets \text{UpdateCurrent}(X_n, N, \Phi_{n-1}, \Psi_{n-1})$
% \ENDIF
% \STATE $\hat{\pmb{\theta}}\gets \Phi_n \Psi_n^{\top} \pmb{Y}_n\quad *$
\end{algorithmic} 
\end{algorithm}
\subsection{Computational Expense of Algorithm \ref{alg:efficient}}
\label{efficientsection}
We now show that the computational expense of the updating rule in Algorithm~\ref{alg:efficient} is on average $O(n^{\frac{2d}{2\alpha+d}})$.

When $n\neq \lfloor (N+1)^{\frac{2\alpha+d}{d}}\rfloor$, we do not add a new feature $\psi_{N+1}$ but only update the $\Phi_{n-1}$ matrix with the current $N$ features. The most expensive step is the inner product of $\Phi_{n-1}$ and $\pmb{\psi_n}$, which is an $N\times N$ matrix multiplied by a $N\times 1$ vector. Since the $N = \Theta( n^{\frac{d}{2\alpha+d}})$ at step $n$, the update is of order $n^{\frac{2d}{2\alpha+d}}$.

When $n =  \lfloor (N+1)^{\frac{2\alpha+d}{d}}\rfloor$, we add both a column and a row to the design matrix $\Psi_{n-1}$. The most expensive step is calculating the vector $\pmb{b}$, which gives the pair-wise inner product between $\psi_{N+1}$ and $\left(\psi_j\right)_{j=1}^N$ with respect to empirical measure. In this step an $N\times (n-1)$ matrix is multiplied by an $(n-1)\times 1$ vector, which requires computation of order $n^{\frac{2\alpha+2d}{2\alpha+d}}$. However the algorithm adds new features less frequently as $n$ increases. Thus, in calculating average computational cost, we amortize this expense over all the updates between the inclusion of new basis functions. 
% We note that this phenomenon can already be observed in Figure \ref{basisnumber}.

Let
\begin{equation*}
\begin{array}{c}{n=(N)^{\frac{2\alpha+d}{d}}} \\ {n^{+}=(N+1)^{\frac{2\alpha+d}{d}}}
\end{array}
\end{equation*}
that is, $n$ is the first step when there are more than $N$ features included; $n^+$ is the first step when there are more than $N+1$ features. Then the length of the interval between the two "basis adding" steps is
\begin{equation*}
\begin{aligned} n^{+}-n &=(N+1)^{\frac{2\alpha+d}{d}}-(N)^{\frac{2\alpha+d}{d}} \\ & =\Theta( N^{2 \alpha/d}) = \Theta(n^{\frac{2 \alpha}{2 \alpha+d}}) \end{aligned}
\end{equation*}
Thus, $O(n^{\frac{2\alpha+2d}{2\alpha+d}})$ computation is done per $n^{\frac{2\alpha}{2\alpha + d}}$ steps, which is, on average, $O(n^{\frac{2d}{2\alpha+d}})$ per step. Thus the average computational expense of a \emph{single update} with Algorithm \ref{alg:efficient} is of order $n^{\frac{2d}{2\alpha+d}}$. 

\section{Theoretical Analysis of the Online Projection Estimator}
\label{theorysection}
In this section, we formally show that the online estimator in this paper achieves the optimal statistical convergence rate when the true regression function belongs to the hypothesized RKHS. In previous theoretical analysis of (batch) projection estimators \citep{introtononpara}, the proof is shown when $\psi_j$'s are orthogonal to each other w.r.t. the empirical measure of the covariates. This event has probability zero if $X$ has a continuous density. In this section, we show it is possible to get a rate-optimal bound on the generalization error of $\hat f_{n,N}$ even if $\psi_j$'s (the eigenfunctions of the kernel with respect to our ``convenient'' working distribution) are quite correlated w.r.t. the empirical measure of $X$.

Recall that $\mathcal{F}_N = {\text{span}(\psi_1,...,\psi_N)}$ is the linear space spanned by the first $N$ eigenfunctions. Define the \emph{population} minimizer $f_N$ over $\mathcal{F}_N$ as
\begin{equation}
    f_N:= \arg \min_{f\in\mathcal{F}_N} \mathbb{E}[(f(X)-f_{\rho}(X))^2]
\end{equation}
Here we remind our reader that $\hat f_{n,N}\in\mathcal{F}_N$ is the estimator, $f_{N}$ is the population risk minimizer over $\mathcal{F}_N$ and $f_{\rho}\in \mathcal{H}$ is the target function to be estimated. To establish the result that $\|\hat f_{n,N}-f_{\rho}\|_2\rightarrow 0$ as $n\rightarrow\infty$, we first bound the rate at which $\|\hat f_{n,N}-f_N\|_2$ goes to $0$ as $N$ grows (sufficiently slowly); then we bound the rate at which $\|f_N-f_{\rho}\|_2\rightarrow 0$ as $N\rightarrow\infty$. With the correct choice of $N= \Theta(n^{\frac{d}{2\alpha+d}})$ we can balance the rate of the above two term converging to $0$. Before we state the result, we give assumptions necessary for the proof.

(A1) The joint distribution of i.i.d. $(X_i,Y_i)$ has support $\mathcal{X}\times \mathbb{R}\subset \mathbb{R}^d\times \mathbb{R}$ and $\mathcal{X}$ is compact. The i.i.d. zero-mean noise random variables $\epsilon_i = Y_i - f_{\rho}(X_i)$ satisfy the following
\begin{equation}
    \|\epsilon_i\|_{m, 1} := \int_{0}^{\infty} \mathbb{P}(|\epsilon_i|>t)^{1 / m} dt < \infty\quad\text{for some }m>1
\end{equation}
\textit{Note.} If for some $\delta > 0, m>1$ we have that $m+\delta$ moment of $\epsilon_i$ exists, then (A1) is satisfied for that value of $m$. This is \emph{slightly} stronger than existence of the $m$-th moment, cf. \cite[Chapter~10]{ledoux2013probability}

Our noise assumption is substantially weaker than the typical sub-Gaussian noise assumptions (sub-Gaussian random variables have all moments bounded). In the light tail noise setting, the level of the noise only influences the convergence speed by at most a constant. However we will see in Theorem~\ref{maintheorem}, if the eigenvalues decrease too fast (the RKHS is too small) and the noise has too few moments, the convergence \emph{rate} will depend on the noise level. Our analysis characterizes the interplay between the size of RKHS space and the noise level, using a sharp multiplier inequality \cite[Theorem~1]{han2019convergence}. There are currently no other methodologies, to our knowledge, that are both computationally tractable and have provable convergence guarantees with heavy-tailed noise in the online non-parametric regression setting.

% (A1) The joint distribution of i.i.d. $(X_i,Y_i)$ has support $\mathcal{X}\times \mathbb{R}\subset \mathbb{R}^d\times \mathbb{R}$ and $\mathcal{X}$ is compact. The independent zero-mean noise random variables $(\epsilon_i$)_{i=1}^{n}$ are uniformly bounded by $C_{\epsilon}$.

(A2) The true regression function $f_{\rho}$ belongs to the known RKHS $\mathcal{H}$ i.e. the RKHS-norm $\|f_{\rho}\|_{\mathcal{H}}$ is finite.

% This assumption $f_{\rho}\in\mathcal{H}$ is weak when the RKHS we are considering is larger, e.g. all functions with square-integrable first derivative.

%(A3) The noise random variables \{$\epsilon_i$\} are i.i.d Normal distribution $\mathcal{N}(0,\sigma^2)$.

%This assumption can generally be weaken to a sub-Gaussian noise assumption with uniform bounded sub-Gaussian parameters \citep{geer2000empirical}.

(A3) The kernel function has Mercer expansion $K(x,z) = \sum_{j=1}^{\infty} \lambda_j \psi_j(x)\psi_j(z)$, where $\left(\psi_j\right)_{j=1}^{\infty}$ are orthonormal with respect to some specified \emph{working distribution} $\bar\rho_X$, and $\lambda_j = \Theta(j^{-2\alpha/d})$ with $\alpha>d/2$.

(A4) The distribution of $X$, $\rho_X$, is absolutely continuous w.r.t. $\bar\rho_X$. Let $p_X = d\rho_X/d\bar\rho_X$ denote its Radon–Nikodym derivative. We assume for some $D<\infty$:
    \begin{equation*}
        p_X(x) \leq D \quad\text{for all }x\in\mathcal{X}
    \end{equation*}
\textit{Note.} In the (very common) case that both of these have densities with respect to Lebesgue measure, this is equivalent the ratio of their densities being bounded.   

\begin{theorem}[Optimal convergence rate]   
\label{maintheorem}
Assume (A1-A4), let $\hat f_{n,N}$ be the projection estimator \eqref{PE}. Assume that $\|\hat f_{n,N}\|_{\infty}\leq M$, for some $M<\infty$. Choosing $N = \Theta( n^{\frac{d}{2\alpha + d}})$, we have
\begin{equation}
    \|\hat f_{n,N} - f_{\rho}\|_2 = O_P\left(n^{-\frac{\alpha}{2\alpha+d}} \sqrt{\log n} \vee n^{-\frac{1}{2}+\frac{1}{2m}} \sqrt{\log n}\right)
\end{equation}
If $m\geq 2$ in (A1), the above bound holds in expectation:
\begin{equation}
\label{mainresult}
    \mathbb{E}[\|\hat f_{n,N} - f_{\rho}\|_2] = O\left(n^{-\frac{\alpha}{2\alpha+d}} \sqrt{\log n} \vee n^{-\frac{1}{2}+\frac{1}{2m}} \sqrt{\log n}\right)
\end{equation}
% for $\lambda_j\asymp \exp(-C j \log j)$, $N \asymp \log(n)$
% \begin{equation}
%     E[\|\hat f_{n,N} - f_{\rho}\|_2^2] = O( \frac{\log n}{n})
% \end{equation}
% In both cases, the online estimator $\hat f_{n,N}$ is rate-optimal.
\end{theorem}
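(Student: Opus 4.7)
The natural strategy is to split the error by the triangle inequality
\[
\|\hat f_{n,N} - f_\rho\|_2 \leq \|\hat f_{n,N} - f_N\|_2 + \|f_N - f_\rho\|_2
\]
and control each piece separately, then choose $N = \Theta(n^{d/(2\alpha+d)})$ to balance them. The first term is pure deterministic approximation theory; the second is a finite-dimensional regression problem whose dimension $N$ grows with $n$. Essentially all the work lies in the second term, which needs both a concentration argument to replace the empirical Gram matrix by its population version (since, unlike in the classical fixed-design setup, the $\psi_j$'s are not orthogonal with respect to $\rho_n$) and a heavy-tail empirical-process bound.

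\textbf{Approximation error.} Assumption (A4) gives $\|g\|_{L^2_{\rho_X}} \leq \sqrt{D}\,\|g\|_{L^2_{\bar\rho_X}}$, and since $f_N$ is the $L^2_{\rho_X}$-projection of $f_\rho$ onto $\mathcal{F}_N$,
\[
\|f_N - f_\rho\|_2 \leq \sqrt{D}\,\Bigl\|f_\rho - \sum_{j\leq N}\theta_j\psi_j\Bigr\|_{L^2_{\bar\rho_X}}.
\]
Using the RKHS characterization $\sum_j \theta_j^2/\lambda_j = \|f_\rho\|_{\mathcal{H}}^2 < \infty$ from Proposition~\ref{prop:ellipsoidRKHS}, the tail sum is bounded by $\lambda_{N+1}\|f_\rho\|_{\mathcal{H}}^2$, giving $\|f_N - f_\rho\|_2 = O(\sqrt{\lambda_{N+1}}) = O(N^{-\alpha/d}) = O(n^{-\alpha/(2\alpha+d)})$.

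\textbf{Estimation error.} Let $\theta^* \in \mathbb{R}^N$ denote the coefficients of $f_N$ and $\Delta_i := f_\rho(X_i) - f_N(X_i)$. The least-squares formula gives $\hat\theta - \theta^* = (\Psi_n^\top\Psi_n)^{-1}\Psi_n^\top(\Delta+\epsilon)$, so
\[
\|\hat f_{n,N}-f_N\|_2^2 = (\hat\theta-\theta^*)^\top A (\hat\theta-\theta^*), \qquad A_{jk}=\mathbb{E}[\psi_j(X)\psi_k(X)].
\]
Under (A4), $A$ is a bounded perturbation of the identity (the $\psi_j$'s are $\bar\rho_X$-orthonormal), so I plan to apply matrix Bernstein to the i.i.d.\ rank-one summands $A^{-1/2}\psi^N(X_i)\psi^N(X_i)^\top A^{-1/2}$ to conclude that, on an event of probability tending to $1$ as long as $N\log N = o(n)$, the empirical Gram matrix $\Psi_n^\top\Psi_n/n$ is spectrally equivalent to $A$. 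On this event the problem reduces to controlling $\|n^{-1}\Psi_n^\top\Delta\|_{A^{-1}}$ and $\|n^{-1}\Psi_n^\top\epsilon\|_{A^{-1}}$; the $\Delta$ piece is handled by Cauchy–Schwarz together with the approximation bound already obtained. The noise piece is equivalent to $\sup_{g\in\mathcal{F}_N,\,\|g\|_2\leq 1}|n^{-1}\sum_i\epsilon_i g(X_i)|$, which I will bound via the sharp multiplier inequality of \cite{han2019convergence}. For an $N$-dimensional subspace with envelope bounded through the Sobolev embedding (valid because $\alpha>d/2$), that inequality produces a Rademacher term of order $\sqrt{N\log n/n}$ together with a heavy-tail correction of order $n^{-1/2+1/(2m)}\sqrt{\log n}$. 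Substituting $N = \Theta(n^{d/(2\alpha+d)})$ yields exactly the stated rate, and when $m\geq 2$ the multiplier inequality delivers the bound in expectation rather than only in probability.

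\textbf{Main obstacle.} I expect the heavy-tail step to be the binding difficulty. Standard Bernstein-, Hoeffding-, or Talagrand-type maximal inequalities over the $N$-dimensional class $\mathcal{F}_N$ all require at least exponential noise moments, and a naive truncate-and-union bound on $\epsilon_i$ yields only the suboptimal exponent $n^{-1/2+1/m}$; obtaining the sharp $n^{-1/2+1/(2m)}$ rate genuinely needs the refined multiplier inequality carefully combined with an envelope-based entropy bound for $\mathcal{F}_N$. A secondary difficulty is that the hypothesis $\|\hat f_{n,N}\|_\infty \leq M$ is what lets one localize the empirical process and break the circular dependence of the noise term on $\hat f_{n,N}$ itself, so this boundedness assumption has to be threaded carefully through the concentration argument.
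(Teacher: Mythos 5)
Your decomposition and your treatment of the approximation term match the paper exactly (Lemma~\ref{lemma:lemmabias} is precisely your argument: push to $\bar\rho_X$ via (A4), Parseval, and the ellipsoid condition $\sum_j\theta_j^2/\lambda_j<\infty$), and you have correctly identified the sharp multiplier inequality of \cite{han2019convergence} as the essential tool for the heavy-tailed noise. But your route through the estimation error has a genuine gap. You write that under (A4) the population Gram matrix $A_{jk}=\mathbb{E}_{\rho_X}[\psi_j(X)\psi_k(X)]$ is ``a bounded perturbation of the identity,'' and your whole plan --- whitening by $A^{-1/2}$, matrix Bernstein on $A^{-1/2}\psi^N(X_i)\psi^N(X_i)^\top A^{-1/2}$, and measuring the error in $\|\cdot\|_{A^{-1}}$ --- requires $\lambda_{\min}(A)$ to be bounded away from zero. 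Assumption (A4) only gives the one-sided bound $p_X=d\rho_X/d\bar\rho_X\leq D$, hence only $A\preceq D\,I$; there is no lower bound on $p_X$, so $A$ may be arbitrarily ill-conditioned (the $\psi_j$ are orthonormal for $\bar\rho_X$, not for $\rho_X$). Matrix Bernstein additionally needs a uniform control of $\sup_x\sum_{j\leq N}\psi_j(x)^2$, which is not assumed either (the eigenfunctions are not assumed uniformly bounded). So the spectral-equivalence event you condition on is not available under the stated hypotheses without strengthening (A4) to a two-sided density bound plus an $L^\infty$ bound on the basis.

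The paper circumvents exactly this obstacle by never inverting a Gram matrix: it treats $\hat f_{n,N}$ purely as an ERM over the function class $\mathcal{F}_N=\{f\in\mathrm{span}(\psi_1,\dots,\psi_N):\|f\|_\infty<M\}$ and runs a peeling argument (Proposition~\ref{prop:mainconnection}) localized in the $L^2_{\rho_X}$ norm directly, with the stochastic term controlled through the VC/uniform-entropy bound $J(\delta,\mathcal{F}_N,L_2)\lesssim_M\sqrt{N}\,\delta\sqrt{\log(1/\delta)}$ (Lemma~\ref{lemma:entropybound}) fed into a local Rademacher bound and then into Han's multiplier inequality. This is why the paper can claim rate optimality ``even if the $\psi_j$'s are quite correlated w.r.t.\ the empirical measure'' and needs only the envelope $\|\hat f_{n,N}\|_\infty\leq M$, not any conditioning of $A$ or of $\Psi_n^\top\Psi_n$. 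If you want to salvage your linear-algebraic route you would have to add a lower bound $p_X\geq c>0$ and a uniform bound on the eigenfunctions, which would prove a strictly weaker theorem; otherwise the estimation-error step should be replaced by the M-estimation/peeling argument.
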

Note as long as all the moments of $\epsilon_i$'s exist (e.g. when $\epsilon_i$'s are sub-exponential), the convergence rate only depends on the size of the RKHS. One merit of our method is that even if the noise does not have finite variance, that is, $m< 2$ in (A1), our method still has convergence guarantees. To our knowledge, existing work on non-parametric SGD does not give convergence guarantees with such heavy tailed noise.
 
%We can borrow ideas from the M-estimation literature simply because our estimator is an ERM, which is not true for many other scalable kernel methods.

As we compare the two components on the RHS of the bound presented in \eqref{mainresult}, we can see that when $m>\frac{2\alpha}{d}+1$, that is, when we have a relative light-tailed noise, our bound is dominated by the size of the RKHS. However when $m<\frac{2\alpha}{d}+1$, it is the noise that dominates our bound. Also note that as $d$ increases, fewer moments on $\epsilon$ are required for our bound to match the classical non-parametric minimax rate in our RKHS.

% Note here $\lambda_j\asymp j^{-2\alpha}$ corresponds to the case when the RKHS is a Sobolev space; and $\lambda_j\asymp \exp(-Cj\log j)$ covers the case of a Gaussian kernel.
The following lower bound demonstrates that this rate of convergence is indeed optimal (up to a logarithm term) among all estimators. For $\lambda_j = \Theta(j^{-2\zeta})$ (to compare with Theorem~\ref{maintheorem}, take $\zeta = \alpha/d$), let $B_R = \{f\in\mathcal{H}\ |\ \|f\|_{\mathcal{H}}\leq R\}$ be the $R$-ball in RKHS $\mathcal{H}$. Then we have the minimax bound:
\begin{equation}
    \liminf_{n\rightarrow\infty} \inf_{\hat f} \sup_{f_{\rho}\in B_R} \mathbb{E}\left[n^{\frac{\zeta}{2\zeta+1}}\|\hat f-f_{\rho}\|_2\right] \geq C
\end{equation}
where the infimum ranges over all possible functions $\hat f$ that are measurable of the data. For a derivation of the lower bound, see \citet[Chap.~15]{wainwright2019high}.

Upper bounds similar to our results in Theorem~\ref{maintheorem}  have been shown in \cite{tarres2014online,dieuleveut2016nonparametric}, for SGD-type nonparametric online methods. However the proposed estimators there use $n$ basis function, therefore have an unacceptable $\Theta(n^2)$ total computational expense. There are methods that aim to improve the computational aspect by using random features or other acceleration methods (cf. Section~\ref{section:citation}), however the theoretical guarantees on statistical convergence rates in that work are generally quite weak (generally giving upper bounds of $n^{-1/4}$ in RMSE, which is far from the minimax rate) and insensitive to the decay rate of eigenvalues. 

% Many existing online nonparametric estimators aim to find a function $f\in\mathcal{F}$ that minimizes an expected convex loss $\mathbb{E}[l(f(X),Y)]$, which is a more general setting than this study. However, the majority of previous work on this topic assumes the loss function $l(\cdot,\cdot)$ is Lipschitz w.r.t. the first argument, including \citet{dai2014scalable,kivinen2001online,ying2006online, si2018nonlinear,rudi2017generalization,alaoui2015fast, koppel2019parsimonious, lu2016large}. Specializing to the regression problem (with squared-error-loss), this is essentially assuming the outcomes $Y_i$ (therefore the noise $\epsilon_i$) are uniformly bounded: because $l(f(x),y)-l(f(z),y) = (f(x)-y)^2 - (f(z)-y)^2 = (f(x)-f(z))(f(x)+f(z)-2y)$. If we require $l(\cdot,\cdot)$ to be Lipschitz, we basically require $f(x),f(z), y$ to be uniformly bounded. Although we still only consider bounded $f$ in this work, we relax the contraint on the noise variables: we require only finite moments of $\epsilon_i$ and show (in)sensitivity of our bound.

Many existing online nonparametric estimators aim to find a function $f\in\mathcal{F}$ that minimizes an expected convex loss $\mathbb{E}[l(f(X),Y)]$, which is a more general setting than this study. However, the majority of previous work on this topic assumes the loss function $l(\cdot,\cdot)$ is Lipschitz w.r.t. the first argument, including \citet{dai2014scalable, si2018nonlinear, koppel2019parsimonious, lu2016large}. Specializing to the regression problem (with squared-error-loss), this is essentially assuming the outcomes $Y_i$ (therefore the noise $\epsilon_i$) are uniformly bounded: because $l(f(x),y)-l(f(z),y) = (f(x)-y)^2 - (f(z)-y)^2 = (f(x)-f(z))(f(x)+f(z)-2y)$. If we require $l(\cdot,\cdot)$ to be Lipschitz, we basically require $f(x),f(z), y$ to be uniformly bounded. Although we still only consider bounded $f$ in this work, we relax the contraint on the noise variables: we require only finite moments of $\epsilon_i$ and show (in)sensitivity of our bound.

\section{Multivariate Regression Problems}
In most applications, the covariate $X_i$'s take value in $\mathbb{R}^d$ where $d>1$. If the kernel function $K:\mathbb{R}^d\times\mathbb{R}^d\rightarrow\mathbb{R}$ has a known Mercer expansion \eqref{kernelexpension}, then the proposed method can be applied directly. If the kernel function takes a tensor product form (e.g. the Gaussian kernel), or is constructed from a 1-dimension kernel via a tensor product (e.g. $K(x,z) = \prod_{k=1}^d\min\{x^{(k)},z^{(k)}\}$, where $x^{(k)}$ is the $k$-th entry of $x\in\mathbb{R}^d$), the eigenvalues and eigenfunctions are just the tensor product of the 1-dimensional kernels' \cite[Section~3.5]{michel2012lectures}, \cite[Section~5.2]{xiu2010numerical}. However, as presented in Section~\ref{theorysection}, the minimax rate of estimating in a $d$-dimension $\alpha$-order Sobolev space is $\Theta(n^{-\frac{\alpha}{2\alpha+d}})$, which becomes quite slow when $d$ is large (unless at the same time a large $\alpha$ is assumed). 

A popular low-dimension structure people have used is the nonparametric additive model \citep{hastie2009elements,yuan2016minimax}, which is thought to effectively balance model flexibility and interpretability. For $x\in\mathbb{R}^d$, we might consider imposing an additive structure on our model~\eqref{reg}:
\begin{equation}
    f_{\rho}(x) = \sum_{k=1}^d f_{\rho,k}\left(x^{(k)}\right)
\end{equation}
where the component functions $f_{\rho,k}$ belong to a RKHS $\mathcal{H}$ (in general they can belong to different spaces). For a fixed $d$, the minimax rate for estimating an additive model is identical (up to a multiplicative constant $d$) to the minimax rate in the analogous one-dimension nonparametric regression problem that works with the same hypothesis space $\mathcal{H}$ \citep{raskutti2009lower}. The proposed online method can be directly generalized to this setting, for more discussion and empirical performance, see Appendix S4.

\section{Simulation Study}
In this section, we illustrate both the computational and statistical efficiency of the online projection estimator, in both one-dimension regression and additive model settings. 
\subsection{Generalization Error of the Online Projection Estimator is Rate-Optimal}
In this section, we use simulated data to illustrate that the generalization error of our estimator reaches the minimix-optimal rate. For each sample, $X_i$ is generated from $\rho_X$ whose density function is $p_X(x)$; $Y_i$ is generated by $Y_i=f_{\rho}(X_i)+\epsilon_i$. The details of the parameters are listed in Table \ref{settingtable}. In example 1, we purposely select $\rho_X$ such that $\int_0^1 \psi_i(x)\psi_j(x)p_X(x)dx = \delta_{ij}$, together with bounded noise. In example 2, basis functions are no longer orthogonal w.r.t. $\rho_X$ and a low signal-noise ratio is applied. In both simple and more realistic scenarios, the online projection estimator achieves rate-optimal statistical convergence.

The $f_{\rho}$ in example 1 is taken from \citet{dieuleveut2016nonparametric}, where they used it to illustrate the performance of the functional SGD estimator; the regression function in example 2 is also used in a study of wavelet neural networks \citep{alexandridis2013wavelet}.

In example 1, the hypothesis space is the second-order spline on the circle
\begin{equation*}
\begin{aligned} W_{2}^0(per) &=\left\{f \in L^{2}([0,1])\ |\ \int_{0}^{1} f(u) d u=0\right.\\ &f(0) \left.=f(1), f^{\prime}(0)=f^{\prime}(1), \int_{0}^{1}\left(f''(u)\right)^{2} d u<\infty\right\} \end{aligned}
\end{equation*}
In example 2, we use Sobolev space $W_1^0([0,1])$ defined in \eqref{eq:W10}. Because eigenvalues decrease faster in example 1, we observe a convergence rate of $\sim n^{-4/5}$, which is faster that that in example 2, $\sim n^{-2/3}$. 

\begin{table}[t!] %***
\caption{Settings of simulation studies. $^*B_4(x) = x^{4}-2 x^{3}+x^{2}-\frac{1}{30}$ is the 4-th Bernoulli polynomial, and $\{x\}$ means taking the fractional part of $x$. }
\label{settingtable}\par
\vskip .2cm
\centerline{\tabcolsep=3truept}
\begin{tabular}{ccc} \hline %***5truept
& Example 1 & Example 2 \\
\hline
Kernel $K(s,t)$ & $\frac{-1}{24}B_4(\{s-t\})^*$& $\min\{s,t\}$ \\
Eigenvalue $\lambda_j$ & $\frac{2}{(2 \pi j)^{4}} = O(j^{-4})$&$\frac{4}{(2j-1)^2\pi^2} = O(j^{-2})$ \\
Basis function $\psi_j(x)$ & $\sin(2\pi j x),\cos(2\pi j x)$& $\sqrt{2}\sin(\frac{(2j-1)\pi x}{2})$\\
$p_{X}(x)$ &$\bm{1}_{[0,1]}(x)$&$(x+0.5)\bm{1}_{[0,1]}(x)$\\
Noise $\epsilon$ & Unif([-0.02,0.02])& Normal(0,5)\\
\multirow{2}{*}{True regression function $f_{\rho}$} & \multirow{2}{*}{$B_4(x)$}& $(6x-3)\sin(12x-6)$\\
&$+\cos^2(12x-6)$\\
\hline
\end{tabular}
\end{table}

We use $\|\hat f_{n,N}-f_{\rho}\|_2^2$ as a measure of goodness of fitting (Figure \ref{MSEplot}). The method in this paper is compared with an online nonparametric SGD estimator \citep{dieuleveut2016nonparametric} and the kernel ridge regression (KRR) estimator \eqref{KRR}. Although KRR might have a better generalization capacity (the rates should be the same, but there might be an improvement in the constant), it is computationally prohibitive to apply it in the online learning setting, so we only include this method as a reference. The hyperparameters for each method are chosen to optimize performance (oracle hyperparameters). For our method, it is the constant in front of timing for adding new basis functions. In Figure~\ref{curveplot}, we present several typical realizations of $\hat f_{n,N}$ for both examples, together with data points. 
\subsection{CPU Time}
\begin{figure}[t]
\vskip 0.2in
\begin{center}
\centerline{\includegraphics[width=\columnwidth]{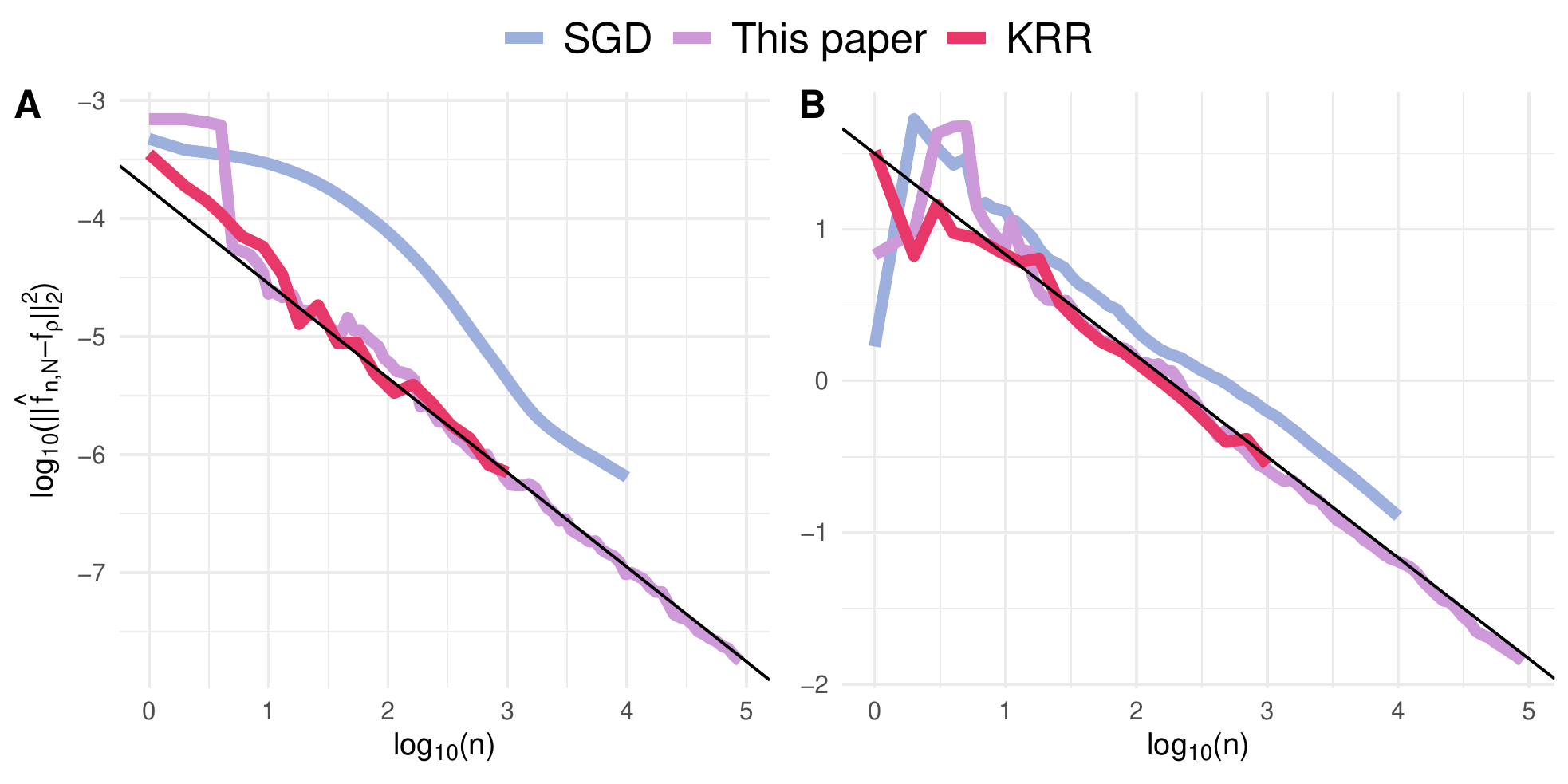}}
\caption{$\log_{10}\|\hat f_{n,N}-f_{\rho}\|_2^2$ against $\log_{10} n$.(\textbf{A}) Example 1, black line has slope $=-4/5$; (\textbf{B}) Example 2, black line has slope $=-2/3$. Each curve is calculated as the average of 15 repetitions. Due to different computational costs, we chose different maximum $n$ for different methods.}
\label{MSEplot}
\end{center}
\vspace{-0.58cm}
\end{figure}
\begin{figure}[!htbp]
\begin{center}
\centerline{\includegraphics[width=\columnwidth]{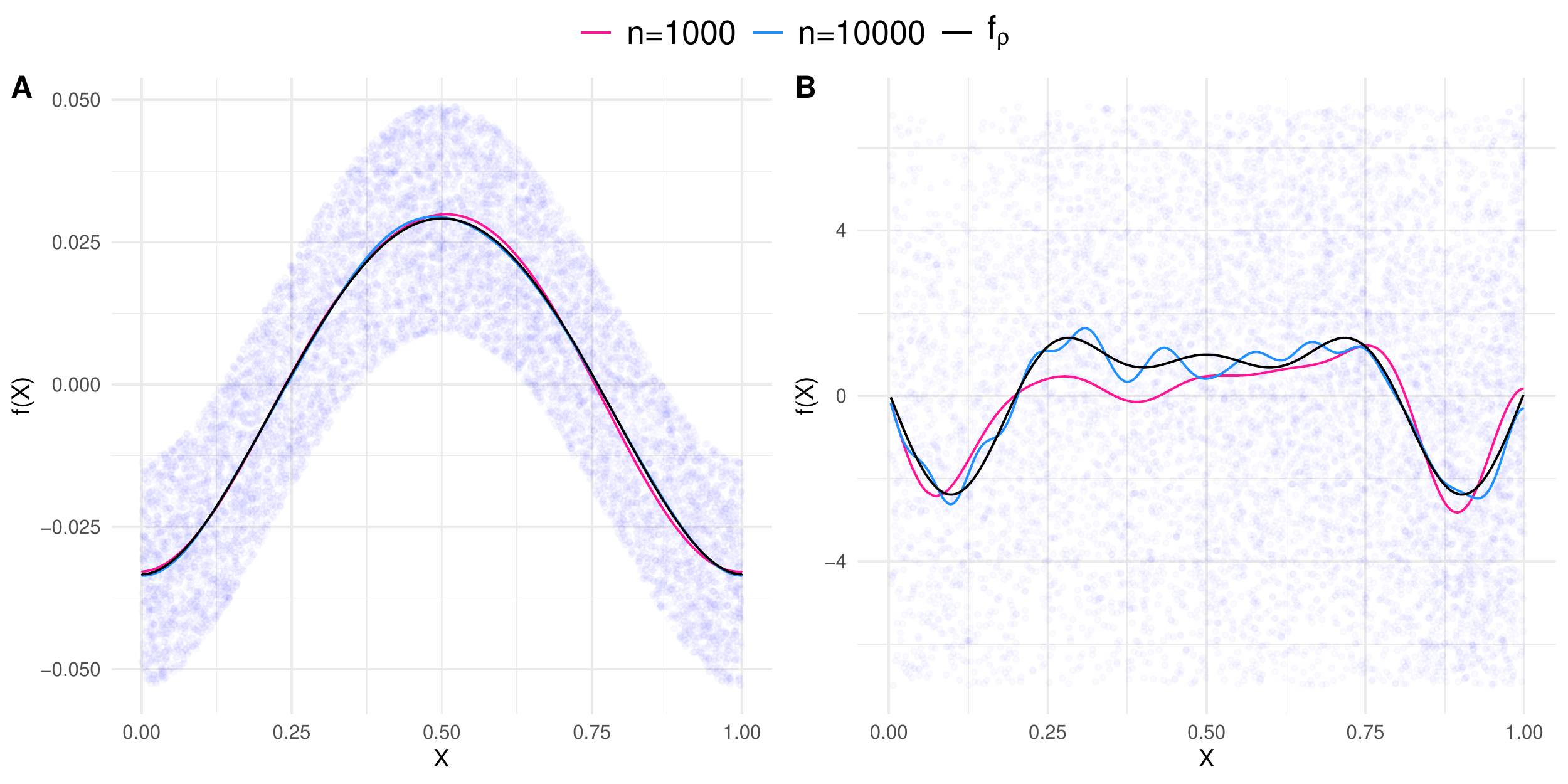}}
\caption{Realizations of $\hat f_{n,N}$. (\textbf{A}) Example 1; (\textbf{B}) Example 2.}
\label{curveplot}
\end{center}
\vskip -0.2in
\end{figure}

Figure~\ref{CPUtime} shows the CPU time used in calculating online estimators for up to $n$ samples when solving example 2, for the online projection estimator and nonparametric SGD estimator. Experiments were run on a computer with 1 Intel Core m3 processor, 1.2 GHz, with 8 GB of RAM. For the projection estimator, new basis functions are added when $n=\lfloor N^{2\alpha+1}\rfloor,$ $N=1,2,...$. First, we can see, for all $\alpha \in \{1,2,3\}$, the online projecting estimators are all significantly faster to compute than nonparametric SGD estimator after $n>10^4$, because the latter requires evaluation of $n$ basis functions for the $n+1$st update, which will accumulate very fast. In addition, for larger $\alpha$ the total computational cost for the online projection estimator becomes nearly linear in $n$. There are also some ``jumps'' in the CPU time for the online projection estimator: They correspond to steps when new basis functions are added in. Both of the phenomena match our analysis in Section~\ref{efficientsection}. Although it seems beneficial both computationally and statistically to use a larger $\alpha$, it is important to remember that $\alpha$ too large may result in poor generalization error -- this occurs if the RKHS associated with $\alpha$ becomes so small that it no longer includes $f_{\rho}$ (see discussion in \citet{noahmiss}). 

\begin{figure}[!ht]
\vskip 0.2in
\begin{center}
\centerline{\includegraphics[width=0.8\columnwidth]{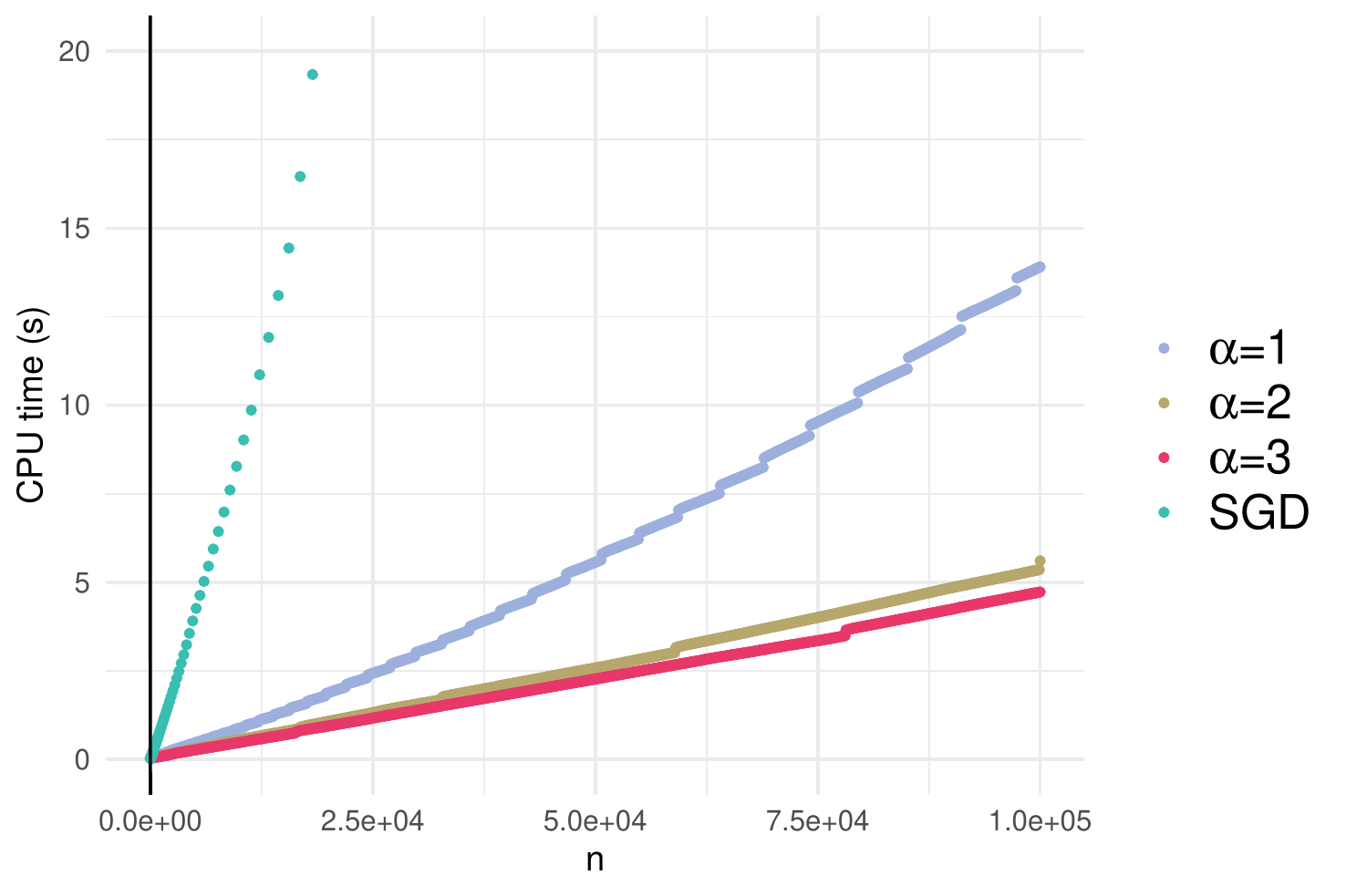}}
\caption{CPU time against sample size (10 runs each curve).}
\label{CPUtime}
\end{center}
\vskip -0.2in
\end{figure}

\section{Discussion}
In this paper, we proposed a framework to construct online nonparametric regression estimators when the hypothesis space is a RKHS. We showed that: (i) the error of the proposed estimator is near-optimal; and (ii) the computational expense of calculating such estimators is much lower than other contemporary estimators with similar statistical guarantees. In addition, our estimator is actually precisely an empirical risk minimizer (in a linear space of slowly growing dimension), which allows us to give theoretical guarantees when the noise is heavy tailed (as compared to the previously required assumptions of boundedness).

% Although we are considering a RKHS as the hypothesis space, the results in this paper can be generalized to the case when the hypothesis space is a Sobolev Ellipsoid $W(\alpha,R)$:
% \begin{equation*}
%     W(\alpha,R) = \left\{f = \sum_{j=1}^{\infty}  \theta_j\psi_j\mid f\in L^2_{\rho_X}\ \sum_{j=1}^{\infty}(j^{\alpha}\theta_j)^2 \leq R^2 \right\}
% \end{equation*}
% where $(\psi_j)_{j=1}^{\infty}$ is a collection of known $\bar\rho$-orthonormal basis functions. Although $W(\alpha,R)$ can be seen as an $R$-ball in a RKHS, the associated kernel $K(x,z) = \sum_{j=1}^{\infty} j^{-2\alpha}\psi_j(x)\psi_j(z)$ may not always have an analytical form available. In this case, it is not possible to use the functional SGD online estimator or any methods that require the closed-from of $K$. However, the proposed online projection estimator is still applicable and it is easy to see that $\hat f_{n,N}$ constructed with $(\psi_j)_{j=1}^N$ is still rate-optimal over $W(\alpha,R)$ under the regularity conditions assumed in this paper.

In this work, we leveraged properties of least-squares loss to efficiently update the empirical risk minimizer $\hat f_{n,N}$ in an online manner. However, for a general convex loss function (e.g. logistic regression), construction of an online nonparametric estimator that has both guaranteed optimal generalization capacity and is computationally feasible for larger problems is still an open question. Although there are functional SGD type estimators designed for this purpose (discussion in Section~\ref{section:citation}), it would be interesting to see if it is possible to design estimators that are both computationally efficient to update and are (approximate) ERM in a deterministic space.
% , given that updating a least square estimator is the same as updating a preconditioned SGD estimator \eqref{eq:olsandsgd}, see also Appendix S3.

\vskip 14pt
\noindent {\large\bf Supplementary Materials}\\
In the Appendix, we provide provide proof of Theorem~\ref{maintheorem}. In the later sections, we give a complete description of settings for simulations from the main text, together with more examples. We also include some additional discussion on the applications of our estimator.
\par
%%%%%%%%%%%%%%%%%%%%%%%%%%%%%%%%%%%%%%%%%%%%%%%%%%%%%%%%%%%%%%%%%%%%%%%%%%%%%%%%%%%%%%%%%%%%%%%%%%%%%%%%%%%%%%%%%%%%%%%%%%%%
\vskip 14pt
\noindent {\large\bf Acknowledgements}

N.S and T.Z. were both supported by NIH grant R01HL137808.

\newpage
\begin{appendices}
\section{Supplementary Discussion on RKHS}
In the main text we gave two equivalent definitions of RKHS: one based on the reproducing property and another one based on the Mercer expansion of the kernel.

The proposed method directly works with the eigenfunctions $\psi_j$, and it does not directly approximate either the kernel function $K$ or the kernel matrix $\mathbb{K}$. Although in many cases we start with a Mercer kernel in hand and calculate its eigendecomposition afterwards, it is not uncommon to begin with features and then attempt to calculate a closed-form of an implied kernel. This situation suits perfectly with our method: for the well-known the smoothing spline method proposed in \citet[Chapter~2]{wahba1990spline}, the author starts with $\psi_j(x) = \sin(2j\pi x),\cos(2j\pi x)$ and shows us how to get the closed-form of the reproducing kernel for periodic Sobolev space $W_m^0(\text{per})$. However, such a Bernoulli polynomial closed-form of the kernel is no longer available when $m$ is not an integer, which corresponds to a fractional Sobolev space case; when considering kernel space on sphere $\mathbb{S}^2$, some effort is required to obtain the closed-form expression even for simple cases (\cite{kennedy2013classification}, \cite{michel2012lectures}), but the features are just orthonormal spherical harmonics; for multiscale kernels defined by compactly-supported wavelet eigenfunctions \citep{opfer2006multiscale} or Legendre polynomials \cite[Section~3.3.2]{xiu2010numerical}, it is also simplest to work directly with features rather than attempting to identify a closed-form expression for the implied kernel.  

In the main text we provide the Mercer expansion of a Sobolev space $W_1^0([0,1])$. We also state the (correct) expansion for Gaussian kernel (there are several versions in the literature that are not correctly normalized):

When $\bar{\rho}_X$ has density (w.r.t Lebesgue measure on $\mathbb{R}$) $\bar{p}_X = \frac{\alpha}{\sqrt{\pi}}\exp(-\alpha^2x^2)$, we have the expansion of Gaussian kernel $K(x,z) = \exp(-\epsilon^2|x-z|^2)$ with
\begin{equation}
\label{eq:gaussianeigen}
    \begin{aligned}
        \lambda_j = \sqrt{\frac{\alpha^2}{\alpha^2 + \delta^2 + \epsilon^2}}\left(\frac{\epsilon^2}{\alpha^2+\delta^2+\epsilon^2}\right)^{j-1}\\
        \psi_j(x)= \gamma_j \exp(-\delta^2x^2)H_{j-1}(\alpha\beta x)
    \end{aligned}
\end{equation}
where the $H_j$ are Hermite polynomials of degree $j$, and
\begin{equation}
    \beta = \left(1+\left(\frac{2\epsilon}{\alpha}\right)^2\right)^{1/4},
    \ \gamma_j = \sqrt{\frac{\beta}{2^{j-1}\Gamma(j)}},
    \ \delta^2 = \frac{\alpha^2}{2}\left(\beta^2 - 1\right)
\end{equation}

The multivariate Gaussian kernel's eigenfunctions and eigenvalues are just the tensor product of the 1-dimension Gaussian kernel. Formally, the multivariate Gaussian kernel $K(\bold{x},\bold{z}) = \exp(-\epsilon^2\|\bold{x}-\bold{z}\|^2)$ has the following expansion:
\begin{equation}
    K(\bold{x},\bold{z}) = \sum_{\bold{j}\in \mathbb{N}^d} \lambda^*_{\bold{j}}\psi^*_{\bold{j}}(\bold{x})\psi^*_{\bold{j}}(\bold{z})
\end{equation}
where the eigenvalues and eigenfunctions are related to \eqref{eq:gaussianeigen} as
\begin{equation}
    \lambda_{\bold{j}}^* = \prod_{l = 1}^d \lambda_{\bold{j}_l},\  \psi_{\bold{j}}^*(\bold{x}) = \prod_{l = 1}^d \psi_{\bold{j}_l}(x^{(l)}), 
\end{equation}
where $x^{(l)}$ is the $l$-th component of $x\in\mathbb{R}^d$. There are also available numerical methods (independent of $(X_i,Y_i)'s$) for approximating kernel eigenfunctions in cases where analytical forms are not available, see  \cite{rakotch1975numerical,santin2016approximation}, \cite[Section~4.3]{rasmussen2003gaussian}, \cite{cai2020eigenvalue} and \cite[Chapter~12]{fasshauer2015kernel}.

There is also an interesting formal similarity between Mercer expansions and Bonchner's theorem (see, e.g. \cite{rahimi2007random}) which gives rise to random Fourier feature-based methods. On one hand, we have the Mercer expansion:
\begin{equation}
    K(x,z) = \sum_{j=1}^{\infty} \lambda(j) \psi(x,j)\psi(z,j)
\end{equation}
On the other hand, the positive-definite (real-valued) kernel has a convolutional representation by Bonchner's theorem \citep{rahimi2007random}):
\begin{equation}
\label{equation:RFFexpansion}
    K(x,z) =\int_{\mathcal{X}\times [0,2\pi]} p(\omega, b) \cos(\omega^{\top} x + b)\cos\left({\omega^{\top} z + b}\right) d\omega db
\end{equation}
% Inside the summation, both of the expansions have a term that only depends on the summation index, followed by a multiplication of two feature functions evaluated at $x$ and $z$ respectively.
The random Fourier feature expansion \eqref{equation:RFFexpansion} uses a set of basis functions (cosines) that is not sensitive to the expanded kernel. Only the probability distribution we sample $\omega$ from depends on the kernel. Such a choice may bring some convenience in application, but at the price of using an approximation that converges to the kernel much slower. Another difference is in the basis selection strategy: For the Mercer expansion it is very straightforward -- we choose the eigenfunctions corresponding to larger eigenvalues. By this strategy, we can ensure the features we choose are more important and orthogonal to each other w.r.t. RKHS inner product. For random feature-based methodologies, one has to sample from a probability distribution because there are uncountably infinitely many $\omega$ (versus countably infinite $j$) and there is less we can say about the geometric properties of random features \citep{yu2016orthogonal}.

Our readers can also find expansions of various kernels in \cite{wainwright2019high,wahba1990spline, fasshauer2012green,williams2000effect,shi2009data, liang2014eigen, fornberg2008stable}. There are also several existing online nonparametric learning methods not mentioned in the main text, e.g. \cite{kivinen2001online, ying2006online, rudi2017generalization,alaoui2015fast, xiong2019online} .
\section{Proof of Theorem 3}
We can decompose the $L^2_{\rho_X}$-distance(i.e. $\|\cdot\|_2$-distance) between $\hat f_{n,N}$ and $f_{\rho}$ into two parts by inserting a $f_N$ function in between. Recall the definition of the previous two are:
\begin{equation}
    \begin{aligned}
        \hat f_n &:= \underset{f \in \mathcal{F}_N}{\operatorname{argmin}} \frac{1}{n} \sum_{i=1}^{n}\left(Y_{i}-f\left(X_{i}\right)\right)^{2}\\
        f_{\rho} & :=\underset{f \in L^2_{\rho_X}}{\operatorname{argmin}} \int_{\mathcal{X}\times \mathbb{R}}\left(Y-f\left(X\right)\right)^{2}d\rho(X,Y)
    \end{aligned}
\end{equation}
where $\mathcal{F}_N$ is a subset of the $N$-dimension vector space spanned by $\psi_1,...,\psi_N$:
\begin{equation}
\label{eq:defineFN}
    \mathcal{F}_N = \mathcal{F}_N(M):=\{f\in L^2_{\rho_X}\mid f\in \text{span}(\psi_1,...,\psi_N), \|f\|_{\infty}<M\}
\end{equation}.
We insert a deterministic function $f_N$ in-between to facilitate the use of the triangle inequality. 
\begin{equation}
    f_N :=\underset{f \in \mathcal{F}_N}{\operatorname{argmin}} \int_{\mathcal{X}\times \mathbb{R}}\left(Y-f\left(X\right)\right)^{2}d\rho(X,Y)
\end{equation}
So we have the following decomposition of $L^2_{\rho_X}$ distance:
\begin{equation}
\label{eq:tri}
    E\|\hat f_{n,N} - f_{\rho}\|_2\leq E\|\hat f_{n,N} - f_N\|_2 + \|\hat f_{N} - f_{\rho}\|_2
\end{equation}
If we can bound the two terms at the correct rates separately at the desired order, combining them together would give the result in Theorem~3.

\subsection{Bound $\|f_N - f_{\rho}\|_2$}
We first handle the second term in \eqref{eq:tri}. It is a deterministic quantity which represents the approximation error of our estimator. In the main text, we given two equivalent definitions of RKHS, respectively based on the reproducing property and the Mercer expansion. We will use the second one to explicitly calculate the approximation error. Let $\mathcal{H}$ denote the native space of $K$ (the RKHS of interest).

\begin{lemma}
\label{lemma:lemmabias}
Assume (A1),(A2),(A4), we have
\begin{equation}
    \left\|f_{N}-f_{\rho}\right\|_2 \leqslant \left(D \|f_{\rho}\|_{\mathcal{H}} \lambda_{N}\right)^{1 / 2}
\end{equation}
where $\|\cdot\|_{\mathcal{H}}$ is the RKHS-norm. If we further assume (A3) and choose $N = \Theta(n^{\frac{d}{2\alpha + d}})$, then
\begin{equation}
    \|f_N - f_{\rho}\|_2 = O(n^{-\frac{\alpha}{2\alpha+d}})
\end{equation}
\end{lemma}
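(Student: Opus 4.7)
The idea is to identify $f_N$ as the $L^2_{\rho_X}$ projection of $f_\rho$ onto $\mathcal{F}_N$, then compare it to the natural truncation of the Mercer expansion of $f_\rho$ and bound the tail via the decay of the eigenvalues.

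First I would observe that minimizing $\mathbb{E}[(Y-f(X))^2]$ over $f \in \mathcal{F}_N$ is equivalent (modulo a constant) to minimizing $\mathbb{E}[(f(X)-f_\rho(X))^2] = \|f-f_\rho\|_2^2$ over $f \in \mathcal{F}_N$, since the irreducible noise term is independent of $f$. Because $\mathcal{F}_N$ from \eqref{eq:defineFN} is a convex set in $L^2_{\rho_X}$, $f_N$ is the unique $L^2_{\rho_X}$-closest element of $\mathcal{F}_N$ to $f_\rho$, so
\begin{equation}
\|f_N - f_\rho\|_2 \;\leq\; \|g - f_\rho\|_2 \qquad \text{for every } g \in \mathcal{F}_N.
\end{equation}

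Next, using Proposition~\ref{prop:ellipsoidRKHS} together with (A2), expand $f_\rho = \sum_{j=1}^{\infty}\theta_j\psi_j$ with $\sum_{j=1}^\infty \theta_j^2/\lambda_j = \|f_\rho\|_{\mathcal{H}}^2 < \infty$, and take $g$ to be the truncation $g = \sum_{j=1}^{N}\theta_j\psi_j$. (One has to check $g \in \mathcal{F}_N$, i.e.\ $\|g\|_\infty < M$; by the reproducing property $\|g\|_\infty \leq \|g\|_{\mathcal{H}}\sup_x\sqrt{K(x,x)} \leq \|f_\rho\|_{\mathcal{H}}\sup_x\sqrt{K(x,x)}$, which is finite on the compact $\mathcal{X}$ and can be assumed below $M$ without loss of generality.) Then the residual is $f_\rho - g = \sum_{j>N}\theta_j\psi_j$. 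I would now change measure using (A4): since $d\rho_X = p_X\, d\bar\rho_X$ with $p_X \leq D$,
\begin{equation}
\|f_\rho - g\|_2^2 = \int (f_\rho - g)^2 \, p_X \, d\bar\rho_X \;\leq\; D \int (f_\rho - g)^2 \, d\bar\rho_X.
\end{equation}

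Because $(\psi_j)$ is $L^2_{\bar\rho_X}$-orthonormal, the right side equals $D\sum_{j>N}\theta_j^2$. Factoring out $\lambda_{N+1}$ and using monotonicity of the eigenvalues,
\begin{equation}
\sum_{j>N}\theta_j^2 \;=\; \sum_{j>N} \lambda_j \cdot \frac{\theta_j^2}{\lambda_j} \;\leq\; \lambda_{N+1} \sum_{j>N}\frac{\theta_j^2}{\lambda_j} \;\leq\; \lambda_N \, \|f_\rho\|_{\mathcal{H}}^2.
\end{equation}
Combining the previous displays yields $\|f_N - f_\rho\|_2 \leq \sqrt{D\,\lambda_N}\,\|f_\rho\|_{\mathcal{H}}$, which gives the stated bound. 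For the rate conclusion, I would just plug in (A3) and the choice of $N$: $\lambda_N = \Theta(N^{-2\alpha/d}) = \Theta(n^{-2\alpha/(2\alpha+d)})$, hence $\sqrt{\lambda_N} = \Theta(n^{-\alpha/(2\alpha+d)})$.

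The only mildly delicate step is verifying that the truncation $g$ lies in $\mathcal{F}_N$ despite the $L^\infty$ cap, which I would handle via the RKHS-to-$C(\mathcal{X})$ embedding implied by $\alpha > d/2$; everything else is a one-line manipulation of the Fourier coefficients and the change of measure.
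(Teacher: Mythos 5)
Your proposal is correct and follows essentially the same route as the paper's proof: compare $f_N$ to the truncated Mercer expansion $\sum_{j=1}^{N}\theta_j\psi_j$, change measure via $p_X\le D$ from (A4), apply Parseval in $L^2_{\bar\rho_X}$, and factor out $\lambda_N$ from the tail; your extra care in verifying that the truncation satisfies the $\|\cdot\|_\infty\le M$ constraint of $\mathcal{F}_N$ is a point the paper glosses over. Note also that your final bound $\sqrt{D\lambda_N}\,\|f_\rho\|_{\mathcal{H}}$ is the correct homogeneous form (the lemma's displayed bound puts $\|f_\rho\|_{\mathcal{H}}$ rather than $\|f_\rho\|_{\mathcal{H}}^2$ under the square root, an apparent typo that does not affect the rate).
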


\begin{proof} Since $f\in\mathcal{H}$ by assumption, we know $f_{\rho}$ has the following expansion w.r.t $\psi_j$: $f_{\rho} = \sum_{j=1}^{\infty}\theta_j\psi_j$. Recall that we defined $(\lambda_j,\psi_j)$ as the eigen-system of operator $T_{k,\bar{\rho}_X}$ in Section~2. By the definition of RKHS in Proposition~2, the condition $\|f_{\rho}\|_{\mathcal{H}}<\infty $ in (A2) can be rewritten as:
\begin{equation}
    \|f_{\rho}\|_{\mathcal{H}}^2 = \sum_{j=1}^{\infty} \left(\frac{\theta_j}{\sqrt{\lambda_j}}\right)^2 <\infty
\end{equation}
Define $f_{\rho,N} = \sum_{j=1}^N \theta_j \psi_j\in\mathcal{F}_N$ to be a truncated approximation of $f_{\rho}$ (which does not depend on data). We know that $\|f_N-f_{\rho}\|_2$ is smaller than $\|f_{\rho,N}-f_{\rho}\|_2$ because $f_N$ is the minimizer of $\|f-f_{\rho}\|_2$ over $f\in\mathcal{F}_N$.

% Then we have
% \begin{align*}
%     E[(Y-f_N(X))^2]&\leq E[(Y-f_{\rho,N}(X))^2]\\
%     \Rightarrow E[(f_{\rho}(X)+\epsilon-f_N(X))^2]&\leq E[(f_{\rho}(X)+\epsilon-f_{\rho,N}(X))^2]\\
%     \Rightarrow E[(f_{\rho}(X)-f_N(X))^2]&\leq E[(f_{\rho}(X)-f_{\rho,N}(X))^2]\\
%     \Rightarrow \|f_{\rho}-f_N\|_2^2 & \leq \|f_{\rho}-f_{\rho,N}\|_2^2 
% \end{align*}
So we have:
\begin{equation}
\begin{aligned}
      \|f_N - f_{\rho}\|_2 & \leq  \|f_{\rho,N} - f_{\rho}\|_2\\
    & = \left(\int_{\mathcal{X}} \left(f_{\rho,N}(x) - f_{\rho}(x)\right)^2d\rho_X(x) \right)^{1/2}\\
    & \stackrel{(1)}{\leq}  D^{1/2} \left(\int_{\mathcal{X}} (f_{\rho, N}(x) - f_{\rho}(x))^2d\bar\rho_X(x) \right)^{1/2} \\ 
    & \stackrel{(2)}{=}  \left(D\sum_{j=N+1}^{\infty}\theta_j^2\right)^{1/2}\\
    & \leq \left(D\lambda_N \sum_{j=N+1}^{\infty} \theta_j^2\lambda_j^{-1}\right)^{1/2}\\
    & \leq  (D\|f_{\rho}\|_{\mathcal{H}}\lambda_N)^{1/2}
\end{aligned}
\end{equation}
In (1) we use assumption (A4) about the relationship between $\rho_X$ and $\bar{\rho}_X$. In (2) we use Parseval's identity noting that $\psi_j$'s are orthonormal w.r.t. $\bar{\rho}_X$.

If we take $N = \Theta(n^{\frac{1}{2\alpha+d}})$ and assume $\lambda_j =\Theta(j^{-2\alpha/d})$, we have $\lambda_N = \Theta(n^{-\frac{2\alpha}{2\alpha+d}})$, therefore $\|f_N - f_{\rho}\|_2 = O(n^{-\frac{\alpha}{2\alpha+d}})$. Thus we have proven the first part of the Lemma.
\end{proof}
\subsection{Bound $\mathbb{E}\|\hat f_{n,N} - f_N\|_2$}
In this section we bound the term associated with the stochastic error. Our proof engages the following steps: We first show the hypothesis space is a VC-class, then use this property to bound its localized Rademacher complexity. This will further lead us to the final convergence rate because $\hat f_{n,N}$ is an M-estimator (ERM of the negative loss) over this hypothesis space. We use the novel result presented in \cite{han2019convergence} to bound the multiplier process with a Rademacher process, which allows us to quantify the interplay between hypothesis space size and the level of noise.
\begin{proposition}
\label{prop:vclinear}
Let $\mathcal{F}_N$ be the $N$-dimension linear space defined in \eqref{eq:defineFN}, then we know $\mathcal{F}_N$ is VC-subgraph class with index less than or equal to $N+2$.
\end{proposition}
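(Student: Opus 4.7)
The plan is to reduce the claim to the classical fact that any finite-dimensional vector space of real-valued functions is a VC-subgraph class whose index is bounded by its dimension plus two (a standard result from empirical process theory, e.g.\ Lemma~2.6.15 of van~der~Vaart and Wellner). Since $\mathcal{F}_N$ is contained in the $N$-dimensional vector space $V := \mathrm{span}(\psi_1,\ldots,\psi_N)$, and restricting to the $L^\infty$-ball $\{f : \|f\|_\infty < M\}$ only removes subgraphs from the class, it suffices to prove the $N+2$ bound for $V$ itself; any subclass of a VC-subgraph class inherits the same (or smaller) index.

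First I would rewrite each subgraph in a linear-in-parameters form. For $f = \sum_{j=1}^N \theta_j \psi_j \in V$, the subgraph is $\{(x,t) \in \mathcal{X}\times\mathbb{R} : t < f(x)\} = \{(x,t) : g_f(x,t) < 0\}$, where $g_f(x,t) := t - \sum_{j=1}^N \theta_j \psi_j(x)$. As $(\theta_1,\ldots,\theta_N)$ ranges over $\mathbb{R}^N$, the function $g_f$ ranges over the $(N+1)$-dimensional vector space $W := \mathrm{span}(\tau,\psi_1,\ldots,\psi_N)$ of functions on $\mathcal{X}\times\mathbb{R}$, where $\tau(x,t):=t$. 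Linear independence of this generating set is immediate: $\tau$ is nonconstant in $t$, whereas each $\psi_j$ depends only on $x$.

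Next I would invoke the standard combinatorial fact: if $W$ is a $k$-dimensional vector space of real-valued functions on a set $\mathcal{Z}$, then the collection $\{\{z : g(z) < 0\} : g \in W\}$ cannot shatter any $k+1$ points. The short argument is that for any $z_0,\ldots,z_k \in \mathcal{Z}$, the evaluation map $g \mapsto (g(z_0),\ldots,g(z_k))$ sends $W$ into a proper linear subspace of $\mathbb{R}^{k+1}$; picking a nonzero vector $v$ orthogonal to this subspace and taking the sign pattern $(\mathrm{sgn}\,v_0,\ldots,\mathrm{sgn}\,v_k)$ produces a dichotomy that cannot be realized. Applying this with $k = N+1$ and $\mathcal{Z} = \mathcal{X}\times\mathbb{R}$ gives the desired VC-index bound of $N+2$ for $V$, and hence for $\mathcal{F}_N$.

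The argument is essentially textbook, so I do not anticipate any genuine obstacle. The only points requiring care are (i) correctly bookkeeping the dimension as $N+1$ rather than $N$ (the extra dimension coming from $\tau$), and (ii) checking that the restriction to the $L^\infty$ ball, although not a linear subspace, still yields a subfamily of subgraphs and therefore cannot raise the VC-index above the bound established for $V$.
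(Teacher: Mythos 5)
Your argument is correct and is in substance the same as the paper's: the paper simply cites the standard fact (Lemma~2.6.15 of van der Vaart and Wellner, or Proposition~4.20 of Wainwright) that an $N$-dimensional vector space of measurable functions is a VC-subgraph class of index at most $N+2$, whereas you reproduce the textbook proof of that fact via the evaluation-map/orthogonal-vector argument on the $(N+1)$-dimensional space spanned by $\tau(x,t)=t$ and $\psi_1,\dots,\psi_N$. Your additional observation that the sup-norm restriction in the definition of $\mathcal{F}_N$ only passes to a subclass, and hence cannot increase the VC-subgraph index, is the right (and necessary) bookkeeping step.
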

\begin{proof}
The definition of VC-subgraph class, together with the fact that a $N$-dimension vector space $\mathcal{F}_N$ of measurable functions is a VC-class of index no more than $N+2$, can be found in \cite[Lemma ~2.6.15]{van1996weak} or \cite[Proposition~4.20]{wainwright2019high}.
\end{proof}

Now we use the fact that $\mathcal{F}_N$ is a VC-class to get an upper bound on its covering number. For this, we need the following result.

 \begin{proposition}
 \label{prop:universality}
 For a VC-subgraph class of functions $\mathcal{F}$. One has for any probability measure $Q$:
 \begin{equation}
     \mathcal{N}(\epsilon \|F\|_{Q,2},\mathcal{F}, L^2_{Q}) \leq CN(16e)^{N}\left(\frac{1}{\epsilon}\right)^{2(N-1)}
 \end{equation}
 where $N$ is the VC-dimension of $\mathcal{F}$ and $0<\epsilon<1$. And $F$ is the envelope function of $\mathcal{F}$, i.e. $|f(x)|\leq F(x)$ for any $x\in\mathcal{X},f\in\mathcal{F}$.
 \end{proposition}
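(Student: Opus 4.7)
The plan is to transfer the VC-subgraph structure of $\mathcal{F}$ into a covering-number bound via a two-step reduction. For each $f\in\mathcal{F}$ let $C^f=\{(x,t)\in\mathcal{X}\times\mathbb{R}:t\leq f(x)\}$ denote its subgraph; by hypothesis $\mathcal{C}=\{C^f:f\in\mathcal{F}\}$ is a Vapnik--Chervonenkis class of sets with VC-index at most $N$. The strategy is to pass from a covering-number bound for $\mathcal{C}$ (as sets) to the required covering-number bound for $\mathcal{F}$ (as functions), through an appropriately chosen auxiliary probability measure on $\mathcal{X}\times\mathbb{R}$.

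For the set-covering step I would invoke the classical Haussler bound (cf.\ van der Vaart--Wellner, Theorem~2.6.4): for any probability measure $R$ on $\mathcal{X}\times\mathbb{R}$ and $0<\delta<1$,
\begin{equation*}
\mathcal{N}\bigl(\delta,\,\mathcal{C},\,L^1(R)\bigr)\;\leq\;K\,N\,(4e)^N\,\delta^{-(N-1)},
\end{equation*}
whose proof couples Sauer--Shelah (a VC-class of index $N$ realizes at most $(en/N)^N$ distinct patterns on $n$ points) with a random-sampling / probabilistic extraction argument. I would use this as a black box.

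For the transfer step I would introduce an auxiliary measure $R$ on $\mathcal{X}\times\mathbb{R}$ designed so that the $R$-measure of the symmetric difference between two subgraphs controls the $L^2(Q)$-distance between the underlying functions, normalized by $\|F\|_{Q,2}^2$. A standard construction weights Lebesgue measure on the strip $\{(x,t):|t|\leq F(x)\}$ by $|t|$ and normalizes appropriately, yielding an identity of the form
\begin{equation*}
R\bigl(C^f\triangle C^g\bigr)\;\asymp\;\frac{\int (f-g)^2\,dQ}{\|F\|_{Q,2}^2}.
\end{equation*}
Hence an $L^1(R)$-$\delta$-cover of $\mathcal{C}$ produces an $L^2(Q)$-$\sqrt{\delta}\,\|F\|_{Q,2}$-cover of $\mathcal{F}$. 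Plugging $\delta=\epsilon^2$ into Haussler's bound then gives a cardinality of order $N(4e)^N\epsilon^{-2(N-1)}$, which matches the stated bound up to absorbing universal constants into $C$ and enlarging the geometric base from $4e$ to $16e$.

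The main obstacle I expect is precisely this construction of $R$ and the bookkeeping of constants: one must verify that $R$ (after normalization) is a genuine probability measure, pin down the symmetric-difference identity above with clean constants by writing $C^f\triangle C^g=\{(x,t):\min(f,g)<t\leq\max(f,g)\}$ and integrating, and track how squaring $\delta$ doubles the exponent on $1/\epsilon$. Aside from these careful but routine computations, the content of the proposition is essentially a repackaging of Sauer--Shelah together with Haussler's sampling argument.
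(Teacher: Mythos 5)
The paper does not prove this proposition at all --- it simply cites the slightly more general Theorem~2.6.7 of van der Vaart and Wellner --- and your sketch is essentially a correct reconstruction of that standard argument: Haussler's $L^1$ covering bound for the VC class of subgraphs, followed by a change of measure on the strip $\{(x,t):|t|\le F(x)\}$ that converts symmetric-difference distance into $L^2(Q)$ distance, with $\delta=\epsilon^2$ doubling the exponent. One small correction: your claimed identity
\begin{equation*}
R\bigl(C^f\triangle C^g\bigr)\;\asymp\;\frac{\int (f-g)^2\,dQ}{\|F\|_{Q,2}^2}
\end{equation*}
is not a two-sided comparison uniform over $f,g$ (take $g=f+\eta$ with $\eta$ small: the left side is of order $\eta$, the right side of order $\eta^2$). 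What is true, and all you need, is the one-sided bound: since $\int_a^b|t|\,dt\ge (b-a)^2/4$ for all $a\le b$, the $|t|$-weighted normalized measure $R$ satisfies $\|f-g\|_{L^2(Q)}^2\le 4\|F\|_{Q,2}^2\,R(C^f\triangle C^g)$, so an $L^1(R)$-$\delta$-cover of the subgraphs yields an $L^2(Q)$-$2\sqrt{\delta}\,\|F\|_{Q,2}$-cover of $\mathcal{F}$; the extra factor of $2$ is exactly what inflates $4e$ to $16e$ in the final bound. (You should also note in passing that the cover centers can be taken to be subgraphs of members of $\mathcal{F}$, at the usual cost of a factor of $2$ in $\delta$.) With that adjustment your argument is sound and coincides with the proof the paper delegates to the reference.
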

 \begin{proof}
One can find the proof of a slightly more general version in \cite[Theorem~2.6.7]{van1996weak}.
\end{proof}

For a function space $\mathcal{F}$, define the localized uniform entropy integral as:
\begin{equation}
J(\delta, \mathcal{F}, L_{2}) := \int_{0}^{\delta} \sup_{Q} \sqrt{1+\log \mathcal{N}\left(\epsilon\|F\|_{Q,2}, \mathcal{F}, L_{2}(Q)\right)} d\epsilon
\end{equation}
Applying this to the space $\mathcal{F}_N$, we have the following result:
\begin{lemma}
\label{lemma:entropybound}
Let $\mathcal{F}_N$ be the function space defined in \eqref{eq:defineFN}, we have 
\begin{equation}
    J(\delta,\mathcal{F}_N,L_2) \leq C_M\sqrt{N\delta^2\log\left(\frac{1}{\delta}\right)}
\end{equation}
for sufficiently small $\delta$. The constant $C_M$ only depends on $M$.
\end{lemma}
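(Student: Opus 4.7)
The plan is to translate the VC-subgraph property of $\mathcal{F}_N$ established in Proposition~\ref{prop:vclinear} into a uniform covering number bound via Proposition~\ref{prop:universality}, and then evaluate the entropy integral directly. The key preliminary observation is that every $f\in\mathcal{F}_N$ satisfies $\|f\|_{\infty}\leq M$ by construction~\eqref{eq:defineFN}, so the constant function $F\equiv M$ serves as a valid envelope for the class $\mathcal{F}_N$. Consequently $\|F\|_{Q,2}=M$ for every probability measure $Q$, and the supremum over $Q$ in the definition of $J$ will not produce any extra $Q$-dependence beyond what already appears in the covering-number bound.

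Applying Proposition~\ref{prop:universality} with VC-index at most $N+2$ then yields, uniformly in $Q$,
\begin{equation*}
\log\mathcal{N}\bigl(\epsilon M,\mathcal{F}_N,L_2(Q)\bigr) \leq \log\bigl(C(N+2)(16e)^{N+2}\bigr) + 2(N+1)\log(1/\epsilon) \lesssim N + N\log(1/\epsilon).
\end{equation*}
Substituting this into the definition of $J(\delta,\mathcal{F}_N,L_2)$ and separating the two contributions gives
\begin{equation*}
J(\delta,\mathcal{F}_N,L_2) \;\lesssim\; \sqrt{N}\int_0^\delta \sqrt{\log(1/\epsilon)}\,d\epsilon \;+\; \sqrt{N}\,\delta.
\end{equation*}

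The remaining calculation is to estimate $\int_0^\delta\sqrt{\log(1/\epsilon)}\,d\epsilon$. Under the substitution $u=\log(1/\epsilon)$ this integral becomes $\int_{\log(1/\delta)}^\infty \sqrt{u}\,e^{-u}\,du$, and a single integration by parts (or the standard Laplace-type asymptotic for the upper incomplete gamma function) bounds it by $\delta\sqrt{\log(1/\delta)}(1+o(1))$ as $\delta\to 0$. For $\delta$ small enough that $\log(1/\delta)\geq 1$ the linear term $\sqrt{N}\,\delta$ is also dominated by this quantity, yielding $J(\delta,\mathcal{F}_N,L_2)\leq C_M\sqrt{N\delta^2\log(1/\delta)}$ with the $M$-dependence entering through the envelope constant.

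I do not expect a substantive obstacle in this argument; it is essentially standard empirical process bookkeeping once the VC-subgraph property is in hand. The one point that requires a little care is that the entropy bound must be genuinely uniform in $Q$ (which is exactly why Proposition~\ref{prop:universality} is stated with a $\sup_Q$ available), and that the $(16e)^{N+2}$ factor in the covering-number bound contributes only an additive $O(N)$ inside the square root, so it is absorbed cleanly into the leading $N\log(1/\epsilon)$ term once $\epsilon$ is taken small.
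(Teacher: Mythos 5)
Your argument is correct and is essentially the same as the paper's proof: both use the envelope $F\equiv M$, invoke the VC-subgraph covering number bound uniformly in $Q$, and evaluate $\int_0^\delta\sqrt{\log(1/\epsilon)}\,d\epsilon\lesssim\delta\sqrt{\log(1/\delta)}$ for small $\delta$ (the paper does this via the substitution $u=1/(M\epsilon)$ rather than your incomplete-gamma argument, but the computation is equivalent). Your remark that the $(16e)^{N+2}$ factor contributes only an additive $O(N)$ absorbed by the $N\log(1/\epsilon)$ term matches the paper's implicit handling of that constant.
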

\begin{proof}
We first note $\mathcal{F}_N$ is a subset of an $N$-dimension vector space with envelope $F(x) = M$. By Proposition~\ref{prop:vclinear} and Proposition~\ref{prop:universality}, we have 
\begin{equation}
\begin{aligned}
    \mathcal{N}(\epsilon M, \mathcal{F}_N, L^2(Q))&\leq CN(16e)^N\left(\frac{1}{M\epsilon}\right)^{2N-2}\quad\text{for any measure }Q\\
    \Rightarrow J(\delta, \mathcal{F}, L^2) &\leq C \int_0^{\delta} \sqrt{N\log\left(\frac{1}{M\epsilon}\right)}d\epsilon\quad\text{for sufficiently small }\delta\\
    &\leq C\sqrt{N}\int_{\infty}^{\frac{1}{M\delta}} \frac{\sqrt{\log u}}{M^2u^2} du\\
    &\leq CM\delta \sqrt{N\log\left(\frac{1}{M\delta}\right)}
\end{aligned}
\end{equation}
\end{proof}
We can see for the linear space $\mathcal{F}_N$, the localized uniform entropy is basically $O(\sqrt{N}\delta)$ (if we omit the $\sqrt{\log(1/\delta)}$ term). When we construct the online projection estimator, the dimension of hypothesis space $N$ increases with sample size (we can also call $\mathcal{F}_N$ a sieve). As we will see later, the local diameter $\delta = \delta_n$ we consider decreases to zero at rate $\Theta(n^{-\frac{\alpha}{2\alpha+d}})$.

% Before we continue our specific case, we are going to state some general results regarding nonparametric least square problem with heavy tail noise. Let $\hat g_n$, $g_{\rho}$ denote the functions minimizing empirical/population mean square error over a function class $\mathcal{G}$, that is:
% \begin{equation}
%     \begin{aligned}
%         \hat g_n &:= \underset{g \in \mathcal{G}}{\operatorname{argmin}} \frac{1}{n} \sum_{i=1}^{n}\left(Y_{i}-g\left(X_{i}\right)\right)^{2}\\
%         g_{\rho} & :=\underset{g \in \mathcal{G}}{\operatorname{argmin}} \int_{\mathcal{X}\times \mathbb{R}}\left(Y-g\left(X\right)\right)^{2}d\rho(X,Y)
%     \end{aligned}
% \end{equation}
We use $\epsilon_i = Y_i - g_{\rho}(X_i),i=1,2,..,n$ to denote the i.i.d zero-mean noise variables and use $e_i, i=1,2,..,n$ to denote $n$ i.i.d. Rademacher variable, that is $\mathbb{P}(e_1 = 1) = \mathbb{P}(e_1 = -1) = \frac{1}{2}$.

In the following Proposition we require the noise to have a finite $\|\epsilon_i\|_{m,1}$-moment, which is defined as
\begin{equation}
\|\epsilon\|_{m, 1} := \int_{0}^{\infty} \mathbb{P}(|\epsilon|>t)^{1 / m} \mathrm{d} t
\end{equation}
Let $\Delta>0$, it is known that if $\epsilon_1$ has a finite $m+\Delta$-th moment, then it has a finite $\|\cdot\|_{m,1}$-moment \cite[Chapter~10]{ledoux2013probability}. So requiring having a finite $\|\cdot\|_{m,1}$, as assumed in (A1), is only slightly stronger than requiring a finite $m$-th moment.
% \begin{proposition}[Proposition~2, \cite{han2019convergence}]
% % \label{prop:mainconnection}
% Suppose $\mathcal{G}-g_{\rho} = \{g-g_{\rho}\ |\ g\in\mathcal{G}\}$ has a finite envelope function $G(x)\leq 1$. The noise has finite $\left\|\epsilon_{i}\right\|_{m,1}$-norm for some $m > 1$. Let $X_1,...,X_n \stackrel{i.i.d.}{\sim}P$. Assume that for any $\delta$ we have,
% \begin{equation}
% \label{eq:localmaxnoise}
% \mathbb{E} \sup _{g \in \mathcal{G}:\left\|g-g_{\rho}\right\|_{L_{2}(P)} \leq  \delta}\left|\frac{1}{\sqrt{n}} \sum_{i=1}^{n} \epsilon_{i}\left(g-g_{\rho}\right)\left(X_{i}\right)\right| = O\left(\phi_{n}(\delta)\right)
% \end{equation}
% and
% \begin{equation}
% \label{eq:localmaxrad}
% \mathbb{E} \sup _{g \in \mathcal{G}:\left\|g-g_{\rho}\right\|_{L^2(P)} \leq  \delta}\left|\frac{1}{\sqrt{n}} \sum_{i=1}^{n} e_{i}\left(g-g_{\rho}\right)\left(X_{i}\right)\right| = O\left(\phi_{n}(\delta)\right)
% \end{equation}
% for some $\phi_{n}$ such that $\delta \mapsto \phi_{n}(\delta) / \delta$ is nonincreasing. 

% Then 
% \begin{equation}
%     \left\|\hat{g}_{n}-g_{\rho}\right\|_{L_{2}(P)}=O_P\left(\delta_{n}\right)
% \end{equation}
% for any $\delta_{n} \geq n^{-\frac{1}{2}+\frac{1}{2m}}$ such that $\phi_{n}\left(\delta_{n}\right) \leq \sqrt{n} \delta_{n}^{2}$. If we further have $\|\epsilon_i\|_m <\infty$ for some $m\geq 2$, then:
% \begin{equation}
%     \mathbb{E}\left[\left\|\hat{g}_{n}-g_{\rho}\right\|_{L_{2}(P)}\right]=O\left(\delta_{n}\right)
% \end{equation}
% \end{proposition}
Now we state and prove a proposition that connects the bounds on the multiplier/Rademacher process to the convergence rate of our M-estimator. This proposition is essentially the same as Theorem~3.4.1 in \cite{van1996weak} and is a slight generalization of Proposition~2 in \cite{han2019convergence}. In Proposition~\ref{prop:mainconnection}, for better presentation we drop the subscript of $\mathcal{F}_N$ and simply denote it as $\mathcal{F}$. But we should keep in mind that $\mathcal{F}$ is a function space that depends on $n$.
\begin{proposition}
\label{prop:mainconnection}
Denote $\mathcal{F}-f_{\rho} := \{f-f_{\rho}\ |\ f\in\mathcal{F}\}$ and $\mathcal{F} - f_{N} := \{f-f_N\ |\ f\in\mathcal{F}\}$. Assume $(\mathcal{F}-f_{\rho})\bigcup(\mathcal{F}-f_N)$ has an envelope function $F(x)\leq 1$. Let $X_i\stackrel{i.i.d.}{\sim}\rho_X$ and assume $\epsilon_i$ are i.i.d. with finite $\left\|\epsilon_1\right\|_{m,1}$-norm for some $m > 1$.  Assume that for any $\delta\geq 0$, for each $f^*\in\{f_{\rho}, f_N\}$,
\begin{equation}
\label{eq:localmaxnoise}
\mathbb{E} \sup _{f \in \mathcal{F}:\left\|f-f^*\right\|_2 \leq  \delta}\left|\frac{1}{\sqrt{n}} \sum_{i=1}^{n} \epsilon_{i}\left(f-f^*\right)\left(X_{i}\right)\right| = O\left(\phi_{n}(\delta)\right)
\end{equation}
and
\begin{equation}
\label{eq:localmaxrad}
\mathbb{E} \sup _{f \in \mathcal{F}:\left\|f-f^*\right\|_2 \leq  \delta}\left|\frac{1}{\sqrt{n}} \sum_{i=1}^{n} e_{i}\left(f-f^*\right)\left(X_{i}\right)\right| = O\left(\phi_{n}(\delta)\right)
\end{equation}
for some $\phi_{n}$ such that $\delta \mapsto \phi_{n}(\delta) / \delta$ is nonincreasing. Further assume that $\|f_N - f_{\rho}\|_2 \leq C\delta_n$.

Then 
\begin{equation}
    \left\|\hat f_{n,N}-f_N\right\|_2=O_P\left(\delta_{n}\right)
\end{equation}
for any $\delta_{n} \geq n^{-\frac{1}{2}+\frac{1}{2m}}$ such that $\phi_{n}\left(\delta_{n}\right) \leq \sqrt{n} \delta_{n}^{2}$. If $\epsilon_1$ has a finite $m$-th moment for some $m\geq 2$, then:
\begin{equation}
    \mathbb{E}\left[\left\|\hat f_{n,N}-f_N\right\|_2\right]=O\left(\delta_{n}\right)
\end{equation}
\end{proposition}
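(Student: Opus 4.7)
The plan is to follow the classical rate-of-convergence argument for empirical risk minimizers (in the spirit of \cite[Theorem~3.4.1]{van1996weak}), treating the assumed localized moduli \eqref{eq:localmaxnoise}--\eqref{eq:localmaxrad} as black-box bounds on the relevant empirical processes. The starting point is the basic inequality obtained from the ERM property of $\hat f_{n,N}$: expanding $(Y_i - \hat f_{n,N}(X_i))^2 \leq (Y_i - f_N(X_i))^2$ and substituting $Y_i = f_\rho(X_i) + \epsilon_i$ yields, after elementary algebra,
\begin{equation*}
\|\hat f_{n,N} - f_N\|_n^2 \;\leq\; \frac{2}{n}\sum_{i=1}^n \epsilon_i\, (\hat f_{n,N} - f_N)(X_i) \;+\; \frac{2}{n}\sum_{i=1}^n (f_\rho - f_N)(X_i)\,(\hat f_{n,N} - f_N)(X_i),
\end{equation*}
where $\|\cdot\|_n$ denotes the empirical $L^2$-norm. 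The first term on the right is the localized multiplier process controlled by \eqref{eq:localmaxnoise}. The second is a cross term whose population expectation vanishes since $f_N$ is the $L^2_{\rho_X}$-projection of $f_\rho$ onto (the essentially linear) $\mathcal{F}_N$, so that its empirical--population discrepancy is a Rademacher-type object controlled by \eqref{eq:localmaxrad} (a symmetrization argument applied to the class of products $\{(f_\rho - f_N)(f - f_N) : f \in \mathcal{F}_N\}$).

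Next I would run the standard peeling argument. For integer $j \geq 1$, let $S_j = \{f \in \mathcal{F}_N : 2^{j-1}\delta_n < \|f - f_N\|_2 \leq 2^j \delta_n\}$. A standard uniform ratio inequality on shells of VC-type classes with bounded envelope (available via Lemma~\ref{lemma:entropybound}) lets us pass from $\|\cdot\|_n$ to $\|\cdot\|_2$ on each shell at the cost of an additional Rademacher modulus. On $\{\hat f_{n,N} \in S_j\}$, combining the basic inequality with \eqref{eq:localmaxnoise}--\eqref{eq:localmaxrad} and $\|f_\rho - f_N\|_2 \leq C\delta_n$ gives, after taking expectations and applying Markov's inequality,
\begin{equation*}
\mathbb{P}(\hat f_{n,N} \in S_j) \;\lesssim\; \frac{\phi_n(2^j \delta_n)}{\sqrt{n}\,(2^{j-1}\delta_n)^2} \;+\; \frac{\delta_n}{2^{j-1}\delta_n}.
\end{equation*}
The monotonicity $\phi_n(\delta)/\delta$ nonincreasing yields $\phi_n(2^j\delta_n) \leq 2^j \phi_n(\delta_n)$, and the hypothesis $\phi_n(\delta_n) \leq \sqrt{n}\,\delta_n^2$ then makes both summands geometrically decaying in $j$. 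Summing the shell probabilities for $j \geq J$ and letting $J \to \infty$ recovers $\|\hat f_{n,N} - f_N\|_2 = O_P(\delta_n)$.

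The delicate point, and the main obstacle, is that the compatibility of the multiplier modulus in \eqref{eq:localmaxnoise} with the Rademacher modulus in \eqref{eq:localmaxrad} under only the moment assumption $\|\epsilon_1\|_{m,1} < \infty$ is exactly what forces the floor $\delta_n \geq n^{-1/2 + 1/(2m)}$. The sharp multiplier inequality of Han \cite[Theorem~1]{han2019convergence} upper-bounds the multiplier process on a $\delta$-ball by a constant multiple of the Rademacher process on the same ball plus an additive term of order $n^{1/m}\|\epsilon_1\|_{m,1}$ times the envelope. Normalized by $\sqrt{n}$, this contributes an $n^{-1/2 + 1/m}$ term to $\phi_n$; the requirement $\phi_n(\delta_n) \leq \sqrt{n}\,\delta_n^2$ then forces $\delta_n^2 \gtrsim n^{-1 + 1/m}$, i.e.\ precisely the stated floor. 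Going below it invalidates \eqref{eq:localmaxnoise} under the assumed moment condition alone.

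Finally, for the in-expectation conclusion when $m \geq 2$, I would take expectations throughout the peeling argument from the outset. When $m \geq 2$, the noise has finite variance, so the expected-supremum bound \eqref{eq:localmaxnoise} can be combined with Fubini directly rather than invoking Markov's inequality shell-by-shell. Together with the a.s.\ control $\|\hat f_{n,N} - f_N\|_\infty \leq 2$ coming from the envelope $F \leq 1$, the tails beyond the rate scale $\delta_n$ are uniformly integrable; a direct summation of the shell contributions, weighted by the shell diameters, upgrades $O_P(\delta_n)$ to $\mathbb{E}\|\hat f_{n,N} - f_N\|_2 = O(\delta_n)$.
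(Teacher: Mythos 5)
Your proposal is correct and follows essentially the same route as the paper's proof: both run the standard peeling argument of van der Vaart--Wellner Theorem~3.4.1 / Han's Proposition~2, driven by the ERM property, the assumed localized multiplier and Rademacher moduli, the monotonicity of $\phi_n(\delta)/\delta$, the fixed-point condition $\phi_n(\delta_n)\leq\sqrt{n}\,\delta_n^2$, and the bias control $\|f_N-f_\rho\|_2\leq C\delta_n$ used to recenter the shells at $f_N$. The only difference is organizational: you start from the basic inequality in the empirical norm and pass to $\|\cdot\|_2$ via a ratio inequality on shells, whereas the paper centers the process $\mathbb{M}_n$ at its mean $M$ (whose quadratic drop around $f_N$ is already in the population norm) and controls the resulting $(\mathbb{P}_n-P)(f-f_\rho)^2$ terms by symmetrization and contraction --- equivalent devices.
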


\begin{proof}
The proof is a slight generalization of Proposition~2 in \cite{han2019convergence}. The distance we are going to bound is not between $\hat f_{n,N}$ and $f_{\rho}$ but between $\hat f_{n,N}$ and $f_N$ (the population risk minimizer over $\mathcal{F}$). 
We first define a random process and its mean functional:
\begin{equation}
    \begin{aligned}
    &\mathbb{M}_{n} f := \frac{2}{n} \sum_{i=1}^{n}\left(f-f_{\rho}\right)\left(X_{i}\right) \epsilon_{i}-
    \frac{1}{n} \sum_{i=1}^{n}\left(f-f_{\rho}\right)^{2}\left(X_{i}\right)\\
    & M f := \mathbb{E}\left[\mathbb{M}_{n}(f)\right]=-P\left(f-f_{\rho}\right)^{2}
    \end{aligned}
\end{equation}
We have the following property of $M(\cdot)$. For any $f\in\{f\in\mathcal{F}\ |\ \|f-f_N\|_2 \geq 4\|f_N-f_{\rho}\|_2\}$, $Mf - Mf_N \leq -\frac{1}{4}\|f-f_N\|_2^2$. For the proof of this elementary inequality, see p.337 Exercise 5 in \cite{van1996weak}, taking their $x = f, y=f_N, z = f_{\rho}$. 

Our proof is a standard peeling argument. Let 
\begin{equation}
    \mathcal{F}_j:=\left\{f \in \mathcal{F}: 2^{j-1} t \delta_{n} \leq\left\|f-f_N\right\|_2< 2^jt\delta_n\right\}
\end{equation}
We choose a fixed $t$ large enough such that $t\delta_n \geq 4\|f_N-f_{\rho}\|_2$, we use the ERM property of $\hat f_{n,N}$:
\begin{equation}
    \begin{aligned}
        &\mathbb{P}\left(\left\|\hat f_{n,N}-f_N\right\|_2 \geq t \delta_{n}\right) \leq \sum_{j \geq 1} \mathbb{P}\left(\sup _{f \in \mathcal{F}_j}\left(\mathbb{M}_{n}(f)-\mathbb{M}_{n}\left(f_N\right)\right) \geq 0\right)\\
        % &\leq \sum_{j\geq 1}\mathbb{P}\left(\sup _{f \in \mathcal{F}_j}\left(\mathbb{M}_{n}(f)-\mathbb{M}_{n}\left(f_N\right)\right)+\sup_{f\in\mathcal{F}_j}M(f)-M(f_N)\geq \sup_{f\in\mathcal{F}_j}M(f)-M(f_N)\right)\\
        % &\leq \sum_{j\geq 1}\mathbb{P}\left(\sup _{f \in \mathcal{F}_j}\left(\mathbb{M}_{n}(f)-\mathbb{M}_{n}\left(f_N\right)\right)-\sup_{f\in\mathcal{F}_j}M(f)-M(f_N)\geq -\left(\sup_{f\in\mathcal{F}_j}M(f)-M(f_N)\right)\right)\\
        % &\leq \sum_{j\geq 1}\mathbb{P}\left(\sup _{f \in \mathcal{F}_j}\left(\mathbb{M}_{n}(f)-\mathbb{M}_{n}\left(f_N\right)\right)-\sup_{f\in\mathcal{F}_j}M(f)-M(f_N)\geq \sup_{f\in\mathcal{F}_j}\frac{1}{4}\|f-f_N\|_2^2\right)\\
        &\leq \sum_{j\geq 1}\mathbb{P}\left(\sup _{f \in \mathcal{F}_j}\left(\mathbb{M}_{n}(f)-\mathbb{M}_{n}\left(f_N\right)-M(f)+M(f_N)\right)\geq 2^{2j-2}t^2\delta_n^2\right)\\
    \end{aligned}
\end{equation}
We write $\left(\mathbb{M}_{n}(f)-\mathbb{M}_{n}\left(f_N\right)-M(f)+M(f_N)\right)$ explicitly:
\begin{equation}
\begin{aligned}
    &\mathbb{M}_{n}(f)-\mathbb{M}_{n}\left(f_N\right)-M(f)+M(f_N)\\
    & = \frac{2}{n}\sum_{i=1}^n(f - f_N)(X_i)\epsilon_i +(P-\mathbb{P}_n)(f-f_{\rho})^2 + (\mathbb{P}_n - P)(f_N-f_{\rho})^2
\end{aligned}
\end{equation}
Then we can continue the peeling argument:
\begin{equation}
    \begin{aligned}
    &\mathbb{P}\left(\left\|\hat f_{n,N}-f_N\right\|_2 \geq t \delta_{n}\right)\\
    &\leq \sum_{j\geq 1}\mathbb{P}\left(\sup _{f\in\mathcal{F}:\|f-f_N\|_2\leq 2^jt\delta_n}\left|\frac{1}{\sqrt{n}}\sum_{i=1}^n(f-f_N)(X_i)\epsilon_i\right|\geq 2^{2j-5}t^2\sqrt{n}\delta_n^2\right)+\\
    &\mathbb{P}\left(\sup _{f\in\mathcal{F}:\|f-f_N\|_2\leq 2^jt\delta_n}\left|\frac{1}{\sqrt{n}}\sum_{i=1}^n(f-f_{\rho})^2(X_i) - \mathbb{E}(f-f_{\rho})^2\right|\geq 2^{2j-4}t^2\sqrt{n}\delta_n^2\right)+\\
    &\mathbb{P}\left(\left|\frac{1}{\sqrt{n}}\sum_{i=1}^n(f_N-f_{\rho})^2(X_i) - \mathbb{E}(f_N-f_{\rho})^2\right|\geq 2^{2j-4}t^2\sqrt{n}\delta_n^2\right)\\
    &\leq \sum_{j\geq 1}\mathbb{P}\left(\sup _{f\in\mathcal{F}:\|f-f_N\|_2\leq 2^jt\delta_n}\left|\frac{1}{\sqrt{n}}\sum_{i=1}^n(f-f_N)(X_i)\epsilon_i\right|\geq 2^{2j-5}t^2\sqrt{n}\delta_n^2\right)+\\
    &2\mathbb{P}\left(\sup _{f\in\mathcal{F}:\|f-f_N\|_2\leq 2^jt\delta_n}\left|\frac{1}{\sqrt{n}}\sum_{i=1}^n(f-f_{\rho})^2(X_i) - \mathbb{E}(f-f_{\rho})^2\right|\geq 2^{2j-4}t^2\sqrt{n}\delta_n^2\right)
    \end{aligned}
\end{equation}
The first term is the multiplier process that contains the noise variable $\epsilon_i$'s, for which we have bound (given by our assumptions). The second term can be related to the Rademacher process by standard symmetrization and contraction principles \citep{van1996weak}. There is still a miss-match between the supremum and the random variable to be bounded, to fix this we need to use the condition $\|f_N-f_{\rho}\|_2 \leq C\delta_n$:
\begin{equation}
\begin{aligned}
\|f-f_{\rho}\|_2&\leq\|f-f_N\| + \|f_N - f_{\rho}\|_2\\
&\leq \|f-f_N\|+C\delta_n\\
\Rightarrow \{f\in\mathcal{F}:\|f-f_N\|\leq 2^jt\delta_n\}
&\subset\{f\in\mathcal{F}:\|f-f_{\rho}\|_2\leq (2^{j}t+C)\delta_n\}
\end{aligned}
\end{equation}
Therefore the second term is bounded by
\begin{equation}
    2\mathbb{P}\left(\sup _{f\in\mathcal{F}:\|f-f_{\rho}\|_2\leq (2^jt+C)\delta_n}\left|\frac{1}{\sqrt{n}}\sum_{i=1}^n(f-f_{\rho})^2(X_i) - \mathbb{E}(f-f_{\rho})^2\right|\geq 2^{2j-4}t^2\sqrt{n}\delta_n^2\right)
\end{equation}
And the rest of the proof is the same as Proposition~2 in \cite{han2019convergence}.
\end{proof}
When $\epsilon_i$ is sub-Gaussian noise (note that sub-Gaussian/sub-exponential random variables have finite moments of all orders), the bound on the empirical process terms \eqref{eq:localmaxnoise} and \eqref{eq:localmaxrad} usually only depend on the entropy of $\mathcal{F}_N$: Thus the convergence rate will only depend on the entropy as well. However if we only assume moment conditions, then $\phi_n(\delta)$ will depend on both the entropy \emph{and} the moment order \cite[Lemma~9]{han2019convergence}: Thus the convergence rate would depend on both  as well when $m$ is not large enough. 

Now we state the following Lemma to complete our bound of $\mathbb{E}\|\hat f_{n,N} - f_N\|_2$. Its proof is postponed to after we conclude the main result.
\begin{lemma}
\label{lemma:multiplierprocess}
Assume (A1) and $\hat f_{n,N}\in \mathcal{F}_N$ defined in \eqref{eq:defineFN}. We select $N = \Theta\left( n^{\frac{d}{2\alpha + d}}\right)$. (Recall that $\alpha$ is the smoothness parameter, $d$ is the dimension of $X_i$ and $m$ is the moment index of $\epsilon_i$)

Then with $\delta_{n} = \Theta\left( n^{-\frac{\alpha}{2\alpha+d}} \vee n^{-\frac{1}{2}+\frac{1}{2m}}\right)$, for each $f^*\in\{f_N,f_{\rho}\}$ we have
\begin{equation}
\label{eq:multiplierprocess}
\begin{aligned}
\mathbb{E} \sup _{f\in\mathcal{F}_N:\|f-f^*\|_2 \leq \delta_{n}}&\left|\sum_{i=1}^{n} \epsilon_{i} \left(f-f^*\right)\left(X_{i}\right)\right|
 \vee \mathbb{E} \sup _{f\in\mathcal{F}_N:\|f-f^*\|_2 \leq \delta_{n}}\left|\sum_{i=1}^{n} e_{i} \left(f - f^*\right)\left(X_{i}\right)\right|\\
& \leq C_{\alpha}\left\{
\begin{array}{ll}
n^{\frac{d}{2\alpha+d}}\sqrt{\log n}\left(1 \vee\left\|\epsilon_{1}\right\|_{2\alpha+1, 1}\right), & m \geq 2\alpha/d + 1\\
n^{\frac{1}{m}}\sqrt{\log n}\left(1 \vee\left\|\epsilon_{1}\right\|_{m, 1}\right), & 1 \leq m< 2\alpha/d + 1
\end{array}
\right.
\end{aligned}
\end{equation}    
where $\|\epsilon_1\|_{2\alpha+1}$ is the $2\alpha+1$-th moment of $\epsilon_1$.
\end{lemma}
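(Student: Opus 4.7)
The plan is to bound the Rademacher process first using a local entropy-integral argument built on Lemma~\ref{lemma:entropybound}, and then transfer the bound to the multiplier process via the sharp multiplier inequality of Han and Wellner (Theorem~1 of \cite{han2019convergence}), which explicitly quantifies the price paid when the noise has only finitely many moments. The two regimes in the statement correspond exactly to whether the multiplier-to-Rademacher transfer is ``free'' (when $m\geq 2\alpha/d+1$) or whether the heavy-tail term dominates (when $m<2\alpha/d+1$).

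First I would treat the Rademacher process. For each $f^{*}\in\{f_N,f_\rho\}$, the class $\mathcal{G}_{\delta}:=\{f-f^{*}:f\in\mathcal{F}_N,\ \|f-f^{*}\|_2\le \delta\}$ is a translate of a subset of a VC-subgraph class (Proposition~\ref{prop:vclinear}); it has VC-index at most $N+3$ and envelope bounded by a constant depending on $M$ (using that $f^{*}$ is bounded, which follows for $f_N$ from $\|\hat f_{n,N}\|_\infty\le M$ applied in the limit and for $f_\rho$ from (A2) together with compactness of $\mathcal{X}$). By the standard local maximal inequality (e.g.\ Koltchinskii--Pollard, or Theorem~2.14.1 of \cite{van1996weak}),
\begin{equation*}
\mathbb{E}\sup_{g\in\mathcal{G}_{\delta_n}}\Big|\sum_{i=1}^{n} e_i\, g(X_i)\Big|\ \lesssim\ \sqrt{n}\,J(\delta_n,\mathcal{G}_{\delta_n},L_2)\ \lesssim\ \sqrt{n\,N\,\delta_n^{2}\,\log(1/\delta_n)},
\end{equation*}
where the last step uses Lemma~\ref{lemma:entropybound}. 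Substituting $N\asymp n^{d/(2\alpha+d)}$ and $\delta_n = n^{-\alpha/(2\alpha+d)}\vee n^{-1/2+1/(2m)}$ and simplifying shows that both choices of $\delta_n$ drive the Rademacher bound to $n^{d/(2\alpha+d)}\sqrt{\log n}$ in the light-tail regime, giving the Rademacher half of \eqref{eq:multiplierprocess}.

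Next I would transfer to the multiplier process. By the sharp multiplier inequality in \cite{han2019convergence},
\begin{equation*}
\mathbb{E}\sup_{g\in\mathcal{G}_{\delta_n}}\Big|\sum_{i=1}^{n}\epsilon_i\, g(X_i)\Big|\ \lesssim\ \|\epsilon_1\|_{m,1}\sum_{k=1}^{n}\frac{1}{\sqrt{k}}\,\mathbb{E}\sup_{g\in\mathcal{G}_{\delta_n}}\Big|\frac{1}{\sqrt{k}}\sum_{i=1}^{k} e_i\, g(X_i)\Big|\ +\ C\,n^{1/m}\|F\|_\infty,
\end{equation*}
where $F$ is the envelope of $\mathcal{G}_{\delta_n}$. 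Plugging in the previous display (uniformly in $k$) and performing the sum $\sum_{k=1}^n k^{-1/2}\cdot k^{d/(2\alpha+d)-1/2}\sqrt{\log k}$ produces $n^{d/(2\alpha+d)}\sqrt{\log n}$ up to constants depending on $\|\epsilon_1\|_{2\alpha+1,1}$. When $m\ge 2\alpha/d+1$, the envelope term $n^{1/m}$ is dominated by $n^{d/(2\alpha+d)}$ and the Rademacher contribution governs the rate; when $1\le m<2\alpha/d+1$, the moment/envelope term $n^{1/m}\sqrt{\log n}$ takes over and yields the second branch of the bound. The same $\delta_n$ serves for both processes because its definition already includes the crossover point $n^{-1/2+1/(2m)}$ at which the two regimes meet.

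The main obstacle, in my view, is the careful bookkeeping in the sum produced by the multiplier inequality: one has to check that the Rademacher bound $\phi_k(\delta_n) \lesssim \sqrt{kN\delta_n^{2}\log(1/\delta_n)}$ (with $N$ fixed to its value at sample size $n$) can be inserted inside the sum over $k\le n$ without losing the optimal rate, and to verify that the subordinate envelope term is exactly of order $n^{1/m}\sqrt{\log n}$ rather than $n^{1/m}$. A secondary technicality is that the substituted $f^{*}$ is not in $\mathcal{F}_N$ when $f^{*}=f_\rho$, so one must verify that the augmented class $\mathcal{F}_N-f_\rho$ inherits both the VC and envelope bounds up to constants, which it does since $f_\rho$ is a fixed bounded function under (A2).
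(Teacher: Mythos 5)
Your strategy coincides with the paper's: bound the localized Rademacher process via the entropy integral of Lemma~\ref{lemma:entropybound} and Proposition~\ref{prop:entropybound}, then transfer to the multiplier process with the Han--Wellner inequality and split cases at $m=2\alpha/d+1$; the rate bookkeeping you sketch ($\sqrt{nN\delta_n^2\log(1/\delta_n)}\asymp n^{d/(2\alpha+d)}\sqrt{\log n}$, crossover when $1/m$ exceeds $d/(2\alpha+d)$) is exactly what the paper does. The one point to fix is the displayed form of the multiplier inequality: what you wrote, a sum $\|\epsilon_1\|_{m,1}\sum_k k^{-1/2}(\cdots)$ plus an additive envelope term $n^{1/m}\|F\|_\infty$, is not the statement of Theorem~1 of \cite{han2019convergence} (reproduced as Proposition~\ref{prop:boundbyhan}), and the classical max-form multiplier inequality of \cite{van1996weak} would in fact fail here (it returns $\sqrt{n}\,\|\epsilon\|_{2,1}$, which is worse than $n^{d/(2\alpha+d)}$ since $\alpha>d/2$). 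The paper instead takes the concave majorant $\psi_n(k)=Ck^{d/(2\alpha+d)}\sqrt{\log k}$ (resp.\ $Ck^{1/m}\sqrt{\log k}$ in the heavy-tailed regime) from the Rademacher bound and evaluates
\begin{equation*}
4\int_{0}^{\infty}\psi_{n}\Bigl(\sum_{i=1}^{n}\mathbb{P}(|\epsilon_{i}|>t)\Bigr)\,dt ,
\end{equation*}
which is precisely where the factors $\bigl(1\vee\|\epsilon_1\|_{2\alpha+1,1}\bigr)$ versus $\bigl(1\vee\|\epsilon_1\|_{m,1}\bigr)$ and the threshold $m=2\alpha/d+1$ (i.e.\ $m\gamma\geq 1$ for $\gamma=d/(2\alpha+d)$) come from; you should route the argument through this integral rather than the ad hoc sum-plus-envelope bound. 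With that substitution your proof is the paper's proof.
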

In light of Proposition~\ref{prop:mainconnection}, \eqref{eq:multiplierprocess}
can be written as
\begin{equation}
\begin{aligned}
&\mathbb{E} \sup _{f\in\mathcal{F}_N:\|f-f^*\|_2 \leq \delta_{n}}\left|\frac{1}{\sqrt{n}}\sum_{i=1}^{n} \epsilon_{i} \left(f - f^*\right)\left(X_{i}\right)\right|
\\&\quad\vee\ 
\mathbb{E} \sup _{f\in\mathcal{F}_N:\|f-f^*\|_2 \leq \delta_{n}}\left|\frac{1}{\sqrt{n}}\sum_{i=1}^{n} e_{i} \left(f - f^*\right)\left(X_{i}\right)\right|
\leq \phi_n(\delta_n)
\end{aligned}
\end{equation}
where 
\begin{equation}
    \phi_n(\delta) = 
    \left\{
\begin{array}{ll}
C_{\alpha}\sqrt{\log n/n}\delta^{-1/\alpha}\left(1 \vee\left\|\epsilon_{1}\right\|_{1+2\alpha, 1}\right), & m \geq 1+2\alpha \\
C_{\alpha}\sqrt{\log n/n}\delta^{-2/(m-1)}\left(1 \vee\left\|\epsilon_{1}\right\|_{m, 1}\right), & 1 \leq m<1+2\alpha
\end{array}
\right.
\end{equation}
\begin{lemma}
\label{lemma:lemmaMestimator}
Assume (A1) and $\hat f_{n,N}\in \mathcal{F}_N$. Choosing $N = \Theta( n^{\frac{d}{2\alpha + d}})$,
\begin{equation}
\label{mainresult}
    E[\|\hat f_{n,N} - f_N\|_2] = O\left(n^{-\frac{\alpha}{2\alpha+d}} \sqrt{\log n}\vee n^{-\frac{1}{2}+\frac{1}{2 m}} \sqrt{\log n}\right)
\end{equation}
\end{lemma}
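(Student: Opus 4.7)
The plan is to derive the lemma as a direct corollary of Proposition~\ref{prop:mainconnection}, feeding it the local-maximum bounds supplied by Lemma~\ref{lemma:multiplierprocess} and the approximation-error bound supplied by Lemma~\ref{lemma:lemmabias}. Concretely, I would take
\begin{equation*}
\delta_n \;=\; n^{-\frac{\alpha}{2\alpha+d}}\sqrt{\log n}\ \vee\ n^{-\frac{1}{2}+\frac{1}{2m}}\sqrt{\log n}
\end{equation*}
and verify the four hypotheses of Proposition~\ref{prop:mainconnection} one by one.

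First I would observe that $\hat f_{n,N}$ is exactly the empirical squared-loss minimizer over $\mathcal{F}_N$ and $f_N$ is its population counterpart, so the ERM framework of the Proposition applies. Functions in $\mathcal{F}_N$ are uniformly bounded by $M$, so a trivial rescaling handles the envelope condition $F\le 1$. The bounds \eqref{eq:localmaxnoise}--\eqref{eq:localmaxrad} for both $f^{*}=f_\rho$ and $f^{*}=f_N$ are exactly what Lemma~\ref{lemma:multiplierprocess} delivers, and the display immediately preceding the statement of this lemma identifies the relevant $\phi_n(\delta)$ as a negative power of $\delta$ in each regime, so $\delta\mapsto\phi_n(\delta)/\delta$ is monotone non-increasing. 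The requirement $\delta_n\ge n^{-1/2+1/(2m)}$ is built into $\delta_n$ by construction, and the approximation-error condition $\|f_N-f_\rho\|_2\le C\delta_n$ is immediate from Lemma~\ref{lemma:lemmabias}, which yields the stronger rate $O(n^{-\alpha/(2\alpha+d)})$.

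The remaining hypothesis, $\phi_n(\delta_n)\le\sqrt{n}\,\delta_n^{2}$, is what I would spend the most effort on, and splits along the same two tail regimes as the multiplier bound. In the light-tail regime $m\ge 2\alpha/d+1$, the first term in $\delta_n$ dominates; substituting $\phi_n(\delta)\asymp\sqrt{\log n/n}\,\delta^{-1/\alpha}$ gives $\phi_n(\delta_n)\asymp n^{-1/2+d/(2\alpha+d)}(\log n)^{1/2-1/(2\alpha)}$, while $\sqrt{n}\,\delta_n^{2}\asymp n^{-1/2+d/(2\alpha+d)}\log n$ carries extra logarithmic slack. In the heavy-tail regime $1<m<2\alpha/d+1$, the second term in $\delta_n$ dominates; substituting $\phi_n(\delta)\asymp\sqrt{\log n/n}\,\delta^{-2/(m-1)}$ yields the analogous inequality with matching log-slack. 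The $\sqrt{\log n}$ inflation baked into $\delta_n$ is precisely calibrated so that these verifications go through in both regimes simultaneously.

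With all hypotheses verified, Proposition~\ref{prop:mainconnection} yields $\|\hat f_{n,N}-f_N\|_2=O_P(\delta_n)$, and the expectation bound claimed in the lemma follows from the second clause of the Proposition under the additional assumption $m\ge 2$. The main obstacle is purely bookkeeping: keeping the two tail regimes straight and ensuring the logarithmic factors line up so that $\phi_n(\delta_n)\le\sqrt{n}\,\delta_n^{2}$ really does hold in each. All of the conceptual content is already packaged in the supporting results.
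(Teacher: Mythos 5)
Your proposal is correct and follows exactly the paper's route: the paper's own proof is simply to feed Lemma~\ref{lemma:multiplierprocess} into Proposition~\ref{prop:mainconnection} and identify the smallest $\delta_n$ with $\phi_n(\delta_n)\le\sqrt{n}\,\delta_n^2$, which is precisely your calculation (your light-tail exponent $\delta^{-1/\alpha}$ should read $\delta^{-d/\alpha}$ for general $d$, an inconsistency inherited from the paper's own display of $\phi_n$, and it does not affect the verification). Your explicit check of the remaining hypotheses and the $m\ge 2$ caveat for the expectation bound is more careful than the paper's two-sentence proof but adds nothing essentially different.
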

\begin{proof}
We use the result of Lemma~\ref{lemma:multiplierprocess} as conditions of Proposition~\ref{prop:mainconnection}, and then identfy the smallest $\delta_n$ satisfying $\phi_n(\delta_n)\leq \sqrt{n}\delta_n^2$, which will give the stated convergence rate.
\end{proof}
\begin{proof}[Proof of Theorem~3]
We need only combine the bounds in Lemma~\ref{lemma:lemmabias} and Lemma~\ref{lemma:lemmaMestimator} using the triangle inequality.
\end{proof}
We now return to proving Lemma~\ref{lemma:multiplierprocess}. We first state two results, Propositions~\ref{prop:entropybound}, and~\ref{prop:boundbyhan}, from the literature which we will use to prove our Lemma. We begin with a standard result connecting Rademacher complexity and the entropy integral.
\begin{proposition}[Theorem~2.1, \cite{van2011local}]
\label{prop:entropybound}
Suppose that $\mathcal{G}$ has a finite envelope $G(x)\leq 1$ and $X_{1}, \ldots, X_{n}$ 's are i.i.d. random variables with law $P$. 

Then with $\mathcal{G}(\delta):=\left\{g \in \mathcal{G}: P g^{2}<\delta^{2}\right\}$,
\begin{equation}
\mathbb{E}\sup_{g\in\mathcal{G(\delta)}}\left|\frac{1}{\sqrt{n}}\sum_{i=1}^{n} e_{i} g\left(X_{i}\right)\right | = O\left( J\left(\delta, \mathcal{G}, L_{2}\right)\left(1+\frac{J\left(\delta, \mathcal{G}, L_{2}\right)}{\sqrt{n} \delta^{2}\|G\|_{P, 2}}\right)\|G\|_{P, 2}\right)
\end{equation}
\end{proposition}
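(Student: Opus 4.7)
The plan is to combine Dudley's entropy integral, applied conditionally on $(X_1,\ldots,X_n)$, with a bootstrapping argument that relates the random empirical $L^2(P_n)$ diameter of $\mathcal{G}(\delta)$ to its population $L^2(P)$ diameter $\delta$. Conditioning on the sample, the Rademacher process $g \mapsto n^{-1/2}\sum_{i=1}^n e_i g(X_i)$ is sub-Gaussian with respect to the empirical metric $\|g-h\|_{L^2(P_n)}$, so chaining yields
\begin{equation*}
\mathbb{E}_e \sup_{g\in\mathcal{G}(\delta)}\left|\frac{1}{\sqrt{n}}\sum_{i=1}^n e_i g(X_i)\right| \lesssim \int_0^{\sigma_n} \sqrt{\log \mathcal{N}(\epsilon, \mathcal{G}(\delta), L^2(P_n))}\, d\epsilon,
\end{equation*}
where $\sigma_n := \sup_{g\in\mathcal{G}(\delta)} \|g\|_{L^2(P_n)}$ is the (random) empirical radius.

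Next I would rescale the covering radius by $\|G\|_{P_n,2}$ via the substitution $\epsilon = u\|G\|_{P_n,2}$ and invoke the uniform entropy bound (the covering number under $P_n$ is dominated by the $\sup_Q$ version appearing in the definition of $J$, since $\mathcal{G}(\delta)\subset\mathcal{G}$) to rewrite the conditional bound as
\begin{equation*}
\mathbb{E}_e \sup_{g\in\mathcal{G}(\delta)}\left|\frac{1}{\sqrt{n}}\sum_{i=1}^n e_i g(X_i)\right| \lesssim \|G\|_{P_n,2}\, J\!\left(\sigma_n / \|G\|_{P_n,2},\, \mathcal{G}, L_2\right).
\end{equation*}
Taking expectation over $(X_i)$ and pushing the expectation inside $J$ via Jensen (justified by monotonicity of $\delta \mapsto J(\delta)/\delta$, itself a consequence of $\epsilon \mapsto \sqrt{1+\log\mathcal{N}}$ being nonincreasing) reduces the problem to controlling $\mathbb{E}\sigma_n^2$.

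To bound $\mathbb{E}\sigma_n^2$ I would bootstrap: write $\sigma_n^2 \leq \delta^2 + \sup_{g\in\mathcal{G}(\delta)}|(P_n-P)g^2|$ and note that because of the envelope $G\leq 1$ the squared class satisfies $|g^2-h^2|\leq 2G|g-h|\leq 2|g-h|$. Symmetrization combined with the Ledoux--Talagrand contraction principle then dominates the empirical process of $g^2$ by a constant multiple of $n^{-1/2}$ times the Rademacher process over $\mathcal{G}(\delta)$ itself. Writing $\Lambda$ for the quantity we want to bound, this produces a self-bounding inequality of the schematic form
\begin{equation*}
\Lambda \lesssim \|G\|_{P,2}\, J\!\left(\sqrt{\delta^2 + C\Lambda/\sqrt{n}}\,\big/\,\|G\|_{P,2},\, \mathcal{G}, L_2\right).
\end{equation*}
Using the quasi-subadditivity $J(\sqrt{a^2+b^2}) \leq J(a)+J(b)$ (which follows from monotonicity of the integrand) to split the right-hand side and then algebraically solving for $\Lambda$ gives $\Lambda \lesssim J(\delta)\|G\|_{P,2} + J(\delta)^2/(\sqrt{n}\delta^2)$, exactly the stated form.

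The main obstacle is that last step: closing the fixed-point inequality requires exploiting both the monotonicity of $J(\delta)/\delta$ and the $\sqrt{\,\cdot\,}$-quasi-subadditivity of $J$ simultaneously, without bleeding extra logarithmic factors. A secondary subtlety is cleanly replacing $\|G\|_{P_n,2}$ by $\|G\|_{P,2}$ when the outer expectation is moved inside the concave $J$; this is benign because $G$ is bounded and $\mathbb{E}\|G\|_{P_n,2}^2=\|G\|_{P,2}^2$, but careless applications of Jensen can easily introduce spurious cross terms that ruin the final rate.
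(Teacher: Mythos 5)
Your sketch is sound and is essentially the argument behind the cited result: the paper itself offers no proof of this proposition, importing it directly from van der Vaart and Wellner (2011), and your plan (conditional Dudley chaining in the empirical metric, passing to the uniform entropy integral, bootstrapping the empirical radius via symmetrization and Ledoux--Talagrand contraction using $G\leq 1$, then solving the resulting fixed-point inequality with subadditivity and the monotonicity of $\delta\mapsto J(\delta)/\delta$) is exactly the route taken there. The only caveat worth noting is that the Jensen step is justified by concavity of $J$ (which holds because its integrand is nonincreasing), not by monotonicity of $J(\delta)/\delta$ alone, but this does not affect the validity of the argument.
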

We next give a recent inequality established in \cite{han2019convergence}. This allows us to relax common subgaussian assumptions to only moment conditions on the $\epsilon_i$'s.

\begin{proposition}[Theorem~1,\cite{han2019convergence}]
\label{prop:boundbyhan}
Suppose $X_i$'s, $\epsilon_i$'s are all i.i.d. random variables and $X_i$'s are independent of $\epsilon_i$'s. Let $\left\{\mathcal{G}_{k}\right\}_{k=1}^{n}$ be a sequence of function classes such that $\mathcal{G}_{k} \supset \mathcal{G}_{n}$ for any $1 \leq k \leq n .$ Assume further that there
exists a nondecreasing concave function $\psi_{n}: \mathbb{R}_{\geq 0} \rightarrow \mathbb{R}_{\geq 0}$ with $\psi_{n}(0)=0$ such that
\begin{equation}
\mathbb{E}\sup_{f\in \mathcal{G}_{k}}\left |\sum_{i=1}^{k} e_{i} f\left(X_{i}\right)\right | \leq \psi_{n}(k)
\end{equation}
holds for all $1\leq k\leq n$. Then
\begin{equation}
\mathbb{E}\sup_{f\in\mathcal{G}_{n}}\left |\sum_{i=1}^{n} \epsilon_{i} f\left(X_{i}\right)\right | \leq 4 \int_{0}^{\infty} \psi_{n}\left(\sum_{i=1}^{n} \mathbb{P}\left(\left|\epsilon_{i}\right|>t\right)\right) \mathrm{d} t
\end{equation}
\end{proposition}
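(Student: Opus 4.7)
The plan is to bound the two suprema in \eqref{eq:multiplierprocess} by first handling the Rademacher process via an entropy-integral argument, then transferring to the multiplier process via Han–Wellner's inequality (Proposition~\ref{prop:boundbyhan}). For each fixed $f^* \in \{f_N, f_\rho\}$, pass to the centered, localized class
\[
\mathcal{G} \;:=\; \bigl\{g = f - f^* \;:\; f \in \mathcal{F}_N,\; \|g\|_2 \le \delta_n\bigr\}.
\]
By Proposition~\ref{prop:vclinear} this is still VC-subgraph of index $O(N)$ with envelope $\le 2M$, so Lemma~\ref{lemma:entropybound} gives $J(\delta_n,\mathcal{G},L_2) \lesssim \delta_n\sqrt{N\log(1/\delta_n)}$.

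For the Rademacher half I would plug this entropy bound into Proposition~\ref{prop:entropybound} and multiply through by $\sqrt{n}$ to obtain
\[
\mathbb{E}\sup_{g\in\mathcal{G}}\Bigl|\sum_{i=1}^n e_i g(X_i)\Bigr| \;\lesssim\; \sqrt{n}\,\delta_n\sqrt{N\log(1/\delta_n)} \;+\; N\log(1/\delta_n).
\]
With $N = \Theta(n^{d/(2\alpha+d)})$, direct substitution of the two candidate values of $\delta_n$ shows this is $\lesssim n^{d/(2\alpha+d)}\sqrt{\log n}$ in the light-tail regime ($m \ge 2\alpha/d + 1$, where $\delta_n = n^{-\alpha/(2\alpha+d)}$) and $\lesssim n^{1/m}\sqrt{\log n}$ in the heavy-tail regime ($\delta_n = n^{-1/2+1/(2m)}$, using $1/(2m) + d/(2(2\alpha+d)) \le 1/m$), matching the Rademacher half of the RHS.

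For the multiplier half I would apply Proposition~\ref{prop:boundbyhan} with $\mathcal{G}_k \equiv \mathcal{G}$. A concave, zero-at-zero majorant of the Rademacher supremum at sample size $k$ is
\[
\psi_n(k) \;=\; C\Bigl[\sqrt{k}\,\delta_n\sqrt{N\log(1/\delta_n)} \;+\; \min(k,1)\cdot N\log(1/\delta_n)\Bigr]
\]
in the light-tail regime, and $\psi_n(k) = C\,k^{1/m}\sqrt{N\log(1/\delta_n)}$ in the heavy-tail regime (the latter uses $m \le 2$, so that $k^{1/m} \ge \sqrt{k}$ on $k \ge 1$ and the envelope dominates the Rademacher bound at every integer $k$). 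Feeding these into Proposition~\ref{prop:boundbyhan} and evaluating $\int_0^\infty \psi_n\bigl(n\mathbb{P}(|\epsilon_1|>t)\bigr)\,dt$ — using the identity $\int_0^\infty \mathbb{P}(|\epsilon_1|>t)^{1/r}\,dt = \|\epsilon_1\|_{r,1}$ — yields $n^{d/(2\alpha+d)}\sqrt{\log n}\,(1\vee\|\epsilon_1\|_{r,1})$ in the light-tail case (for the moment index $r$ dictated by the sharpest valid Lorentz norm, which one picks up to $r = 2\alpha+1$) and $n^{1/m}\sqrt{\log n}\,(1\vee\|\epsilon_1\|_{m,1})$ in the heavy-tail case. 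Matching exponents after substituting $N = n^{d/(2\alpha+d)}$ and $\delta_n$ reproduces \eqref{eq:multiplierprocess}.

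The main obstacle is constructing the concave majorant $\psi_n$: the additive $N\log(1/\delta_n)$ term in the localized Rademacher bound is \emph{not} of the form $\sqrt{k}\cdot\text{const}$, so it cannot be absorbed into a pure $\sqrt{k}$-envelope without inflating the multiplier bound. The $\min(k,1)$ truncation (or a similar bounded concave piece) handles this contribution, since after passing through the integral $\int \psi_n(n\mathbb{P}(|\epsilon_1|>t))\,dt$ it only contributes a lower-order term via the truncated tail moment. Beyond this, the proof is careful bookkeeping: verifying $\psi_n$ is concave with $\psi_n(0)=0$, verifying it dominates the Rademacher sup at every $k\in[1,n]$, and matching the resulting exponents in $n$ with those asserted in the statement at the transition $m = 2\alpha/d + 1$.
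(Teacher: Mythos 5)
The statement you were assigned is Proposition~\ref{prop:boundbyhan} itself --- the multiplier inequality of \citet{han2019convergence} --- which the paper does not prove at all: it is imported verbatim as Theorem~1 of that reference. Your proposal never proves this inequality. Instead, it invokes Proposition~\ref{prop:boundbyhan} as a black box (``transferring to the multiplier process via Han--Wellner's inequality'') and sketches a proof of Lemma~\ref{lemma:multiplierprocess}, the downstream application to the localized class $\mathcal{F}_N$. As an argument for the assigned statement this is circular: you cannot establish the multiplier inequality by applying it. A genuine proof would have to reproduce the mechanism of \cite{han2019convergence}: write $\epsilon_i=\eta_i|\epsilon_i|$ with $\eta_i$ the sign, use the layer-cake identity $|\epsilon_i|=\int_0^\infty \mathbb{1}\{|\epsilon_i|>t\}\,dt$ to express $\sum_i\epsilon_i f(X_i)$ as an integral over $t$ of sign-randomized sums restricted to the random subsample $\{i:|\epsilon_i|>t\}$, bound the conditional supremum over a subsample of size $k$ by $\psi_n(k)$ using the nesting $\mathcal{G}_k\supset\mathcal{G}_n$ and exchangeability, and then apply Jensen's inequality with the concavity of $\psi_n$ to replace the random cardinality $\sum_i\mathbb{1}\{|\epsilon_i|>t\}$ by its mean $\sum_i\mathbb{P}(|\epsilon_i|>t)$; the factor $4$ comes from the (de)symmetrization step. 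None of this appears in your write-up, so there is no proof of the stated proposition.

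For what it is worth, as a sketch of Lemma~\ref{lemma:multiplierprocess} your outline follows essentially the same route as the paper (entropy bound from Lemma~\ref{lemma:entropybound}, Rademacher bound from Proposition~\ref{prop:entropybound}, then Proposition~\ref{prop:boundbyhan}), and your concern about the additive $J^2/(\sqrt{n}\delta^2)$ term not fitting a pure $\sqrt{k}$-shaped concave majorant is a real bookkeeping subtlety that the paper's own proof passes over silently. But that does not change the verdict on the assigned statement: the proposal proves the wrong result and assumes the one it was supposed to prove.
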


With these two results in hand, we are now ready to prove Lemma~\ref{lemma:multiplierprocess}.
\begin{proof}[Proof of Lemma~\ref{lemma:multiplierprocess}]
We need to show the result for both $f^{*} = f_N$ and $f^{*} = f_{\rho}$. We will explicitly show the result for $f^* = f_N$: The proof in the case $f^* = f_{\rho}$ is exactly the same. 

Denote 
\begin{equation}
    \mathcal{F}_N(\delta_k):=\{f\in\mathcal{F}_N\ |\ \|f-f_N\|_2^2\leq \delta_k^2\}
\end{equation}
We first combine Proposition~\ref{prop:entropybound} with the entropy bound we established in Lemma~\ref{lemma:entropybound} to derive
\begin{equation}
\label{eq:general}
   \mathbb{E}\sup_{f\in \mathcal{F}_N(\delta_k)}\left |\sum_{i=1}^{k} e_{i} f\left(X_{i}\right)\right |  \leq C\delta_k k^{\frac{d}{2(2\alpha+d)}+\frac{1}{2}} \sqrt{\log k}
\end{equation}
where $\delta_k = k^{-\frac{\alpha}{2\alpha+d}}\vee k^{-\frac{1}{2}+\frac{1}{2m}}$. 

When $m\geq 2\alpha/d+1$ (recall $m$ is the moment index for $\epsilon_i$'s), $k^{-\frac{\alpha}{2\alpha + d}} > k^{-\frac{1}{2}+\frac{1}{2m}}$, so the above bound becomes
\begin{equation}
\label{eq:case1}
   \mathbb{E}\sup_{f\in \mathcal{F}_N(\delta_k)}\left |\sum_{i=1}^{k} e_{i} f\left(X_{i}\right)\right |  \leq C k^{\frac{d}{2\alpha+d}}\sqrt{\log k}
\end{equation}
Using \eqref{eq:case1} we see that the conditions of Proposition~\ref{prop:boundbyhan} are satisfied, thus giving us
\begin{equation}
\begin{aligned}
\mathbb{E}\sup_{f\in \mathcal{F}_N(\delta_k)}\left |\sum_{i=1}^{n} \epsilon_{i} f\left(X_{i}\right)\right | 
&\leq C\int_{0}^{\infty} \left(\sum_{i=1}^{n} \mathbb{P}\left(\left|\epsilon_{i}\right|>t\right)\right)^{\frac{d}{2\alpha+d}}\sqrt{\log\left(\sum_{i=1}^{n} \mathbb{P}\left(\left|\epsilon_{i}\right|>t\right)\right)} \mathrm{d} t\\
&= C n^{\frac{d}{2\alpha+d}}\sqrt{\log n}(1\vee \|\epsilon_1\|_{2\alpha+1,1})
\end{aligned}
\end{equation}
Note that we used $\epsilon_i$'s are i.i.d. random variables.

When $1<m<2\alpha/d + 1$, \eqref{eq:general} becomes
\begin{equation}
   \mathbb{E}\sup_{f\in \mathcal{F}_N(\delta_k)}\left |\sum_{i=1}^{k} e_{i} f\left(X_{i}\right)\right |  \leq C k^{\frac{1}{m}}\sqrt{\log k}.
\end{equation}
Plugging this in to Proposition~\ref{prop:boundbyhan} we get
\begin{equation}
\mathbb{E}\sup_{f\in \mathcal{F}_N(\delta_k)}\left |\sum_{i=1}^{n} \epsilon_{i} f\left(X_{i}\right)\right | 
\leq C n^{\frac{1}{m}}\sqrt{\log n}(1\vee \|\epsilon_1\|_{m,1})
\end{equation}
This compeltes the proof.
\end{proof}
\section{Online Projection Estimator and Functional Stochastic Gradient Descent}
\label{ComparisonMnSGD}
The computational expense of \textbf{Algorithm 2} is a dramatic improvement compared with SGD based algorithms, whose expense is $O(n)$ per updating. We also note that the computational expense of \textbf{Algorithm 2} depends on our assumption of the spectrum of operator $T_K$. The larger $\alpha$ is, the stronger our statistical assumption is, the faster our algorithm is. However, the expense of SGD-based algorithm is not sensitive to the statistical assumptions.\\\\
In this section we use the same notation as in Section 3 in the main text. We define $\hat{\pmb{\theta}}_{N,n}$ as the minimizer of the empirical loss
\begin{equation}\label{NPLS}
\min_{\pmb{\theta}\in\mathbb{R}^{N}} \sum_{i=1}^n (Y_i - \pmb{\theta}^{\top} \pmb{\psi}^N(X_i))^2
\end{equation}
Here we use double subscript to emphasize that $\hat{\pmb{\theta}}_{N,n}$ is calculated with $N$ basis function and $n$ data. Similarly, we can define $\hat{\pmb{\theta}}_{N,n-1}$ as the minimizer when there is one less sample $(X_n,Y_n)$ (but keep the other samples the same). There is actually a recursive relationship between $\hat{\pmb{\theta}}_{N,n}$ and $\hat{\pmb{\theta}}_{N,n-1}$:
\begin{equation}
\label{recursiveforOPE}
\hat{\pmb{\theta}}_{N,n}=\hat{\pmb{\theta}}_{N,n-1}+\Phi_{n} \pmb{\psi}_{n}\left[Y_{n}-\hat f_{n-1,N}(X_{n})\right]
\end{equation}
See \cite{ljung1983theory} p.18-20 for the derivation. This formula tells us how $\hat{\pmb{\theta}}_{N,n}$ changes when one additional data-pair is observed. If we see $\hat{\pmb{\theta}}_{N,n}$ as an update of $\hat{\pmb{\theta}}_{N,n-1}$ with $(X_n,Y_n)$, the step size will scale in proportion to the prediction error $|Y_n - \hat f_{n-1,N}(X_n)|$, and the direction is $\Phi_n\pmb{\psi}_{n}$ (which, in general, is not equal to $\pmb{\psi}_{n}$)\\\\
Similarly, we can derive a recursive relationship for how $\hat{\pmb\theta}_{N,n}$ changes when one more basis function $\psi_{N+1}$ is added in. Specifically,
\begin{equation}
\label{recursive}
 \hat{\pmb\theta}_{N+1,n}= \left[\begin{array}{c}{\hat{\pmb{\theta}}_{N, n}} \\ {0}\end{array}\right] + \frac{\left(\pmb{\psi}^{N+1}\right)^{\top} \pmb{\Delta}_n}{\left\|\left(I-P_{n}\right) \pmb{\psi}^{N+1}\right\|^{2}}
 \left[\begin{array}{c}{-P_n \pmb{\psi}^{N+1}} \\ {1}\end{array}\right] 
\end{equation}
Where $\pmb{\Delta}_n$ is the residual vector, whose i-th component is defined by:
\begin{equation}
    \pmb{\Delta}_n^{(i)} = Y_i - \hat f_{n,N}(X_i)
\end{equation}
and $P_n=(\Psi_n^{\top}\Psi_n)^{-1}\Psi_n^{\top}$ is the projection matrix of the column space of design matrix $\Psi_{n}$ with $N$ features. We give the derivation in the later part of this section.\\\\
The influence of a new feature on the regression coefficients is quantitatively associated with how much the residual can be explained by the new feature (represented by the term $\left(\pmb{\psi}^{N+1}\right)^{\top} \pmb{\Delta}_n$) and how orthogonal the new feature is to the old features (represented by $P_n \pmb{\psi}^{N+1}$).

However, if we use parametric stochastic gradient descent to solve the problem \eqref{NPLS}, then the updating rule should be:
\begin{equation}
\label{recursiveSGD}
\hat{\pmb{\theta}}_{N,n}=\hat{\pmb{\theta}}_{N,n-1}+\epsilon_n \pmb{\psi}_{n}\left[Y_{n}-\hat f_{n-1,N}(X_{n})\right]
\end{equation}
where we usually choose $\epsilon_n \asymp\frac{1}{n}$. 

Comparing \eqref{recursiveSGD} with \eqref{recursiveforOPE}, we see that it replaces the structured matrix $\Phi_n$ with a diagonal matrix $\epsilon_n I$. By doing so it omits the information of the correlation between features, this can help to illustrate why the SGD-based estimator \eqref{recursiveSGD} usually has a larger generalization error than the empirical risk minimizer \eqref{recursiveforOPE}.
\subsection{Proof of recursive formula \eqref{recursive}}
\begin{proof}
In this proof, we use a double subscript to indicate the dimension of the matrices. By definition of OLS estimator:
\begin{align*}
\hat{\pmb{\theta}}_{N+1,n}&=\Phi_{(N+1) \times(N+1)} \cdot \Psi_{n \times(N+1)}^{\top} \cdot \pmb{Y}_{n}\\
    & =\Phi_{(N+1) \times(N+1)} \cdot\left(\sum_{i=1}^{n} Y_{i}\left[\psi_{1}\left(X_{i}\right), \ldots, \psi_{N+1}\left(X_{i}\right)\right]^{\top}\right)\\
    &=\Phi_{(N+1) \times(N+1)} \cdot\left[\begin{array}{cc}{\sum_{i=1}^{n} \pmb{\psi}_{N}\left(X_{i}\right) Y_{i}} & {} \\ {\sum_{i=1}^{n} \psi_{N+1}\left(X_{i}\right) Y_{i}}\end{array}\right]\\
    &\stackrel{(1)}{=}\Phi_{(N+1) \times(N+1)} \cdot\left[
    \begin{array}{c}
    {\Phi_{N \times N}^{-1} \cdot \hat{\pmb{\theta}}_{N,n}} \\ 
    {\sum_{i=1}^{n} \psi_{N+1}\left(X_{i}\right) Y_{i}}
    \end{array}
    \right]\\
    &\stackrel{(2)}{=}
    \left(\left[
    \begin{array}{cc}
    {\Phi_{N \times N}} & {0} \\ 
    {0} & {0}
    \end{array}
    \right]+A\right)\cdot\left[
    \begin{array}{c}
    {\Phi_{N \times N}^{-1} \cdot \hat{\pmb{\theta}}_{N,n}} \\ 
    {\sum_{i=1}^{n} \psi_{N+1}\left(X_{i}\right) Y_{i}}
    \end{array}
    \right]\\
\end{align*}
where
$$
A = \left[\begin{array}{cc}{\frac{1}{k} \Phi_{n-1}\ \pmb{b} \pmb{b}^{T}\ \Phi_{n-1}} & {-\frac{1}{k} \Phi_{n-1}\ \pmb{b}} \\ {-\frac{1}{k} \pmb{b}^{T} \ \Phi_{n-1}} & {\frac{1}{k}}\end{array}\right]
$$
$$
\pmb{b}= \Psi_{n-1}^T \pmb{\psi}_{N+1}
$$
$$
k = \pmb{\psi}_{N+1}^T\pmb{\psi}_{N+1} - \pmb{b}^T \Phi_{n-1}\ \pmb{b}
$$
In (1) we use the definition of $\hat{\pmb{\theta}}_{N,n}$ and in (2) use the block matrix inversion formula.
\begin{equation}
    \hat{\pmb{\theta}}_{N+1,n} =\left[\begin{array}{c}{\hat{\pmb{\theta}}_{N,n}} \\ {0}\end{array}\right] + \frac{1}{k} \cdot\left[\begin{array}{c}{\Phi_{N \times N} \pmb{b}\left(\pmb{b}^{T} \hat{\pmb{\theta}}_{N,n}-\sum_{i=1}^{n}\psi_{N+1}\left(X_{i}\right) Y_{i}\right)} \\ {\left(\sum_{i=1}^{n} \psi_{N+1}\left(X_{i}\right) Y_i-\pmb{b}^{\top} \hat{\pmb{\theta}}_{N,n}\right)}\end{array}\right]
\end{equation}
Note that 
\begin{equation}
    \pmb{b}^{\top} \hat{\pmb{\theta}}_{N,n} =\sum_{i=1}^{n} \psi_{N+1}\left(X_{i}\right) \sum_{j=1}^N \psi_j(X_i)\hat{\pmb{\theta}}_{N,n}^{(j)}= \sum_{i=1}^{n} \psi_{N+1}\left(X_{i}\right) \hat f_{n,N}(X_i)
\end{equation}
So
\begin{equation}
 \sum_{i=1}^{n} \psi_{N+1}\left(X_{i}\right) Y_{i}-\pmb{b}^{\top} \hat{\pmb{\theta}}_{N,n}=\sum_{i=1}^{n} \psi_{N +1}\left(X_{i}\right)\left(Y_{i}-f_{n,N}(X_i)\right)   
\end{equation}
Continuing, we see that
\begin{align*}
\hat{\pmb{\theta}}_{N+1,n}&=\left[\begin{array}{c}{\hat{\pmb{\theta}}_{N,n}} \\ {0}\end{array}\right] + \frac{\pmb{\psi}_{N+1}^{\top} \pmb{\Delta}_n}{k} \cdot\left[\begin{array}{c}{-\Phi_{N \times N} \pmb{b}} \\ {1}\end{array}\right]
\end{align*}
Now we expand $k$:
\begin{align*}
    k&=\pmb{\psi}_{N+1}^{\top} \pmb{\psi}_{N+1}-\pmb{\psi}_{N+1}^{\top} \Psi_{n\times N} \Phi_{N \times N} \Psi_{n \times N}^{\top} \pmb{\psi}_{N+1}\\
    &=\pmb{\psi}_{N+1}^{\top}\left(I-\Psi_{n \times N}\left(\Psi_{n \times N}^{\top} \Psi_{n \times N}\right)^{-1} \Psi_{n \times N}\right) \pmb{\psi}_{N+1}\\
    &=\left\|\left(I-P_{n}\right) \pmb{\psi}_{N+1}\right\|^{2}
\end{align*}
And use the definition of $b$:
\begin{equation}
    \begin{aligned}
     \hat{\pmb{\theta}}_{N+1,n}&=\left[\begin{array}{c}{\hat{\pmb{\theta}}_{N,n}} \\ {0}\end{array}\right]\\
&+\frac{\pmb{\psi}_{N+1}^{\top} \pmb{\Delta}_n}{\left\|\left(I-P_{n}\right) \pmb{\psi}_{N+1}\right\|^{2}}\left[\begin{array}{c}{-\Phi_{N \times N} \left[\begin{array}{c}{\pmb{\psi}_{1}^{\top} \pmb{\psi}_{N+1}} \\ {\vdots} \\ {\pmb{\psi}_{N}^{\top} \pmb{\psi}_{N+1}}\end{array}\right]} \\ {1}\end{array}\right] \\
& = \left[\begin{array}{c}{\hat{\pmb{\theta}}_{N,n}} \\ {0}\end{array}\right] + \frac{\pmb{\psi}_{N+1}^{\top} \pmb{\Delta}_n }{\left\|\left(I-P_{n}\right) \pmb{\psi}_{N+1}\right\|^{2}} \left[\begin{array}{c}{-P_n \pmb{\psi}_{N+1}} \\ {1}\end{array}\right]
    \end{aligned}
\end{equation}
\end{proof}

\section{Regression in Additive Models}
In the main text we discussed estimation in multivariate RKHS and how it suffers from the curse of dimensionality. For $X_i\in\mathbb{R}^d$, it is also quite common to impose an extra additive structure on the model, in other words, we assume
\begin{equation}
    f_{\rho}(x_i) = \sum_{k=1}^d f_{\rho,k}\left(x^{(k)}_i\right)
\end{equation}
where the component functions $f_{\rho,i}$ belong to a RKHS $\mathcal{H}$ (in general they can belong to different spaces), and $x^{(k)}_i$ is the k-th entry of $x_i$. Such a model is a generalization of the multivariate linear model. It balances modeling flexibility with tractability of estimation. See eg. \citet{hastie2009elements} and \citet{yuan2016minimax} for further discussion.

The projection estimator for an additive model is obtained by solving the following least-squares problem in Euclidean space (which is essentially the same as solving the problem~\eqref{NPLS}).
%\begin{equation}
%    \min_{\{\pmb{\theta}_1,...,\pmb{\theta}_d\}\subset\mathbb{R}^N}\sum_{i=1}^n (Y_i - \sum_{k=1}^d\sum_{j=1}^N\theta_k^{(j)} \psi_j(X_i^{(k)}))^2
%\end{equation}
\begin{equation}\label{eq:add}
    \min_{\pmb{\theta}\in\mathbb{R}^{N\times d}}\sum_{i=1}^n (Y_i - \sum_{k=1}^d\sum_{j=1}^N\theta_{jk} \psi_j(x_i^{(k)}))^2
\end{equation}
here $N$ still needs to be chosen of order $n^{\frac{1}{2\alpha+1}}$, when $\lambda_j = \Theta(j^{-2\alpha})$. The online projection estimator in an additive model is
\begin{equation}
    \hat f_{n,N} = \sum_{k=1}^d \sum_{j=1}^N \hat\theta_{jk}\psi_j
\end{equation}
For a fixed $d$, the minimax rate for estimating an additive model is identical (losing a constant $d$) to the minimax rate in the analogous one-dimension nonparametric regression problem working with the same hypothesis space $\mathcal{H}$ \citep{raskutti2009lower}.

The design matrix of \eqref{eq:add} now is of dimension $n\times (Nd)$. When a new data point is collected, our design matrix grows by one row. When we need to increase the model capacity however, we need to add one feature for each dimension (in total $d$ columns). Updating such estimators when $X_i\in \mathbb{R}^d$ has a computational expense of order $O(d^2n^{\frac{2}{2\alpha+1}})$, by a argument similar to that presented in Section~3.4. To clarify, in Section~3.4 we are assuming the eigenvalue $\lambda_j = \Theta(j^{-2\alpha/d})$ (for example, the RKHS is $d$-dimension, $\alpha$-th order Sobolev space); however in this section we are discussing $d$-dimension additive model, \emph{each component} lies in a 1-dimension RKHS whose $\lambda_j = \Theta(j^{-2\alpha})$. The additive model is more restrictive, therefore we have better statistical and computational guarantee when the model is well-specified.

\subsection{Additive Model Application}
We chose a 10-dimension additive function to illustrate the efficacy of our method for fitting additive models. In this example, the components of the $f_{\rho}$ in each dimension are Doppler-like functions. For $x\in\mathbb{R}^{10}$,
\begin{equation}
   \begin{aligned}
    f_{\rho}(x) =&\sum_{k=1}^{10}f_{\rho,k}(x^{(k)})\\ =&\sum_{k=1}^{10}\left\{ \sin\left(\frac{2\pi}{(x^{(k)}+0.1)^{k/20}}\right) 
    - \sin\left(\frac{2\pi}{0.1^{k/20}}\right)\right\}
\end{aligned} 
\end{equation}

Similar functions are used in~\citet{sadhanala2019additive}. The kernel (for each dimension) we consider is 
\begin{equation}
    K(s,t) = \sum_{m=1}^2 s^mt^m+B_4(\{s-t\})
\end{equation}
In Figure~\ref{ADD}, we compare the method in this paper with the additive smoothing spline estimator calculated with back fitting using R package 'gam' \citep{gam}. Both of the methods achieve rate-optimal convergence, but we note the smoothing spline method takes dramatically more time as an offline estimator.

\begin{figure}[ht]
\vskip 0.2in
\begin{center}
\centerline{\includegraphics[width=\columnwidth]{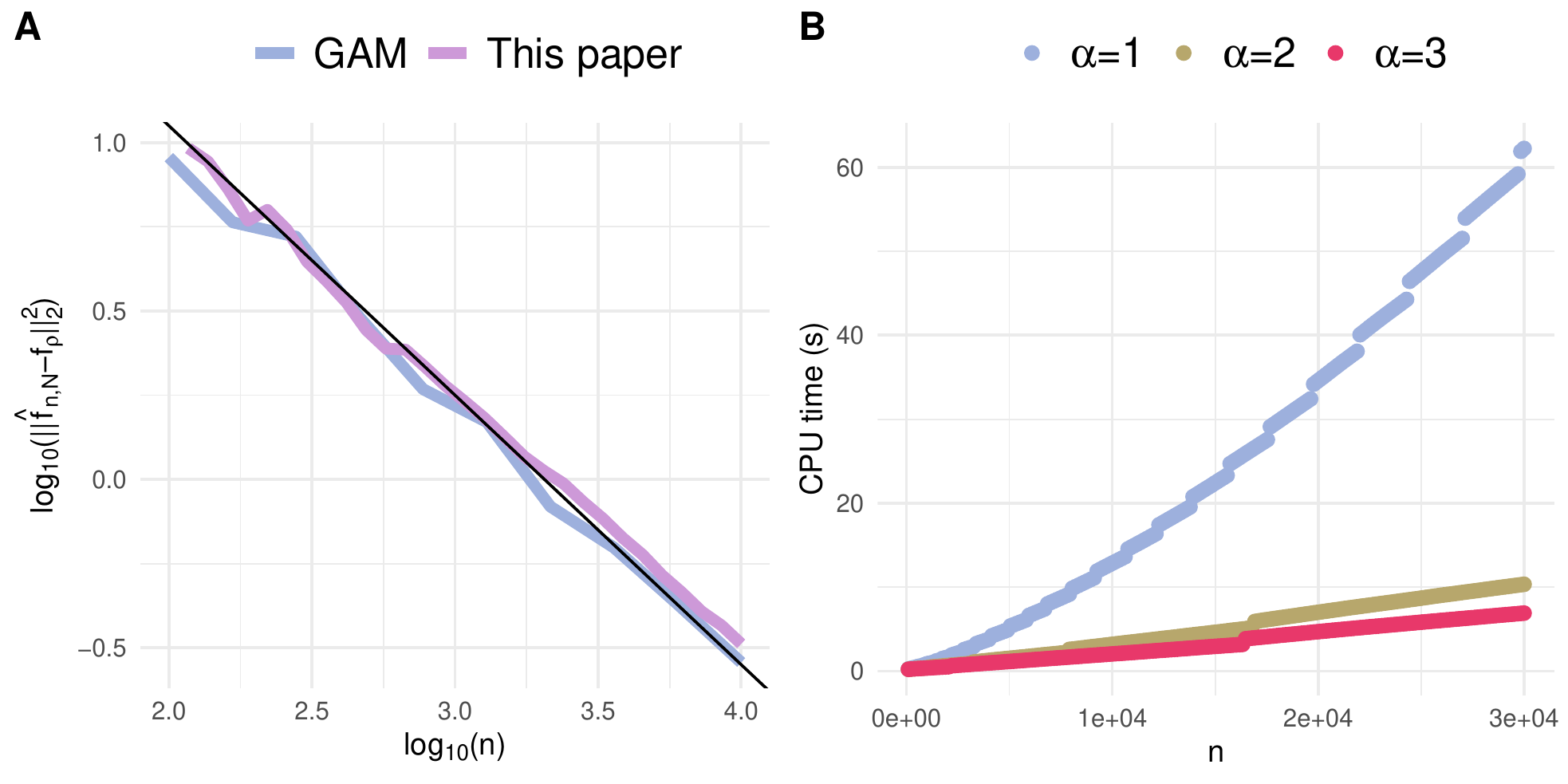}}
\caption{Additive model: generalization error and CPU time. \textbf{(A)} Both smoothing spline and online projection estimator achieve the optimal rate $O(n^{-4/5})$. The black line has slope $-4/5$. Each curve is based on 15 independent runs. \textbf{(B)} The CPU time decreases as $\alpha$ becomes larger (repetitions=10).}
\label{ADD}
\end{center}
\vskip -0.2in
\end{figure}

\section{Details of simulation studies}
In the main text we gave important details on of the settings of our simulation studies. To help our readers replicate our result, we now list all details for our simulations.
\subsection{Notation and general setting}
The $\|\hat f_{n,N}-f_{\rho}\|_2^2$ on the y-axis of Figure 2 is estimated with 1,000 $X$ generated from $\rho_X$. The estimator based on kernel ridge regression (KRR) is defined as the minimizer of penalized mean-square error
\begin{equation}
    \min _{f \in \mathcal{H}}\frac{1}{n} \sum_{i=1}^{n}\left(Y_{i}-f\left(X_{i}\right)\right)^{2}+\lambda_{n,KRR}\|f\|_{\mathcal{H}}^{2}
\end{equation}
for a closed form solution and theoretical optimal selection of $\lambda_{n,KRR}$, see 12.5.2 and Theorem 13.7 of \cite{wainwright2019high}.\\\\
In the main text, we slightly simplify the update rule for nonparametric SGD estimator without losing the essential principles. In all the simulation study of this paper, the SGD estimator we use is the version with Polyak averaging (p.1375-1376 of \cite{dieuleveut2016nonparametric})
\begin{equation}
\tilde f_{n}=\tilde f_{n-1}+\gamma_{n,SGD}\left[Y_{n}-\tilde f_{n-1}\left(X_{n}\right)\right] K_{X_{n}}
\end{equation}
\begin{equation}
\hat {f}_{n}=\frac{1}{n+1} \sum_{k=0}^{n} \tilde f_{k}
\end{equation}
The nonparametric SGD estimator we use is $\hat f_n$. To update such an estimator, the computational cost is also $O(n)$.

All the simulation study examples are coded in R version 3.5.1.
\subsection{One Dimension Example Settings}
We give the details of example 1 (resp. example 2) in Table~\ref{example 1} (resp. Table~\ref{example 2}).
\begin{table}[!htbp]
\caption{Settings of example 1. See \cite{wahba1990spline} and \cite{dieuleveut2016nonparametric}} \label{example 1}
\begin{center}
\begin{tabular}{ll}
\hline
$f_{\rho}$ & $B_4(x) = x^{4}-2 x^{3}+x^{2}-\frac{1}{30}$\\
$\epsilon$ & Unif([-0.02,0.02])\\
$p_X(x)$ &$\bm{1}_{[0,1]}(x)$\\
$K(s,t)$ & $\frac{-1}{24}B_4(\{s-t\})=\sum_{j=1}^{\infty} \frac{2}{(2 \pi j)^{4}}[\cos (2 \pi j s) \cos (2 \pi j t) $\\
&$\quad+\sin( 2 \pi j s) \sin( 2 \pi j t)]$\\
RKHS $\mathcal{H}$ & $W_2^{per}=\left\{f\in L^2([0,1])| \int_{0}^{1} f(u) d u=0, \right.$\\
& $\left.\quad\quad f(0)=f(1),f'(0)=f'(1),\int_{0}^{1} \left(f^{(2)}(u)\right)^2 d u<\infty\right\}$\\
$\lambda_j$ & $\frac{2}{(2 \pi j)^{4}} = O(j^{-4})$\\
$\psi_j(x)$ & $\sin(2\pi j x)$ and $\cos(2\pi j x)$\\
basis adding step & $n = \lfloor 0.2 N^{5}\rfloor$\\
Hyperparameter KRR $\lambda_{n,KRR}$ & $\lambda_{n,KRR}=10^{-3} n^{-4/5}$\\
Learning rate $\gamma_{n,SGD}$ & $\gamma_{n,SGD} = 128n^{-0.5}$
\end{tabular}
\hrule
\end{center}
\end{table}

\begin{table}[!htbp]
\caption{Settings of example 2. See \citet[Chap.~12]{wainwright2019high} for more discussion on the kernel space $W_1^0$.} 
\label{example 2}
\begin{center}
\hrule
\begin{tabular}{ll}
$f_{\rho}$ & $(6x-3)\sin(12x-6)+\cos^2(12x-6)$\\
$\epsilon$ & Normal(0,5)\\
$p_{\rho}(x)$&$(x+0.5)\bm{1}_{[0,1]}(x)$\\
$K(s,t)$ & $\min\{s,t\} = \sum_{j=1}^{\infty} \frac{8}{(2 j-1)^{2} \pi^{2}} \sin \left(\frac{(2 j-1) \pi s}{2}\right)\sin \left(\frac{(2 j-1) \pi t}{2}\right)$ \\
RKHS $\mathcal{H}$ & $W_1^0 = \left\{f\in L^2([0,1]) | f(0) = 0, \int_0^1(f'(u))^2 du < \infty\right\}$\\
$\lambda_j$ &$\frac{2}{(2j-1)^2\pi^2} = O(j^{-2})$ \\
$\psi_j(x)$ & $2\sin\left(\frac{(2j-1)\pi x}{2}\right)$\\
basis adding step & $n = \lfloor 0.5 N^{3}\rfloor$\\
Hyperparameter KRR $\lambda_{n,KRR}$ & $\lambda_{n,KRR}=0.1 n^{-2/3}$\\
Learning rate $\gamma_{n,SGD}$ & $\gamma_{n,SGD} = 5n^{-0.5}$\\
\end{tabular}
\hrule
\end{center}
\end{table}

\subsection{Additive Model Example}
We use the function gam() in R package $\textbf{gam}$ \cite{gam} to fit the additive model with smoothing spline. The degrees of freedom parameter used in the {\tt s()} function were selected to increase with $n$. The details for the additive model example (including parameter selection) are given in Table~\ref{additivesetting}.
\begin{table}[!htbp]
\caption{Settings of Additive model example.} 
\label{additivesetting}
\begin{center}
\hrule
\begin{tabular}{ll}
$f_{\rho}$ & $ \sum_{k=1}^{10}\left\{ \sin\left(\frac{2\pi}{(X^{(k)}+0.1)^{k/20}}\right) 
    - \sin\left(\frac{2\pi}{0.1^{k/20}}\right)\right\}$\\
$\epsilon$ & Normal(0,5)\\
$p_{\rho}(X_1,...,X_{10})$&$\Pi_{k=1}^{10}\bm{1}_{[0,1]}(X^{(k)})$\\
$K(s,t)$ (for each dimension) & $\sum_{m=1}^2 s^mt^m + B_4(\{s-t\})$ \\
RKHS $\mathcal{H}$ & $W_2 = \left\{f\in L^2([0,1])\ |\ \int_0^1(f''(u))^2du < \infty\right\}$\\
$\lambda_j$ &$\frac{2}{(2 \pi j)^{4}}=O\left(j^{-4}\right)$ \\
$\psi_j(x)$ & $x,x^2,\sin(2\pi jx),\cos(2\pi jx)$\\
basis adding step & $n = \lfloor 0.2 N^{5}\rfloor$\\
df for smoothing spline & $2\lfloor n^{1/5}\rfloor$\\
\end{tabular}
\hrule
\end{center}
\end{table}

\section{A Note for Application and Additional Examples}
The hypothesis spaces used so far in this paper have been well-studied in previous work, and are relatively easy to engage with: Their kernel functions have a closed form, and their eigenfunctions can also be explicitly written out with respect to some special measures $\bar\rho$.\\\\
However, they are usually equipped with some undesirable boundary conditions. For example, in example 2, it is more interesting to consider the space
\begin{equation}
    W_{1}=\left\{f \in L^{2}([0,1]) |  \int_{0}^{1}\left(f^{\prime}(u)\right)^{2} d u<\infty\right\}
\end{equation}
rather than the one we use in our simulation study
\begin{equation}
  W_{1}^{0}=\left\{f \in L^{2}([0,1]) | f(0)=0, \int_{0}^{1}\left(f^{\prime}(u)\right)^{2} d u<\infty\right\}  
\end{equation}
Although it is known that $W_1$ is also an RKHS \cite{wainwright2019high} with kernel $\tilde K(s,t) = 1+\min\{s,t\}$, it takes extra analytical work to get the form of eigenfunctions for $\tilde K$.\\\\
For practical purposes, it is enough to consider functions of the following form as estimator:
\begin{equation}
    \hat f_{n,N}(x) = \theta_0\cdot 1+\sum_{j=1}^N \theta_j\psi_j(x)
\end{equation}
where $\psi_j = \sqrt{2} \sin \left(\frac{(2 j-1) \pi x}{2}\right)$ as stated in Table~1. Because the difference between $W_1^0$ and $W_1$ is merely a constant function in the sense that
\begin{equation}
    W_1 = \{1\}\oplus W_1^0
\end{equation}
When a new sample comes in, we update $\hat f_{n,N}$ (and potentially add a new basis function) in an online manner as in Algorithm 2. Similarly, in example 1, the more interesting space is
\begin{equation}
    W_2 = \left\{f\in L^2([0,1])|\int_{0}^{1}\left(f^{(2)}(u)\right)^{2} d u<\infty \right\}
\end{equation}
Note that
\begin{equation}
    W_2 = \{1\}\oplus \{x\}\oplus\{x^2\}\oplus W_2^{per}
\end{equation}
So the projection estimator can be of the form
\begin{equation}
    \hat f_{n,N}(x) = \sum_{k=0}^2\tilde \theta_k x^k + \sum_{j=1}^N \theta_j \psi_j(x)
\end{equation}
where $\psi_j$'s are the trigonometric functions listed in Table~1.

\begin{figure}[!ht]
\begin{center}
\centerline{\includegraphics[width=\columnwidth]{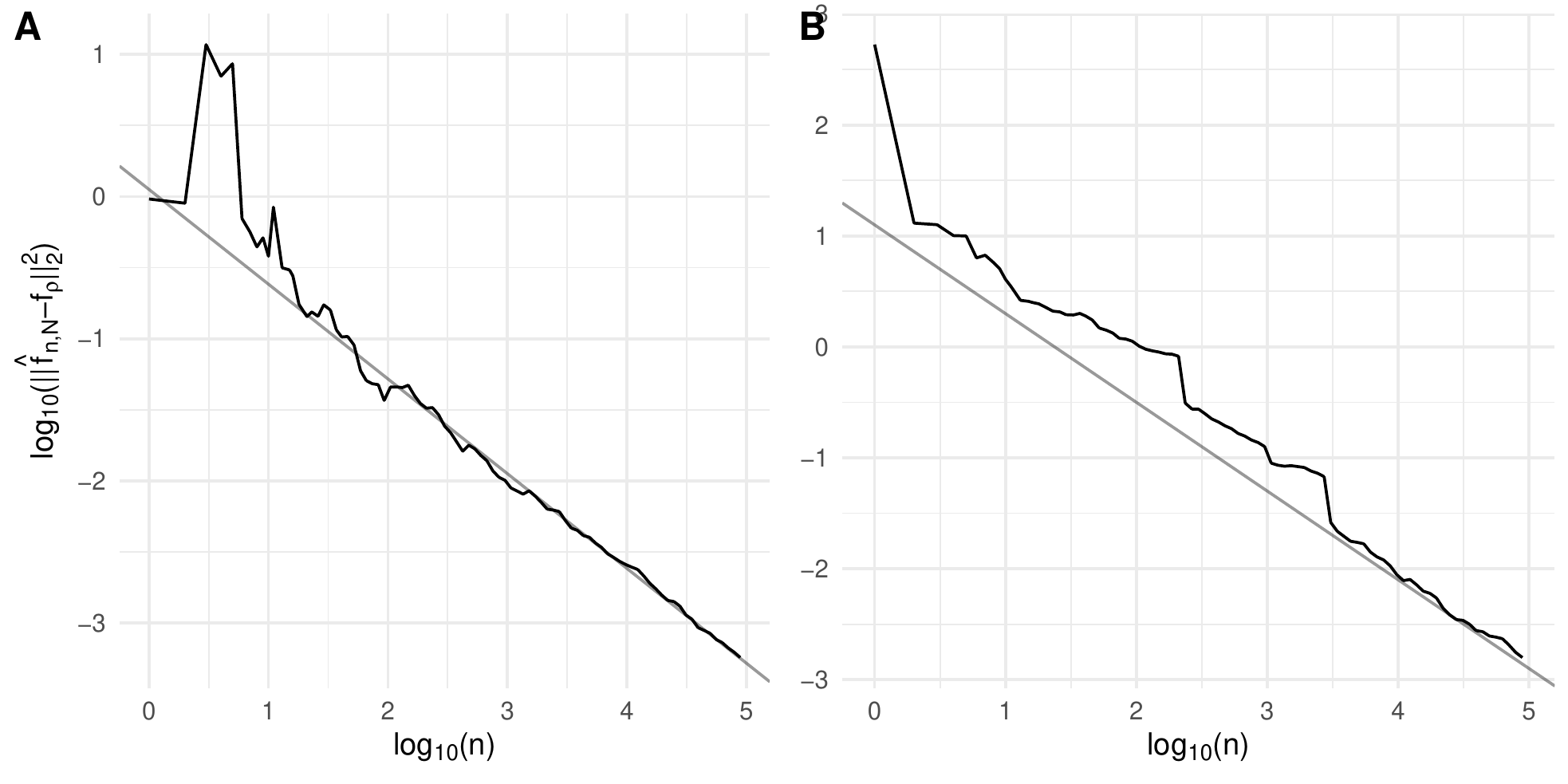}}
\caption{Generalization error for additional examples.\textbf{(A)} Example A.1, black line has slope $-2/3$ \textbf{(B)} Example A.2, the black line has slope $-4/5$. Both estimators achieve the minimax rates in $W_1$ and $W_2$. Each curve is based on 15 independent repetitions.}
\label{AdditionalSimu}
\end{center}
\vskip -0.2in
\end{figure}

The settings for our two additional examples are given in Table~\ref{additional}
\begin{table}[!htbp]
\caption{Settings of additional examples.} 
\label{additional}
\begin{center}
\begin{tabular}{lll}
\hline
&Example A.1 & Example A.2\\
\hline
$f_{\rho}$ & $1+(x-0.5)\bm{1}_{[0.5,1]}(x)$ &  $1+(6x-3)\sin(12x-6)+\cos^2(12x-6)$\\
&$\quad+2(x-0.2)\bm{1}_{[0.2,1]}(x)$& $\quad +10(x-0.5)^2\bm{1}_{[0.5,1]}(x)$\\
$\epsilon$ & Normal(0,1) & Unif(-5,5)\\
$p_{\rho}(x)$&$(x+0.5)\bm{1}_{[0,1]}(x)$&$\bm{1}_{[0,1]}(x)$\\
RKHS & $W_1$ & $W_2$\\
basis function & $1,\sin\left(\frac{(2j-1)\pi x}{2}\right),j=1,2,...$ & $1,x,x^2,\sin(2\pi jx),\cos(2\pi jx),j=1,2,...$\\
basis adding step & $n = \lfloor 0.5 N^{3}\rfloor$ & $n = \lfloor \frac{1}{30} N^{5}\rfloor$\\
\hline
\end{tabular}
\end{center}
\end{table}

\clearpage
\end{appendices}

\bibliographystyle{apalike}
\bibliography{main}

\end{document}